\newcommand{\lra}[1][{ab,cd}]{\stackrel{#1}{\longleftrightarrow}}
\renewcommand*{\@fnsymbol}[1]{\ensuremath{\ifcase#1\or *\or \ddagger\or
    \mathsection\or \mathparagraph\or \|\or **\or \dagger\dagger
    \or \ddagger\ddagger \else\@ctrerr\fi}}
\title{The Kikuchi Hierarchy and Tensor PCA}
\author[1]{Alexander S.\ Wein\thanks{Email: \textit{aswein@ucdavis.edu}. Partially supported by an Alfred P.\ Sloan Research Fellowship and NSF CAREER Award CCF-2338091. Most of this work was done at the Courant Institute of Mathematical Sciences at New York University, partially supported by NSF grant DMS-1712730 and by the Simons Collaboration on Algorithms and Geometry.}}
\author[2]{Ahmed El Alaoui\thanks{Email: \textit{elalaoui@cornell.edu}. Partially supported by NSF grant DMS-2450867. Most of this work was done at Stanford University, partially supported by NSF IIS-1741162, and ONR N00014-18-1-2729.}}
\author[3]{Cristopher Moore\thanks{Email: \textit{moore@santafe.edu}. Partially supported by NSF grant BIGDATA-1838251.}}
\affil[1]{Department of Mathematics, University of California, Davis}
\affil[2]{Department of Statistics and Data Science, Cornell University}
\affil[3]{Santa Fe Institute}
\date{}
\begin{document}

\maketitle

\begin{abstract}
For the tensor principal component analysis (tensor PCA) problem, we propose a new hierarchy of increasingly powerful algorithms with increasing runtime. Our hierarchy is analogous to the sum-of-squares (SOS) hierarchy but is instead inspired by statistical physics and related algorithms such as belief propagation and AMP (approximate message passing). Our level-$\ell$ algorithm can be thought of as a linearized message-passing algorithm that keeps track of $\ell$-wise dependencies among the hidden variables. Specifically, our algorithms are spectral methods based on the \emph{Kikuchi Hessian}, which generalizes the well-studied Bethe Hessian to the higher-order Kikuchi free energies.

It is known that AMP, the flagship algorithm of statistical physics, has substantially worse performance than SOS for tensor PCA. In this work we `redeem' the statistical physics approach by showing that our hierarchy gives a polynomial-time algorithm matching the performance of SOS. Our hierarchy also yields a continuum of subexponential-time algorithms, and we prove that these achieve the same (conjecturally optimal) tradeoff between runtime and statistical power as SOS. Our proofs are much simpler than prior work, and also apply to the related problem of refuting random $k$-XOR formulas. The results we present here apply to tensor PCA for tensors of all orders, and to $k$-XOR when $k$ is even.

Our methods suggest a new avenue for systematically obtaining optimal algorithms for Bayesian inference problems, and our results constitute a step toward unifying the statistical physics and sum-of-squares approaches to algorithm design.
\end{abstract}

\newpage

\section{Introduction}

High-dimensional Bayesian inference problems are widely studied, including planted clique \cite{jerrum-clique,alon-clique}, sparse PCA \cite{JL-sparse}, and community detection \cite{sbm1,sbm2}, just to name a few. For these types of problems, two general strategies, or \emph{meta-algorithms}, have emerged. The first is rooted in statistical physics and includes the belief propagation (BP) algorithm \cite{pearl-bp,GBP} along with variants such as approximate message passing (AMP) \cite{amp}, and related spectral methods such as linearized BP \cite{spectral-redemption,nb-spectrum}, and the Bethe Hessian \cite{bethe-hessian}. The second meta-algorithm is the \emph{sum-of-squares (SOS) hierarchy} \cite{shor-sos,par-sos,las-sos}, a hierarchy of increasingly powerful semidefinite programming relaxations to polynomial optimization problems, along with spectral methods inspired by it \cite{HSS,HSSS}. Both of these meta-algorithms are known to achieve statistically-optimal performance for many problems. Furthermore, when they fail to perform a task, this is often seen as evidence that no polynomial-time algorithm can succeed. Such reasoning takes the form of free energy barriers in statistical physics~\cite{LKZ-mmse,LKZ-pca} or SOS lower bounds (e.g., \cite{sos-clique}). Thus, we generally expect both meta-algorithms to have optimal statistical performance among all \emph{computationally efficient} algorithms.

A fundamental question is whether we can unify statistical physics and SOS, showing that the two approaches yield, or at least predict, the same performance on a large class of problems. However, one barrier to this comes from the \emph{tensor principal component analysis} (tensor PCA) problem~\cite{RM-tensor}, on which the two meta-algorithms seem to have very different performance. For an integer $p \ge 2$, in the order-$p$ tensor PCA or \emph{spiked tensor} problem we observe a $p$-fold $n \times n \times \cdots \times n$ tensor
\[\bY = \lambda \bx_*^{\otimes p} + \bG\]
where the parameter $\lambda \ge 0$ is a signal-to-noise ratio (SNR), $\bx_* \in \RR^n$ is a planted signal vector with normalization $\|\bx_*\| = \sqrt{n}$ drawn from a simple prior such as the uniform distribution on $\{\pm 1\}^n$, and $\bG$ is a symmetric noise tensor with $\sN(0,1)$ entries. Information-theoretically, it is possible to recover $\bx_*$ given $\bY$ (in the limit $n \to \infty$, with $p$ fixed) when $\lambda \gg n^{(1-p)/2}$ \cite{RM-tensor,tensor-stat}. (Here we ignore log factors, so $A \gg B$ can be understood to mean $A \ge B\, \mathrm{polylog}(n)$.) However, this information-theoretic threshold corresponds to exhaustive search. We would also like to understand the \emph{computational} threshold, i.e., for what values of $\lambda$ there is an efficient algorithm.

The sum-of-squares hierarchy gives a polynomial-time algorithm to recover $\bx_*$ when $\lambda \gg n^{-p/4}$ \cite{HSS}, and SOS lower bounds suggest that no polynomial-time algorithm can do better \cite{HSS,sos-hidden}. However, AMP, the flagship algorithm of statistical physics, is suboptimal for $p \ge 3$ and fails unless $\lambda \gg n^{-1/2}$ \cite{RM-tensor}. Various other ``local'' algorithms such as the tensor power method, Langevin dynamics, and gradient descent also fail below this ``local'' threshold $\lambda \sim n^{-1/2}$ \cite{RM-tensor,alg-thresh}. This casts serious doubts on the optimality of the statistical physics approach.

In this paper we resolve this discrepancy and ``redeem'' the statistical physics approach. The \emph{Bethe free energy} associated with AMP is merely the first level of a hierarchy of \emph{Kikuchi free energies} \cite{kikuchi1,kikuchi2,GBP}. From these Kikuchi free energies, we derive a hierarchy of increasingly powerful algorithms for tensor PCA, similar in spirit to generalized belief propagation \cite{GBP}. Roughly speaking, our level-$\ell$ algorithm can be thought of as an iterative message-passing algorithm that reasons about $\ell$-wise dependencies among the hidden variables. As a result, it has time and space complexity $n^{O(\ell)}$. Specifically, the level-$\ell$ algorithm is a spectral method on a $n^{O(\ell)} \times n^{O(\ell)}$ submatrix of (a first-order approximation of) the \emph{Kikuchi Hessian}, i.e., the matrix of second derivatives of the Kikuchi free energy. This generalizes the \emph{Bethe Hessian} spectral method, which has been successful in the setting of community detection~\cite{bethe-hessian}. We note that the Ph.D.\ dissertation of Saade~\cite{saade-thesis} proposed the Kikuchi Hessian as a direction for future research.

For order-$p$ tensor PCA with $p$ even, we show that level $\ell=p/2$ of the Kikuchi hierarchy gives an algorithm that succeeds down to the SOS threshold $\lambda \sim n^{-p/4}$, closing the gap between SOS and statistical physics. Furthermore, by taking $\ell = n^\delta$ levels for various values of $\delta \in (0,1)$, we obtain a continuum of subexponential-time algorithms that achieve a precise tradeoff between runtime and the signal-to-noise ratio---exactly the same tradeoff curve that SOS is known to achieve~\cite{mult-approx,cert-tensor}.\footnote{The strongest SOS results only apply to a variant of the spiked tensor model with Rademacher observations, but we do not expect this difference to be important; see Section~\ref{sec:subexp}.} We obtain similar results when $p$ is odd, by combining a matrix related to the Kikuchi Hessian with a construction similar to~\cite{coja-sat}; see Section~\ref{sec:cert-odd}.

Our approach also applies to the problem of refuting random $k$-XOR formulas when $k$ is even, showing that we can strongly refute random formulas with $n$ variables and $m \gg n^{k/2}$ clauses in polynomial time, and with a continuum of subexponential-time algorithms that succeed at lower densities. This gives a much simpler proof of the results of~\cite{strongly-refuting}, using only the matrix Chernoff bound instead of intensive moment calculations; see Section~\ref{sec:xor}.

Our results redeem the statistical physics approach to algorithm design and give hope that the Kikuchi hierarchy provides a systematic way to derive optimal algorithms for a large class of Bayesian inference problems. We see this as a step toward unifying the statistical physics and SOS approaches. Indeed, we propose the following informal meta-conjecture: for high-dimensional inference problems with planted solutions (and related problems such as refuting random constraint satisfaction problems) the SOS hierarchy and the Kikuchi hierarchy both achieve the optimal tradeoff between runtime and statistical power.

\paragraph{Concurrent and subsequent work.}

After the initial appearance of this paper, some related independent work has appeared. A hierarchy of algorithms similar to ours is proposed by~\cite{hastings-quantum}, but with a different motivation based on a system of quantum particles. For the problem of refuting random $k$-XOR formulas with $k$ even, the work of~\cite{ahn-simpler} gives, like us, a simplification of~\cite{strongly-refuting}, but using a different approach. Also, \cite{replicated} gives an alternative ``redemption'' of local algorithms for tensor PCA, based on ``averaged gradient descent.''

Furthermore, the conference version of the present paper~\cite{focs-version} led to a number of significant follow-up works. The ``Kikuchi matrices'' that we introduced have played a key role in state-of-the-art results on CSP refutation (in both random and smoothed settings)~\cite{smoothed-random}, locally decodable/correctable codes~\cite{ldc,lcc-1,lcc-2}, as well as the resolution, up to a logarithmic factor, of Feige's conjecture~\cite{feige-conj} on the hypergraph Moore bound in extremal combinatorics~\cite{smoothed-random,simpler-moore}. Notably, for $k$-XOR refutation, the work of~\cite{smoothed-random} generalized our approach from even $k$ to odd $k$, which required a more involved algorithm and analysis than we initially expected. Some of the above results use the Kikuchi matrices in conjunction with SOS, as these matrices constitute spectral SOS certificates for e.g.\ CSP refutation.

\section{Preliminaries and Prior Work}

\subsection{Notation}

Our asymptotic notation (e.g., $O(\cdot), o(\cdot), \Omega(\cdot),\omega(\cdot)$) pertains to the limit $n \to \infty$ (large dimension) and may hide constants depending on $p$ (tensor order), which we think of as fixed. We say an event occurs \emph{with high probability} if it occurs with probability $1-o(1)$.

A \emph{tensor} $\bT \in (\RR^n)^{\otimes p}$ is an $n \times n \times \cdots \times n$ ($p$ times) multi-array with entries denoted by $\bT_{i_1,\ldots,i_p}$, where $i_k \in [n] \defeq \{1,2,\ldots,n\}$. We call $p$ the \emph{order} of $\bT$ and $n$ the \emph{dimension} of $\bT$. For a vector $\bu \in \RR^n$, the rank-1 tensor $\bu^{\otimes p}$ is defined by $(\bu^{\otimes p})_{i_1,\ldots,i_p} = \prod_{k=1}^p \bu_{i_k}$. A tensor $\bT$ is \emph{symmetric} if $\bT_{i_1,\ldots,i_p} = \bT_{i_{\pi(1)},\ldots,i_{\pi(p)}}$ for any permutation $\pi \in S_p$. For a symmetric tensor, if $E=\{i_1,\ldots,i_p\} \subseteq [n]$, we will often write $\bT_E \defeq \bT_{i_1,\ldots,i_p}$.

\subsection{The Spiked Tensor Model}

A general formulation of the spiked tensor model is as follows. For an integer $p \ge 2$, let $\widetilde{\bG} \in (\RR^n)^{\otimes p}$ be an order-$p$ tensor with entries sampled i.i.d.\ from $\sN(0,1)$. We then symmetrize this tensor and obtain a symmetric tensor $\bG$, 
\[\bG := \frac{1}{\sqrt{p!}}\sum_{\pi \in S_p} \widetilde{\bG}^\pi,\]
where $S_p$ is the symmetric group of permutations of $[p]$, and $\widetilde{\bG}^\pi_{i_1,\ldots,i_p} := \widetilde{\bG}_{i_{\pi(1)},\ldots,i_{\pi(p)}}$. Note that if $i_1,\ldots,i_p$ are distinct then $\bG_{i_1,\ldots,i_p} \sim \sN(0,1)$. We draw a `signal' vector or `spike' $\bx_* \in \RR^n$ from a prior distribution $P_{\tx}$ supported on the sphere $\mathcal{S}^{n-1}(\sqrt{n}) = \{\bx \in \RR^n \,:\, \|\bx\| = \sqrt{n}\}$. Then we let $\bY \in (\RR^n)^{\otimes p}$ be the tensor 
\begin{equation}\label{spiked_tensor}
\bY = \lambda \bx_*^{\otimes p} + \bG \, .
\end{equation}
We will mostly focus on the \emph{Rademacher-spiked model} where $\bx_*$ is uniform in $\{ \pm 1 \}^n$, i.e.,  
$P_{\tx} = 2^{-n} \prod_i  \big( \delta_{-1}(x_i)+\delta_{+1}(x_i) \big)$.
We will sometimes state results without specifying the prior $P_{\tx}$, in which case the result holds for \emph{any} prior normalized so that $\|\bx_*\| = \sqrt{n}$. Let $\PP_\lambda$ denote the law of the tensor $\bY$.  The parameter $\lambda = \lambda(n)$ may depend on $n$. We will consider the limit $n \to \infty$ with $p$ held fixed.

Our algorithms will depend only on the entries $\bY_{i_1,\ldots,i_p}$ where the indices $i_1,\ldots,i_p$ are distinct: that is, on the collection
\[
\left\{ \bY_E = \lambda \bx_*^E + \bG_E \,:\, E \subseteq [n], |E|=p \right\} \, , 
\]
where $\bG_E \sim \mathcal{N}(0,1)$ and for a vector $\bx \in \RR^{n}$ we write $\bx^E = \prod_{i \in E} \bx_i$.

Perhaps one of the simplest statistical tasks is binary hypothesis testing. In our case this amounts to, given a tensor $\bY$ as input with the promise that it was sampled from $\PP_{\lambda}$ with $\lambda \in \{0,\overline{\lambda}\}$, determining whether $\lambda = 0$ or $\lambda = \overline{\lambda}$. We refer to $\PP_{\lambda}$ for $\lambda>0$ as the \emph{planted} distribution, and $\PP_{0}$ as the \emph{null} distribution.  

\begin{definition}
We say that an algorithm (or test) $f: (\RR^n)^{\otimes p} \to \{0,1\}$ achieves \emph{strong detection} between $\PP_{0}$ and $\PP_{\lambda}$ if
\[ \lim_{n \to \infty} \PP_{\lambda}(f(\bY) = 1) = 1 \qquad \text{and} \qquad \lim_{n \to \infty} \PP_{0}(f(\bY) = 0) = 1.\]
Additionally we say that $f$ achieves \emph{weak detection} between $\PP_{0}$ and $\PP_{\lambda}$ if the sum of Type-I and Type-II errors remains strictly below 1:
\[\limsup_{n \to \infty} \big\{\PP_{0}(f(\bY) = 1) + \PP_{\lambda}(f(\bY) = 0)\big\} < 1.\]
\end{definition}

An additional goal is to recover the planted vector $\bx_*$. Note that when $p$ is even, $\bx_*$ and $-\bx_*$ have the same posterior probability. Thus, our goal is to recover $\bx_*$ up to a sign.

\begin{definition}
The \emph{normalized correlation} between vectors $\hat\bx, \bx \in \RR^n$ is
$$\corr(\hat\bx,\bx) = \frac{|\langle \hat \bx, \bx \rangle|}{\|\hat \bx\| \|\bx\|}.$$
\end{definition}

\begin{definition}
An estimator $\hat \bx = \hat \bx(\bY)$ achieves \emph{weak recovery} if $\corr(\hat \bx,\bx_*)$ is lower-bounded by a strictly positive constant---and we write $\corr(\hat \bx,\bx_*) = \Omega(1)$---with high probability, and achieves \emph{strong recovery} if $\corr(\hat \bx,\bx_*) = 1-o(1)$ with high probability.
\end{definition}

\noindent We expect that strong detection and weak recovery are generally equally difficult, although formal implications are not known in either direction. We will see in Section~\ref{sec:boost} that in some regimes, weak recovery and strong recovery are equivalent.

\paragraph{The matrix case.}

When $p = 2$, the spiked tensor model reduces to the \emph{spiked Wigner model}. We know from random matrix theory that when $\lambda = \hat\lambda/\sqrt{n}$ with $\hat\lambda > 1$, strong detection is possible by thresholding the maximum eigenvalue of $\bY$, and weak recovery is achieved by PCA, i.e., taking the leading eigenvector~\cite{FP,pca-eigenvec}. For many spike priors including Rademacher, strong detection and weak recovery are statistically impossible when $\hat\lambda < 1$~\cite{MRZ-sphere,DAM,PWBM-contig}, so there is a sharp phase transition at $\hat\lambda = 1$. 
(Note that weak detection is still possible below $\hat\lambda = 1$~\cite{alaoui2018fundamental}.) A more sophisticated algorithm is AMP (approximate message passing) \cite{amp,FR-amp,LKZ-mmse,DAM}, which can be thought of as a modification of the matrix power method which uses certain nonlinear transformations to exploit the structure of the spike prior. For many spike priors including Rademacher, AMP is known to achieve the information-theoretically optimal correlation with the true spike \cite{DAM,replica-proof}. However, like PCA, AMP achieves zero correlation asymptotically when $\hat\lambda < 1$. For certain spike priors (e.g., sparse priors), statistical-computational gaps can appear in which it is information-theoretically possible to succeed for some $\hat\lambda < 1$ but we do not expect that polynomial-time algorithms can do so \cite{LKZ-mmse,LKZ-pca,BMVVX}.

\paragraph{The tensor case.}

The tensor case $p \ge 3$ was introduced by \cite{RM-tensor}, who also proposed various algorithms. Information-theoretically, there is a sharp phase transition similar to the matrix case: for many spike priors $P_{\tx}$ including Rademacher, if $\lambda = \hat\lambda n^{(1-p)/2}$ then weak recovery is possible when $\hat\lambda > \lambda_c$ and impossible when $\hat\lambda < \lambda_c$ for a particular constant $\lambda_c = \lambda_c(p,P_{\tx})$ depending on $p$ and $P_{\tx}$~\cite{tensor-stat}. 
Strong detection undergoes the same transition, being possible if $\lambda > \lambda_c$ and impossible otherwise~\cite{chen2017phase,chen2018phase,stat-thresh} (see also \cite{PWB-tensor}). In fact, it is shown in these works that even weak detection is impossible below $\lambda_c$, in sharp contrast with the matrix case.
There are polynomial-time algorithms (e.g., SOS) that succeed at both strong detection and strong recovery when $\lambda \gg n^{-p/4}$ for any spike prior \cite{RM-tensor,HSS,HSSS}, which is above the information-theoretic threshold by a factor that diverges with $n$. There are also SOS lower bounds suggesting that (for many priors) no polynomial-time algorithm can succeed at strong detection or weak recovery when $\lambda \ll n^{-p/4}$ \cite{HSS,sos-hidden}. Thus, unlike the matrix case, we expect a large statistical-computational gap, i.e., a ``possible-but-hard'' regime when $n^{(1-p)/2} \ll \lambda \ll n^{-p/4}$. Thresholds for tensor PCA are depicted in Figure~\ref{fig:thresholds} and discussed further in the following subsections.

\begin{figure}
    \centering
    \begin{tikzpicture}[xscale=2]
        \def \d{.2}; 
        \def \y{.6}; 
        \draw[ultra thick, -Stealth] (0,0) -- (4,0);
        \draw[ultra thick] (1,-1*\d) -- (1,\d);
        \draw[ultra thick] (2,-1*\d) -- (2,\d);
        \draw[ultra thick] (3,-1*\d) -- (3,\d);
        \node at (1,-1*\y) {$n^{(1-p)/2}$};
        \node at (2,-1*\y) {$n^{-p/4}$};
        \node at (3,-1*\y) {$n^{-1/2}$};
        \node at (4,-1*\y) {$\lambda$};
        \node at (1, \y) {IT};
        \node at (2, \y) {SOS};
        \node at (3, \y) {AMP};
    \end{tikzpicture}
    \caption{Thresholds for tensor PCA: (IT) Information-theoretic threshold; (SOS) Threshold for the best known poly-time algorithms, including sum-of-squares and certain spectral methods; (AMP) Threshold for approximate message passing and other ``local'' algorithms, which are suboptimal.}
    \label{fig:thresholds}
\end{figure}

\subsection{The Tensor Power Method and Local Algorithms}
\label{sec:local}

Various algorithms have been proposed and analyzed for tensor PCA \cite{RM-tensor,HSS,HSSS,homotopy,alg-thresh}. We will present two such algorithms that are simple, representative, and will be relevant to the discussion. The first is the \emph{tensor power method} \cite{AGHKT-power,RM-tensor,AGJ-power}.

\begin{algorithm}(Tensor Power Method)
For a vector $\bu \in \RR^n$ and a tensor $\bY \in (\RR^n)^{\otimes p}$, let $\bY\{\bu\} \in \RR^n$ denote the vector with entries 
\[(\bY\{\bu\})_i = \sum_{j_1,\ldots,j_{p-1}} \bY_{i,j_1,\ldots,j_{p-1}} \bu_{j_1} \cdots \bu_{j_{p-1}}, \quad\quad i \in [n].\]
The \emph{tensor power method} begins with an initial guess $\bu \in \RR^n$ (e.g., chosen at random) and repeatedly iterates the update rule $\bu \leftarrow \bY\{\bu\}$ until $\bu/\|\bu\|$ converges.
\end{algorithm}

\noindent The tensor power method appears to only succeed when $\lambda \gg n^{-1/2}$ \cite{RM-tensor}, which is worse than the SOS threshold $\lambda \sim n^{-p/4}$. The AMP algorithm of \cite{RM-tensor} is a more sophisticated variant of the tensor power method, but AMP also fails unless $\lambda \gg n^{-1/2}$ \cite{RM-tensor}. Two other related algorithms, gradient descent and Langevin dynamics, also fail unless $\lambda \gg n^{-1/2}$ \cite{alg-thresh}. Following \cite{alg-thresh}, we refer to all of these algorithms (tensor power method, AMP, gradient descent, Langevin dynamics) as \emph{local algorithms}, and we refer to the corresponding threshold $\lambda \sim n^{-1/2}$ as the \emph{local threshold}. Here ``local'' is not a precise notion, but roughly speaking, local algorithms keep track of a current guess for $\bx_*$ and iteratively update it to a nearby vector that is more favorable in terms of e.g. the log-likelihood.
This discrepancy between local algorithms and SOS is what motivated the current work.

\subsection{The Tensor Unfolding Algorithm}
\label{sec:unfolding}

We have seen that local algorithms do not seem able to reach the SOS threshold. Let us now describe one of the simplest algorithms that does reach this threshold: \emph{tensor unfolding}. Tensor unfolding was first proposed by \cite{RM-tensor}, where it was shown to succeed when $\lambda \gg n^{-\lfloor p/2 \rfloor/2}$ and conjectured to succeed when $\lambda \gg n^{-p/4}$ (the SOS threshold). For the case $p=3$, the same algorithm was later reinterpreted as a spectral relaxation of SOS, and proven to succeed\footnote{The analysis of \cite{HSS} applies to a close variant of the spiked tensor model in which the noise tensor is asymmetric. We do not expect this difference to be important.} when $\lambda \gg n^{-3/4} = n^{-p/4}$ \cite{HSS}, confirming the conjecture of \cite{RM-tensor}. We now present the tensor unfolding method, restricting to the case $p=3$ for simplicity. There is a natural extension to all $p$ \cite{RM-tensor}, and (a close variant of) this algorithm will in fact appear as level $\ell = \lfloor p/2 \rfloor$ in our hierarchy of algorithms (see Section~\ref{sec:main-results} and Appendix~\ref{app:odd}).

\begin{algorithm}(Tensor Unfolding)
Given an order-3 tensor $\bY \in (\RR^n)^{\otimes 3}$, flatten it to an $n \times n^2$ matrix $\bM$, i.e., let $\bM_{i,jk} = \bY_{ijk}$. Compute the leading eigenvector of $\bM \bM^\top$.
\end{algorithm}

If we use the matrix power method to compute the leading eigenvector, we can restate the tensor unfolding method as an iterative algorithm: keep track of state vectors $\bu \in \RR^n$ and $\bv \in \RR^{n^2}$, initialize $\bu$ randomly, and alternate between applying the update steps $\bv \leftarrow \bM^\top \bu$ and $\bu \leftarrow \bM \bv$. We will see later (see Section~\ref{sec:intuition}) that this can be interpreted as a message-passing algorithm between singleton indices, represented by $\bu$, and \emph{pairs} of indices, represented by $\bv$. Thus, tensor unfolding is not ``local'' in the sense of Section~\ref{sec:local} because it keeps a state of size $O(n^2)$ (keeping track of \emph{pairwise} information) instead of size $O(n)$. We can, however, think of it as local on a ``lifted'' space, and this allows it to surpass the local threshold.

Other methods have also been shown to achieve the SOS threshold $\lambda \sim n^{-p/4}$, including SOS itself and various spectral methods inspired by it \cite{HSS,HSSS}.

\subsection{Boosting and Linearization}
\label{sec:boost}

One fundamental difference between the matrix case ($p=2$) and tensor case ($p \ge 3$) arise from the following boosting property. The following result, implicit in~\cite{RM-tensor}, shows that if $\lambda$ is substantially above the information-theoretic threshold (i.e., $\lambda \gg n^{(1-p)/2}$) then weak recovery can be boosted to strong recovery via a single power iteration. We give a proof in Appendix~\ref{app:boost}.
\begin{proposition}\label{prop:boost}
Let $\bY \sim \PP_{\lambda}$ with any spike prior $P_{\tx}$ supported on $\mathcal{S}^{n-1}(\sqrt{n})$. Suppose we have an initial guess $\bu \in \RR^n$ satisfying $\corr(\bu,\bx_*) \ge \tau$. Obtain $\widehat \bx$ from $\bu$ via a single iteration of the tensor power method: $\widehat \bx = \bY\{\bu\}$. There exists a constant $c = c(p)>0$ such that with high probability, 
\[\corr(\widehat \bx,\bx_*) \ge 1 - c\lambda^{-1} \tau^{1-p} n^{(1-p)/2}.\] 
In particular, if $\tau > 0$ is any constant and $\lambda = \omega(n^{(1-p)/2})$ then $\corr(\widehat \bx,\bx) = 1-o(1)$.
\end{proposition}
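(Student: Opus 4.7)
The plan is to decompose the output of one power iteration into signal plus noise and then show the noise is negligible. Writing
\[
\widehat{\bx} \;=\; \bY\{\bu\} \;=\; \lambda\,(\bx_*^{\otimes p})\{\bu\} + \bG\{\bu\},
\]
the key observation is that $(\bx_*^{\otimes p})\{\bu\} = \langle \bx_*, \bu\rangle^{p-1}\,\bx_*$, so the signal is \emph{exactly} collinear with $\bx_*$. Setting $s := \lambda\,\langle \bx_*, \bu\rangle^{p-1}$ and $\boldsymbol{\eta} := \bG\{\bu\}$, we have $\widehat{\bx} = s\bx_* + \boldsymbol{\eta}$, and projecting onto $\bx_*$ gives the elementary identity
\[
1 - \corr^2(\widehat{\bx}, \bx_*) \;=\; \frac{\|\widehat{\bx}\|^2\, n - \langle \widehat{\bx}, \bx_*\rangle^2}{\|\widehat{\bx}\|^2\, n} \;\le\; \frac{\|\boldsymbol{\eta}\|^2}{\|\widehat{\bx}\|^2},
\]
since the orthogonal part of $\widehat{\bx}$ with respect to $\bx_*$ equals the orthogonal part of $\boldsymbol{\eta}$. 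It will therefore suffice to bound the signal-to-noise ratio $\|\boldsymbol{\eta}\|/\|s\,\bx_*\|$.

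Next I would lower-bound the signal and upper-bound the noise. The correlation hypothesis gives $|\langle \bx_*, \bu\rangle| \ge \tau\,\|\bx_*\|\,\|\bu\| = \tau \sqrt{n}\,\|\bu\|$, so
\[
\|s\,\bx_*\| \;=\; |s|\,\sqrt{n} \;\ge\; \lambda\,\tau^{p-1}\,n^{p/2}\,\|\bu\|^{p-1}.
\]
The hard part will be bounding $\|\bG\{\bu\}\|$: since $\bu$ may be a function of $\bY$, and hence of $\bG$, one cannot apply a Gaussian concentration bound for a fixed $\bu$. The fix is to make the bound uniform via the injective norm of $\bG$:
\[
\|\bG\{\bu\}\| \;=\; \max_{\|v\|=1} \bG(v,\bu,\ldots,\bu) \;\le\; \|\bG\|_{\mathrm{inj}}\,\|\bu\|^{p-1}.
\]
A standard $\varepsilon$-net argument on $(\mathcal{S}^{n-1})^p$ combined with Gaussian concentration for each $p$-linear form will yield $\|\bG\|_{\mathrm{inj}} \le C(p)\sqrt{n}$ with high probability, where $C(p)$ depends only on the fixed tensor order $p$.

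Combining these bounds gives
\[
\frac{\|\boldsymbol{\eta}\|}{\|s\,\bx_*\|} \;\le\; \varepsilon \;:=\; C(p)\,\lambda^{-1}\,\tau^{1-p}\,n^{(1-p)/2}.
\]
In the regime $\varepsilon \le 1/2$, the triangle inequality gives $\|\widehat{\bx}\| \ge \|s\,\bx_*\|/2$, hence $1 - \corr^2 \le 4\varepsilon^2$, and using $1-\corr \le 1-\corr^2$ yields a bound of order $\varepsilon^2$, which is in fact stronger than the claim. In the complementary regime $\varepsilon > 1/2$, the stated bound is trivial after enlarging $c$. The only nontrivial step is the uniform-in-$\bu$ control of the noise; everything else is one-line algebra.
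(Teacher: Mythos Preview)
Your proposal is correct and takes essentially the same approach as the paper: the same signal--noise decomposition $\widehat\bx = \lambda\langle \bx_*,\bu\rangle^{p-1}\bx_* + \bG\{\bu\}$, the same control of the noise via the injective norm $\|\bG\{\bu\}\| \le \|\bG\|_{\mathrm{inj}}\|\bu\|^{p-1}$, and the same high-probability bound $\|\bG\|_{\mathrm{inj}} = O_p(\sqrt{n})$ (which the paper cites rather than proves via $\varepsilon$-nets). Your organization via $1-\corr^2 = \|P^\perp_{\bx_*}\widehat\bx\|^2/\|\widehat\bx\|^2$ is slightly cleaner and yields the stronger bound $1-\corr = O(\varepsilon^2)$ rather than the paper's $O(\varepsilon)$, but this is a cosmetic difference, not a different route.
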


\noindent For $p \ge 3$, since we do not expect polynomial-time algorithms to succeed when $\lambda = O(n^{(1-p)/2})$, this implies an ``all-or-nothing'' phenomenon: for a given $\lambda = \lambda(n)$, the optimal polynomial-time algorithm will either achieve correlation that is asymptotically 0 or asymptotically 1. This is in stark contrast to the matrix case where, for $\lambda = \hat\lambda/\sqrt{n}$, the optimal correlation is a constant (in $[0,1]$) depending on both $\hat\lambda$ and the spike prior $P_{\tx}$. This contrast is not due to Proposition~\ref{prop:boost} itself---which applies equally well to $p=2$---but due to the large statistical-computational gap present in the tensor case, as this means the regime of interest lies well above the information-theoretic threshold.

This boosting result substantially simplifies things when $p \ge 3$ because it implies that the only important question is identifying the threshold for weak recovery, instead of trying to achieve the optimal correlation. Heuristically, since we only want to attain the optimal threshold, statistical physics suggests that we can use a simple ``linearized'' spectral algorithm instead of a more sophisticated nonlinear algorithm. To illustrate this in the matrix case ($p=2$), one needs to use AMP in order to achieve optimal correlation, but one can achieve the optimal threshold using linearized AMP, which boils down to computing the top eigenvector. In the related setting of community detection in the stochastic block model, one needs to use belief propagation to achieve optimal correlation~\cite{sbm1,sbm2,MNS-bp}, but one can achieve the optimal threshold using a linearized version of belief propagation, which is a spectral method on the non-backtracking walk matrix~\cite{spectral-redemption,nb-spectrum} or the related \emph{Bethe Hessian}~\cite{bethe-hessian}. Our spectral methods for tensor PCA are based on the \emph{Kikuchi Hessian}, which is a generalization of the Bethe Hessian.

\subsection{Subexponential-time Algorithms for Tensor PCA}\label{sec:subexp}

The degree-$\ell$ sum-of-squares algorithm is a large semidefinite program that requires runtime $n^{O(\ell)}$ to solve. Oftentimes the regime of interest is when $\ell$ is constant, so that the algorithm runs in polynomial time. However, one can also explore the power of subexponential-time algorithms by letting $\ell = n^\delta$ for $\delta \in (0,1)$, which gives an algorithm with runtime roughly $2^{n^\delta}$. Results of this type are known for tensor PCA \cite{strongly-refuting,mult-approx,cert-tensor}. The strongest such results are for a different variant of tensor PCA, which we now define.

\begin{definition}
In the \emph{order-$p$ discrete spiked tensor model} with spike prior $P_{\tx}$ supported on $\mathcal{S}^{n-1}(\sqrt{n})$, and SNR parameter $\lambda \ge 0$, we draw a spike $\bx_* \sim P_{\tx}$ and then for each $1 \le i_1 \le \cdots \le i_p \le n$, we independently observe a $\{\pm 1\}$-valued random variable $\bY_{i_1,\ldots,i_p}$ with $\EE[\bY_{i_1,\ldots,i_p}] = \lambda (\bx_*^{\otimes p})_{i_1,\ldots,i_p}$.
\end{definition}
\noindent This model differs from our usual one in that the observations are conditionally Rademacher instead of Gaussian, but we do not believe this makes an important difference. However, for technical reasons, the known SOS results are strongest in this discrete setting.

\begin{theorem}[\cite{cert-tensor,HSS}]\label{SOS-guarantee}
For any $1 \le \ell \le n$, there is an algorithm with runtime $n^{O(\ell)}$ that achieves strong detection and strong recovery in the order-$p$ discrete spiked tensor model (with any spike prior) whenever
$$\lambda \ge \ell^{1/2 - p/4}n^{-p/4} \,\mathrm{polylog}(n).$$
\end{theorem}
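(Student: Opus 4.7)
My plan is to build a degree-$\ell$ spectral certificate: a large ``Kikuchi-like'' matrix $\bM$ indexed by size-$\ell$ subsets of $[n]$, whose top eigenvector encodes a lift of $\bx_*$. I will analyze the signal and noise components separately using matrix concentration for the bounded $\pm 1$ entries, and then boost weak recovery to strong recovery via Proposition~\ref{prop:boost}. Concretely, for $p$ even I would take $\bM \in \RR^{\binom{n}{\ell}\times\binom{n}{\ell}}$ with $\bM_{S,T}=\bY_{S\triangle T}$ whenever $|S\triangle T|=p$ (and $0$ otherwise); for $p$ odd, one appends an auxiliary random sign as in Appendix~\ref{app:cert-odd}. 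This lift has two decisive features: the planted part $\lambda\,\bx_*^{\otimes p}$ contributes a rank-one matrix with eigenvector $v_S \defeq \bx_*^S$, and the noise entries remain bounded $\pm 1$ variables, which makes matrix Bernstein directly applicable.

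The signal analysis is pure counting: the number of ordered pairs $(S,T)$ with $|S\triangle T|=p$ and a prescribed symmetric difference is $\binom{\ell}{p/2}\binom{n-\ell}{p/2}$, so $\EE[\bM]$ has planted eigenvalue of order $\lambda\binom{\ell}{p/2}\binom{n-\ell}{p/2}$ along $v$. The noise analysis applies matrix Bernstein to $\bM-\EE[\bM]$: each row has at most $\binom{\ell}{p/2}\binom{n-\ell}{p/2}$ nonzero entries of magnitude $O(1)$, so with high probability
\[
\| \bM - \EE[\bM] \| \;\lesssim\; \sqrt{\binom{\ell}{p/2}\binom{n-\ell}{p/2}\,\log\binom{n}{\ell}} \;\asymp\; \ell^{p/4}\,n^{p/4}\,\sqrt{\ell\log n}.
\]
Signal beats noise precisely when $\lambda\gtrsim\ell^{1/2-p/4}\,n^{-p/4}\,\mathrm{polylog}(n)$, which is the claimed threshold. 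Strong detection follows by thresholding $\|\bM\|$; weak recovery of $\bx_*$ follows by extracting $\hat\bu\in\RR^n$ from the top eigenvector $\hat v$ (e.g., by averaging $\hat v_S$ across all $S\ni i$ with an appropriate sign); and strong recovery follows by one step of the tensor power method via Proposition~\ref{prop:boost}, noting that the claimed $\lambda$ exceeds the information-theoretic threshold $n^{(1-p)/2}$ throughout $1\le\ell\le n$.

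The main obstacle I anticipate is the combinatorial bookkeeping around the lift: one must verify that the ``near-diagonal'' contributions to $\EE[\bM]$ (arising from overlapping indices in $S\triangle T$) do not disturb the spectrum, that matrix Bernstein gives sharp enough bounds uniformly as $\ell$ grows to $n^\delta$, and that the dependence structure induced by the symmetry of $\bY$ is handled correctly. Extending the argument to odd $p$ without losing the $\ell^{1/2-p/4}$ factor requires the auxiliary-sign construction, and care is needed in the extraction step from $\hat v$ back to $\hat\bu$ to guarantee \emph{constant} (not merely nonzero) correlation before invoking the boosting step.
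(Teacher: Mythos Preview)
This theorem is a \emph{cited} result: the paper does not give its own proof but attributes it to~\cite{cert-tensor,HSS}, whose arguments use SOS moment matrices and trace-moment calculations rather than a Kikuchi-style lift. What you have written is, in effect, the paper's \emph{new} contribution (Theorems~\ref{detection_result} and~\ref{recovery_result}), transplanted to the discrete model. So your detection argument via the symmetric difference matrix and matrix concentration is essentially the paper's own simplified proof of its analogous result; it is not the original SOS proof of Theorem~\ref{SOS-guarantee}, but it is correct for $p$ even and gives the same threshold.

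\paragraph{A genuine gap in the recovery step.}
Your claim that ``the planted part contributes a rank-one matrix with eigenvector $v_S=\bx_*^S$'' is false: after conjugating by $\mathrm{diag}(\bx_*^S)$, the signal matrix is the Johnson-scheme matrix $\bX_{S,T}=\bone_{|S\symd T|=p}$, which has $\ell+1$ distinct eigenvalues (the Eberlein polynomials), and its spectral gap $\mu_0-\mu_1$ is only of order $\mu_0/\ell$ (see Lemma~\ref{lem:eig-gap}). Consequently, a Davis--Kahan argument showing that the top eigenvector of $\bM$ is close to $v$ would cost you an extra factor of $\ell$ in the threshold, and your proposed rounding (``average $\hat v_S$ over $S\ni i$'') is not shown to recover constant correlation. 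The paper circumvents this by \emph{not} arguing that $\hat v$ is close to $v$; instead it shows $\hat v$ lies mostly in the span of the top $m$ eigenspaces $\bigoplus_{s\le m}\mathcal{Y}_s$ for $m=\Theta(\ell\sqrt\epsilon)$, and then uses a voting matrix together with a Poincar\'e inequality on the slice (Lemma~\ref{lem:voting-fourier}) to extract $\bx_*$ from any vector in that span. This is the missing idea in your recovery argument; without it the extraction step does not go through at the stated threshold.
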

\noindent The work of \cite{cert-tensor} shows how to certify an upper bound on the injective norm of a random $\{\pm 1\}$-valued tensor, which immediately implies the algorithm for strong detection. When combined with \cite{HSS}, this can also be made into an algorithm for strong recovery (see Lemma~4.4 of \cite{HSS}). Similar (but weaker) SOS results are also known for the standard spiked tensor model (see \cite{strongly-refuting} and \emph{arXiv} version 1 of \cite{cert-tensor}), and we expect that Theorem~\ref{SOS-guarantee} also holds for this case. (Curiously, the certification-to-recovery reduction within SOS is not known in the related setting of CSPs.)

When $\ell = n^\delta$ for $\delta \in (0,1)$, Theorem~\ref{SOS-guarantee} implies that we have an algorithm with runtime $n^{O(n^\delta)} = 2^{n^{\delta+o(1)}}$ that succeeds when $\lambda \gg n^{\delta/2 - p \delta /4  - p/4}$. Note that this interpolates smoothly between the polynomial-time threshold ($\lambda \sim n^{-p/4}$) when $\delta = 0$, and the information-theoretic threshold ($\lambda \sim n^{(1-p)/2}$) when $\delta = 1$. We will prove (for $p$ even) that our algorithms achieve this same tradeoff, and we expect this tradeoff to be optimal.\footnote{One form of evidence suggesting that this tradeoff is optimal is based on the \emph{low-degree likelihood ratio}; see~\cite{low-deg-notes}. There is also an SOS lower bound~\cite{potechin2022sub}.}

\section{Main results} \label{sec:main-results}

In this section we present our main results about detection and recovery in the spiked tensor model. We propose a hierarchy of spectral methods, which are directly derived from the hierarchy of \emph{Kikuchi free energies}. Specifically, the symmetric difference matrix defined below appears (approximately) as a submatrix of the Hessian of the Kikuchi free energy. The full details of this derivation are given in Section~\ref{sec:motivation} and Appendix~\ref{app:hessian}. For now we simply state the algorithms and results.

We will restrict our attention to the Rademacher-spiked tensor model, which is the setting in which we derived our algorithms. However, we show in Appendix~\ref{app:all-priors} that the same algorithm works for a large class of priors (at least for strong detection). Furthermore, we show in Section~\ref{sec:xor} that the same algorithm can also be used for refuting random $k$-XOR formulas (when $k$ is even).

We will also restrict to the case where the tensor order $p$ is even. The case of odd $p$ is discussed in Appendix~\ref{app:odd}, where we give an algorithm derived from the Kikuchi Hessian and conjecture that it achieves optimal performance. We are unable to prove this, but we are able to prove that optimal results are attained by a related algorithm (see Section~\ref{sec:cert-odd}).

Our approach requires the introduction of two matrices:
\paragraph{The symmetric difference matrix of order $\ell$.} Let $p$ be even and let $\bY \in (\RR^n)^{\otimes p}$ be the observed order-$p$ symmetric tensor. We will only use the entries $\bY_{i_1,\ldots,i_p}$ for which the indices $i_1,\ldots,i_p$ are distinct; we denote such entries by $\bY_E$ where $E \subseteq [n]$ with $|E| = p$. Fix an integer $\ell \in [p/2,n-p/2]$, and consider the symmetric ${n \choose \ell} \times {n \choose \ell}$ matrix $\bM$ indexed by sets $S \subseteq [n]$ of size $\ell$, having entries 
\begin{equation}\label{symmetric_diff}
\bM_{S,T} = 
\begin{cases}
\bY_{S \symd T}  & \mbox{if } |S \symd T| = p,\\
0 & \mbox{otherwise}.
\end{cases}
\end{equation}       
Here $\symd$ denotes the symmetric difference between sets. The leading eigenvector of $\bM$ is intended to be an estimate of $(\bx_*^S)_{|S|=\ell}$ where $\bx^S \defeq \prod_{i \in S} \bx_i$. The following `voting' matrix is a natural rounding scheme to extract an estimate of $\bx_*$ from such a vector.

\paragraph{The voting matrix.} To a vector $\bv \in \RR^{{n\choose \ell}}$ we associate the following symmetric $n \times n$ voting matrix $\bV(\bv)$ having entries 
\begin{equation}\label{voting_matrix}
\bV_{ii}(\bv) = 0 \;\;\forall i \in [n], 
\quad \mbox{and} \quad 
\bV_{ij}(\bv) = \frac{1}{2}\sum_{S,T \in  {[n] \choose \ell}} \bv_S \bv_T \bone_{S \symd T = \{i,j\}} \;\;\forall i\neq j.
\end{equation}
Let us define the important quantity 
\begin{equation}
\label{eq:dell}
d_\ell \defeq {n-\ell \choose p/2}{\ell \choose p/2} \, . 
\end{equation}   
This is the number of sets $T$ of size $\ell$ such that $|S \symd T|=p$ for a given set $S$ of size $\ell$. (Note it is important for $p$ to be even here.) Now we are in position to formulate our algorithms for detection and recovery. 

\begin{algorithm}[Detection for even $p$]\label{detection_even}
~~
\begin{enumerate}
\item Compute the top eigenvalue $\lambda_{\max}(\bM)$ of the symmetric difference matrix $\bM$. 
\item Reject the null hypothesis $\PP_{0}$ (i.e., return `1') if $\lambda_{\max}(\bM) \ge  \lambda d_\ell/2$. 
\end{enumerate}
\end{algorithm}
 We will see in Section~\ref{sec:symm_diff} that the quantity $\lambda d_\ell/2$ is half the largest eigenvalue of `the signal part' of the matrix $\bM$. This intuitively justifies its appearance in the above algorithm.

\begin{algorithm}[Recovery for even $p$]\label{recovery_even}
~~
\begin{enumerate}
\item Compute a (unit-norm) leading eigenvector\footnote{We define a leading eigenvector to be an eigenvector whose eigenvalue is maximal (although our results still hold for an eigenvector whose eigenvalue is maximal in absolute value).} $\bv_{\textup{top}} \in \RR^{{n\choose \ell}}$ of $\bM$.
\item Form the associated voting matrix $\bV(\bv_{\textup{top}})$.
\item Compute a leading eigenvector $\widehat{\bx}$ of $\bV(\bv_{\textup{top}})$, and output $\widehat{\bx}$. 
\end{enumerate}

\end{algorithm}

The next two theorems characterize the performance of Algorithms~\ref{detection_even} and~\ref{recovery_even} for the strong detection and recovery tasks, respectively. The proofs can be found in Section~\ref{sec:symm_diff}.
\begin{theorem}\label{detection_result}
Consider the Rademacher-spiked tensor model with $p$ even. For all $\lambda \ge 0$ and $\ell \in [p/2,n-p/2]$, we have
\[\PP_{0} \Big(\lambda_{\max}(\bM) \ge \frac{\lambda d_\ell}{2}\Big) \vee 
\PP_{\lambda} \Big(\lambda_{\max}(\bM) \le \frac{\lambda d_\ell}{2}\Big) \le 2n^{\ell}e^{-\lambda^2 d_\ell/8}.\]
(Here, $a \vee b = \max\{a,b\}$.) Therefore, Algorithm~\ref{detection_even} achieves strong detection between $\PP_{0}$ and $\PP_{\lambda}$  if $\frac{1}{8} \lambda^2 d_\ell - \ell \log n \to +\infty$ as $n \to +\infty$. 
\end{theorem}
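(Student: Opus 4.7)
The plan is to reduce both failure probabilities to a single bound on $\|\bN\|$, where $\bN$ is the noise part of $\bM$ (i.e., $\bM$ itself under $\PP_0$), and then apply matrix Chernoff/Bernstein for Gaussian series.

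First I would exploit the Rademacher structure. Since $\bx_* \in \{\pm 1\}^n$, we have $\bx_{*,i}^2 = 1$, so for any $\ell$-sets $S,T$ with $|S \triangle T| = p$,
\[
\bx_*^{S \triangle T} \;=\; \bx_*^S \cdot \bx_*^T.
\]
Defining the ``lifted signal'' $\bu \in \RR^{\binom{n}{\ell}}$ by $\bu_S = \bx_*^S$, this lets me write $\bM = \lambda \bM^{\mathrm{sig}} + \bN$ under $\PP_\lambda$, where $\bM^{\mathrm{sig}}_{S,T} = \bu_S \bu_T \bone_{|S \triangle T| = p}$ and $\bN$ has the same law as $\bM$ under $\PP_0$. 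A direct calculation shows $\bu$ is an eigenvector of $\bM^{\mathrm{sig}}$: indeed,
\[
(\bM^{\mathrm{sig}} \bu)_S = \bu_S \sum_{T : |S \triangle T| = p} \bu_T^2 = d_\ell\, \bu_S,
\]
using that there are exactly $d_\ell$ such $T$. Hence, by the variational principle, under $\PP_\lambda$ we have $\lambda_{\max}(\bM) \ge \lambda d_\ell - \|\bN\|$, while under $\PP_0$, $\lambda_{\max}(\bM) \le \|\bN\|$. Both tail events $\{\lambda_{\max}(\bM) \ge \lambda d_\ell/2\}$ under $\PP_0$ and $\{\lambda_{\max}(\bM) \le \lambda d_\ell/2\}$ under $\PP_\lambda$ are thus contained in $\{\|\bN\| \ge \lambda d_\ell/2\}$.

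Next I would bound $\|\bN\|$ using the matrix Chernoff inequality for Gaussian series. Writing $\bN = \sum_{E : |E| = p} \bG_E\, \bA^{(E)}$ with independent standard Gaussians $\bG_E$ and symmetric $0/1$ matrices $\bA^{(E)}_{S,T} = \bone_{S \triangle T = E}$, the variance proxy is
\[
\sigma^2 = \Bigl\| \sum_{E} (\bA^{(E)})^2 \Bigr\|.
\]
For each $\ell$-set $S$ and each $p$-set $E$, the equation $S \triangle U = E$ has a unique solution $U = S \triangle E$, and that $U$ has size $\ell$ iff $|S \cap E| = p/2$. This means $(\bA^{(E)})^2$ is diagonal with a $1$ at $(S,S)$ exactly when $|S \cap E| = p/2$. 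Summing over $E$, the diagonal entry at $S$ counts $p$-sets $E$ with $|S \cap E| = p/2$, which is $\binom{\ell}{p/2}\binom{n-\ell}{p/2} = d_\ell$. Thus $\sum_E (\bA^{(E)})^2 = d_\ell \cdot I$, giving $\sigma^2 = d_\ell$. The matrix Gaussian series bound (of dimension $\binom{n}{\ell} \le n^\ell$) then yields
\[
\PP\bigl(\|\bN\| \ge t\bigr) \le 2 n^\ell \exp\bigl(-t^2/(2 d_\ell)\bigr),
\]
and plugging $t = \lambda d_\ell/2$ gives exactly $2 n^\ell e^{-\lambda^2 d_\ell / 8}$, matching the claimed bound. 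The strong detection corollary follows immediately, since $\frac{1}{8}\lambda^2 d_\ell - \ell \log n \to \infty$ makes this bound vanish.

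The main obstacle is the variance calculation $\sigma^2 = d_\ell$: it requires recognizing that for each $S$ the number of admissible ``moves'' $E$ (i.e., $p$-sets with $|S \cap E| = p/2$) equals the row-degree $d_\ell$ of the symmetric-difference incidence structure, a clean cancellation that makes $\sum_E (\bA^{(E)})^2$ a scalar multiple of the identity. Everything else is a routine application of concentration plus the $\{\pm 1\}$-factorization of $\bx_*^{S \triangle T}$.
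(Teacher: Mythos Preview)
Your proposal is correct and essentially identical to the paper's proof. The only cosmetic difference is that the paper first conjugates $\bM$ by the diagonal matrix $\bD_{S,S}=\bx_*^S$ to reduce to the case $\bx_*=\mathbf{1}$ (so the signal block becomes the Johnson-scheme matrix $\bX_{S,T}=\bone_{|S\triangle T|=p}$ with top eigenvalue $d_\ell$), whereas you keep $\bx_*$ general and use $\bu_S=\bx_*^S$ directly as an eigenvector of the signal part; after that, both proofs compute $\sum_E(\bA^{(E)})^2=d_\ell\,\bI$ and invoke the matrix Chernoff bound for Gaussian series in exactly the same way.
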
 

\begin{theorem}\label{recovery_result}
Consider the Rademacher-spiked tensor model with $p$ even. Let $\widehat{\bx} \in \RR^n$ be the output of Algorithm~\ref{recovery_even}. There exists an absolute constant $c_0>0$ such that for all $\epsilon >0$ and $\delta \in (0,1)$, if $\ell \le n \epsilon^2$ and $\lambda \ge c_0\epsilon^{-4} \sqrt{\log (n^{\ell}/\delta) \big/ d_\ell}$, then $\corr(\widehat{\bx},\bx_*) \ge 1-c_0\epsilon$ with probability at least $1-\delta$. 
\end{theorem}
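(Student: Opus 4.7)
The plan is to reduce the analysis to two applications of Davis--Kahan, bridged by an explicit algebraic identification of the planted signal inside both $\bM$ and the voting matrix $\bV$. Introduce the lifted planted vector $\bz \in \RR^{\binom{n}{\ell}}$ with $\bz_S := \prod_{i \in S} x_{*,i} \in \{\pm 1\}$. Using $x_{*,i}^2 = 1$ one checks $\bz_S \bz_T = \bx_*^{S \symd T}$, so $\EE[\bM] = \lambda D_{\bz} B D_{\bz}$, where $D_{\bz} = \mathrm{diag}(\bz)$ and $B_{S,T} = \mathbf{1}\{|S \symd T| = p\}$. Since $B$ has top eigenvector $\mathbf{1}$ with eigenvalue $d_\ell$, the vector $\bz/\|\bz\|$ is a top eigenvector of $\EE[\bM]$ with eigenvalue $\lambda d_\ell$. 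A parallel computation yields
\[
\bV(\bz/\|\bz\|) \;=\; \frac{\ell(n-\ell)}{n(n-1)}\,\bigl(\bx_* \bx_*^\top - I\bigr),
\]
whose unique leading eigenvector is $\bx_*$ with spectral gap $\ell(n-\ell)/(n-1) \asymp \ell$. Hence both the input and the output signal matrices are essentially rank-$1$ in the correct direction.

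Next I would carry out the first Davis--Kahan. The noise $\bW := \bM - \EE[\bM]$ has the same law under $\PP_\lambda$ as $\bM$ under $\PP_0$, so the Chernoff/net argument behind Theorem~\ref{detection_result}, applied to $\pm \bW$, gives $\|\bW\|_{\text{op}} \le C\sqrt{d_\ell\,\log(n^\ell/\delta)}$ with probability at least $1-\delta$. Decomposing $\bv_{\textup{top}} = \alpha\, \bz/\|\bz\| + \sqrt{1-\alpha^2}\,\bv_\perp$ with $\bv_\perp \perp \bz$ and $\alpha \ge 0$, I combine the lower bound $\lambda_1(\bM) \ge (\bz/\|\bz\|)^\top \bM (\bz/\|\bz\|) \ge \lambda d_\ell - \|\bW\|_{\text{op}}$ with the upper bound $\lambda_1(\bM) \le \bv_{\textup{top}}^\top \EE[\bM]\bv_{\textup{top}} + \|\bW\|_{\text{op}}$. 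The change of variables $\bu := D_{\bz}\bv_{\textup{top}}$ turns the latter into a quadratic form in $B$ whose component along $\mathbf{1}/\sqrt{N}$ is exactly $\alpha$, giving
\[
1 - \alpha^2 \;\le\; \frac{2\,\|\bW\|_{\text{op}}}{\lambda\,\bigl(d_\ell - \mu_2(B)\bigr)}.
\]
A Johnson-scheme calculation shows that the second-largest eigenvalue of $B$ comes from the ``singleton'' isotypic subspace spanned by $S \mapsto \sum_{i \in S} c_i$ with $\sum_i c_i = 0$, so $d_\ell - \mu_2(B) = \Theta(d_\ell \, p/\ell)$ in the regime $\ell \le n\epsilon^2$.

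For the second Davis--Kahan, expand
\[
\bV(\bv_{\textup{top}}) \;=\; \alpha^2\, \bV(\bz/\|\bz\|) + 2\alpha\sqrt{1-\alpha^2}\,\widetilde\bV(\bz/\|\bz\|,\bv_\perp) + (1-\alpha^2)\,\bV(\bv_\perp),
\]
with $\widetilde\bV$ the symmetric bilinear extension of the voting map. The first term is a rank-$O(1)$ signal with top eigenvector $\bx_*$ and spectral gap of order $\alpha^2 \ell$. Using the identity $V_{ij}(\bu) = \sum_{R \subseteq [n]\setminus\{i,j\},\,|R|=\ell-1} \bu_{R\cup\{i\}}\bu_{R\cup\{j\}} = \langle D_i \bu,\, D_j \bu\rangle$ for the natural shift operators $(D_i \bu)_R := \bu_{R \cup \{i\}}\,\mathbf{1}\{i \notin R\}$, I would control the operator norms of the bilinear and perpendicular terms and treat them as a perturbation of the signal. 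A final Davis--Kahan then yields $\corr(\widehat\bx, \bx_*) \ge 1 - O\bigl(\sqrt{1-\alpha^2}\bigr)$, which upon substituting the bound from the previous step and the hypothesis $\lambda \ge c_0 \epsilon^{-4}\sqrt{\log(n^\ell/\delta)/d_\ell}$ produces the advertised $1 - c_0 \epsilon$ correlation.

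The main obstacle I anticipate is sharpening the operator-norm bound on $\bV(\bv_\perp)$: the crude Frobenius estimate $\|\bV(\bu)\|_{\text{op}} \le O(\ell)$ (from $\sum_i \|D_i \bu\|^2 = \ell\,\|\bu\|^2$) is too loose to accommodate the full regime $\ell \le n\epsilon^2$, and the tight bound almost certainly requires exploiting the Johnson-scheme decomposition of $D_{\bz}\bv_\perp$, since the low isotypic components give rise to voting matrices of norm $O(1)$ rather than $O(\ell)$. Quantitatively, the $\epsilon^{-4}$ dependence in $\lambda$ arises as the product of two $\epsilon^{-2}$ factors, one per Davis--Kahan step (spectral gap divided by noise), while $\ell \le n\epsilon^2$ is precisely the regime where both key approximations $c \approx \ell/n$ and $d_\ell - \mu_2(B) \approx d_\ell\,p/(2\ell)$ hold with relative error $O(\epsilon^2)$.
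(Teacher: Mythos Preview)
Your first Davis--Kahan step does not close. You project $\bv_{\textup{top}}$ onto the single direction $\bz/\|\bz\|$ and invoke the gap $d_\ell - \mu_2(B)$; but that gap is only $\Theta(d_\ell\, p/\ell)$, exactly as you compute. Plugging in $\|\bW\|_{\textup{op}} \le C\sqrt{d_\ell \log(n^\ell/\delta)}$ and the hypothesis $\lambda \ge c_0\epsilon^{-4}\sqrt{\log(n^\ell/\delta)/d_\ell}$ gives
\[
1-\alpha^2 \;\le\; \frac{2\|\bW\|_{\textup{op}}}{\lambda(d_\ell-\mu_2(B))} \;\lesssim\; \frac{\ell}{p}\cdot\epsilon^4,
\]
which is useless once $\ell \gg \epsilon^{-4}$, and the theorem allows $\ell$ as large as $n\epsilon^2$. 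So $\alpha$ need not be close to~$1$ at all, and the expansion $\bV(\bv_{\textup{top}}) = \alpha^2\bV(\bz/\|\bz\|)+\cdots$ cannot be treated as ``signal plus small perturbation'': the ``perturbation'' $\bv_\perp$ can carry essentially all the mass. Your final paragraph locates the difficulty in bounding $\|\bV(\bv_\perp)\|$, but the real problem is upstream---you never established that $\bv_\perp$ is small.

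The paper's proof avoids this by projecting not onto the rank-one space $\mathrm{span}(\bz)$ but onto $\bigoplus_{s\le m}\mathcal Y_s$ for a growing $m=\ell\sqrt{\epsilon}$. The relevant gap becomes $d_\ell-\mu_{m+1}\gtrsim d_\ell\cdot m/\ell$, which is large enough to force the complementary piece $v^\perp$ to be genuinely small. The price is that one must now show that \emph{every} unit vector in $\bigoplus_{s\le m}\mathcal Y_s$---not just the all-ones direction---produces a voting matrix close to $\alpha\,\hat\One\hat\One^\top$. This is the content of Proposition~\ref{lem:approx_rank_one}, and its proof goes through a Poincar\'e-type inequality on the slice (Lemma~5.6 of \cite{filmus}): low-degree functions on ${[n]\choose\ell}$ have small total influence, which translates into $\sum_{i,j}\bV_{ij}(v^{(m)})$ being large and hence $\bV(v^{(m)})$ being nearly rank-one. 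Your intuition that ``low isotypic components give rise to voting matrices of norm $O(1)$'' is pointing at exactly this phenomenon, but it needs to be applied to $v^{(m)}$ (the part you keep), not to $\bv_\perp$ (the part you discard), and it is not a perturbation argument---it is a structural statement about Fourier-low vectors on the slice.
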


\begin{remark}
If $\ell = o(n)$, we have $d_\ell = \Theta(n^{p/2}\ell^{p/2})$, and so the above theorems imply that strong detection and strong recovery are possible as soon as $\lambda \gg \ell^{- (p-2)/4} n^{-p/4} \sqrt{\log n}$. Comparing with Theorem~\ref{SOS-guarantee}, this scaling coincides with the guarantees achieved by the level-$\ell$ SOS algorithm of~\cite{cert-tensor}, up to a possible discrepancy in logarithmic factors.
\end{remark}

Due to the particularly simple structure of the symmetric difference matrix $\bM$ (in particular, the fact that its entries are simply entries of $\bY$), the proof of detection (Theorem~\ref{detection_result}) follows from a straightforward application of the matrix Chernoff bound. In contrast, the corresponding SOS results \cite{strongly-refuting,mult-approx,cert-tensor} work with more complicated matrices involving high powers of the entries of $\bY$, and the analysis is much more involved.

Our proof of recovery is unusual in that the signal component of $\bM$, call it $\bX$, is not rank-one (see Eq.~\eqref{signal_noise}); it even has a vanishing spectral gap when $\ell \gg 1$. Thus, the leading eigenvector of $\bM$ does not correlate well with the leading eigenvector of $\bX$. While this may seem to render recovery hopeless at first glance, this is not the case, due to the fact that \emph{many} eigenvectors (actually, eigenspaces) of $\bX$ contain non-trivial information about the spike $\bx_*$, as opposed to only the top one. We prove this by exploiting the special structure of $\bX$ through the \emph{Johnson scheme}, and using tools from Fourier analysis on a slice of the hypercube, in particular a Poincar\'e-type inequality by \cite{filmus}.

\paragraph{Removing the logarithmic factor} Both Theorem~\ref{detection_result} and Theorem~\ref{recovery_result} involve a logarithmic factor in $n$ in the lower bound on SNR $\lambda$.
These logarithmic factors are an artifact of the matrix Chernoff bound, and we believe they can be removed. The analysis of \cite{HSS} removes the logarithmic factors for the tensor unfolding algorithm, which is essentially the case $p = 3$ and $\ell = 1$ of our algorithm.
This suggests the following precise conjecture on the power of polynomial-time algorithms.
\begin{conjecture}
Fix $p$ and let $\ell$ be constant (not depending on $n$). There exists a constant $c_p(\ell) > 0$ with $c_p(\ell) \to 0$ as $\ell \to \infty$ (with $p$ fixed) such that if $\lambda \ge c_p(\ell) n^{-p/4}$ then Algorithm~\ref{detection_even} and Algorithm~\ref{recovery_even} (which run in time $n^{O(\ell)}$) achieve strong detection and strong recovery, respectively. Specifically, we expect $c_p(\ell) \sim \ell^{-(p-2)/4}$ for large $\ell$.
\end{conjecture}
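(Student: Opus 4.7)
The plan is to replace the matrix Chernoff bound used in the proofs of Theorems~\ref{detection_result} and~\ref{recovery_result} with a trace moment estimate, which should remove the stray $\sqrt{\log n}$ factor and deliver $\lambda \geq c_p(\ell)\, n^{-p/4}$. Writing $\bM = \bX + \bN$ where $\bX_{S,T} = \lambda \bx_*^{S \symd T}\,\bone[|S \symd T|=p]$ is the signal part, the only place the log factor enters the current proof is the Chernoff-based tail bound $\|\bN\| \lesssim \sqrt{d_\ell \cdot \ell \log n}$; the signal lower bound $\lambda_{\max}(\bX) \ge \lambda d_\ell$ is already sharp and log-free. Indeed, in the Rademacher case $\bx_*^{S \symd T} = \bx_*^S \bx_*^T$, so $\bX = \lambda D \mathbf{J} D$ where $D$ is the diagonal sign matrix with $D_{S,S} = \bx_*^S$ and $\mathbf{J}_{S,T} = \bone[|S \symd T|=p]$ is a Johnson-scheme adjacency matrix satisfying $\mathbf{J}\,\mathbf{1} = d_\ell\,\mathbf{1}$; the Rayleigh quotient of $D\mathbf{1}$ then gives the bound. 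Consequently, if we can prove $\|\bN\| = O_{p,\ell}(\sqrt{d_\ell})$ with high probability, choosing $c_p(\ell) = \Theta(\ell^{-(p-2)/4})$ yields exactly the conjectured scaling, and $c_p(\ell) \to 0$ as $\ell \to \infty$ for even $p \ge 4$.

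The main obstacle is bounding $\EE \tr(\bN^{2k})$ for $k$ proportional to $\ell \log n$. Expanding the trace gives a sum over closed walks $S_1 \to S_2 \to \cdots \to S_{2k} \to S_1$ on the graph with vertex set ${[n] \choose \ell}$ and edge set $\{(S,T) : |S \symd T|=p\}$, weighted by $\EE \prod_{i=1}^{2k} \bG_{E_i}$ where $E_i := S_i \symd S_{i+1}$. Gaussian moment rules force each label $E \in {[n] \choose p}$ to appear an even number of times among the $E_i$. The key complication relative to a Wigner-type trace analysis is that different transitions of a single walk may carry the same label $E$ (a single Gaussian $\bG_E$ populates $d_\ell$ matrix positions of $\bN$), so the standard strategy of pairing edges into a plane tree must be adapted to this correlated, set-indexed setting. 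I would classify closed walks by the combinatorial shape obtained after identifying all transitions with equal label; show that the contribution of a shape with $b$ distinct labels and $v$ distinct vertices depends only on $(b,v)$; count shapes of a given type by a Catalan-style factor $C^k$; and bound the number of embeddings of a given shape into the set system by ${n \choose \ell}\, d_\ell^k$ times a correction that decays in the amount of label-coincidence. The target is
\[
\EE \tr(\bN^{2k}) \;\le\; {n \choose \ell}\,\bigl(C \sqrt{d_\ell}\bigr)^{2k}
\]
for a constant $C = C(p)$; Markov's inequality at $k \asymp \ell \log n$ then yields $\|\bN\| \le (C+o(1))\sqrt{d_\ell}$ with probability $1 - o(1)$. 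This combinatorial estimate, which generalises both the tree-walks argument for Wigner matrices and the moment analysis of tensor unfolding ($p=3$, $\ell=1$) in~\cite{HSS}, is where essentially all of the new technical content lives and is the step I expect to be by far the hardest.

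Once the noise bound $\|\bN\| = O(\sqrt{d_\ell})$ is in place, Algorithm~\ref{detection_even} immediately achieves strong detection for $\lambda \ge c_p(\ell)\,n^{-p/4}$ with a suitable constant $c_p(\ell)$, since the signal Rayleigh quotient already exceeds the threshold $\lambda d_\ell / 2$ by a constant factor. For recovery, the improved noise estimate feeds directly into the proof of Theorem~\ref{recovery_result}: the decomposition of $\bv_{\textup{top}}$ along the Johnson-scheme eigenspaces of $\bX$, the Poincar\'e-type inequality of~\cite{filmus} on the slice used to compare $\bv_{\textup{top}}$ with the ``product'' vector $(\bx_*^S)_{|S|=\ell}$, and the argument showing that the voting matrix $\bV(\bv_{\textup{top}})$ is operator-norm close to a scalar multiple of $\bx_*\bx_*^\top$ are all already loss-free in $\log n$. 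Swapping in the moment-based noise bound and carefully tracking the dependence of constants on $\ell$ upgrades the current requirement $\lambda \gtrsim \epsilon^{-4}\sqrt{\log n / d_\ell}$ to $\lambda \ge c_p(\ell)\,n^{-p/4}$ with the same $c_p(\ell) \to 0$ behaviour, so both halves of the conjecture follow with no further surprises once the walk-counting estimate is proved.
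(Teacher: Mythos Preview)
This statement is presented in the paper as an open \emph{conjecture}, not a theorem: the authors write that the log factors in Theorems~\ref{detection_result} and~\ref{recovery_result} ``are an artifact of the matrix Chernoff bound, and we believe they can be removed,'' and then state the log-free version as a conjecture without proof. There is no proof in the paper to compare your proposal against.

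Your outline is the natural strategy---replace the Chernoff tail on $\|\bZ\|_{\textup{op}}$ (your $\bN$) by a trace-moment bound $\EE\tr(\bZ^{2k}) \le {n \choose \ell}(C\sqrt{d_\ell})^{2k}$---and you are right that the signal bound $\lambda_{\max}(\bX) \ge \lambda d_\ell$ and the downstream recovery machinery (the Johnson-scheme eigenspace decomposition, the Poincar\'e inequality of~\cite{filmus}, the voting-matrix analysis) are already log-free, so the only missing ingredient is indeed a sharp noise bound. But you have not actually proved the conjecture: you yourself flag the closed-walk combinatorics as ``where essentially all of the new technical content lives'' and ``by far the hardest'' step, and then leave it as a sketch (classify walks by label-coincidence shape, Catalan factor, bound embeddings). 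The feature you point out---that a single Gaussian $\bG_E$ occupies many positions of $\bZ$, so distinct steps of a walk can share a label without sharing an endpoint pair---is exactly what separates this from a standard Wigner trace computation and is presumably why the authors left it open. What you have written is a plausible plan of attack, not a proof; the conjecture remains a conjecture after your proposal.
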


\paragraph{Followup work:} The above conjecture was partially settled in the recent work~\cite{bandeira2024matrix} for $\ell \in [p/2,3p/4)$, where the authors show a sharp estimate for the first eigenvalue of the matrix $\bM/\sqrt{d_{\ell}}$ by relating it to a corresponding free probability model which is amenable to sharp analysis; a line of inquiry yielding remarkable results initiated in~\cite{bandeira2023matrix}. In particular, the authors show that the test described in Algorithm~\ref{detection_even} achieves strong detection for $\lambda \ge (1+\epsilon) d_{\ell}^{-1/2}$ when $p/2 \le \ell < 3p/4$ for any $\epsilon>0$, see~\cite[Section 3.3]{bandeira2024matrix}.

\section{Motivating the Symmetric Difference Matrices} \label{sec:motivation}

In this section we motivate the symmetric difference matrices used in our algorithms. In Section~\ref{sec:intuition} we give some high-level intuition. In Section~\ref{sec:var-kik} we give a more principled derivation based on the Kikuchi Hessian, with many of the calculations deferred to Appendix~\ref{app:hessian}.

\subsection{Intuition: Higher-Order Message-Passing and Maximum Likelihood} \label{sec:intuition}

We will give two related justifications for the symmetric difference matrices. First, we will see how our algorithms can be thought of as iterative message-passing procedures among subsets of size $\ell$. Second, we will see that the symmetric difference matrix is a submatrix of some exponential-size matrix whose leading eigenvector can exactly compute the maximum likelihood estimator.

As stated previously, our algorithms will choose to ignore the entries $\bY_{i_1,\ldots,i_p}$ for which $i_1,\ldots,i_p$ are not distinct; these entries turn out to be unimportant asymptotically. We restrict to the Rademacher-spiked tensor model, as this yields a clean and simple derivation. The posterior distribution for the spike $\bx_*$ given the observed tensor $\bY$ is 
\begin{align}
\PP(\bx \mid \bY) &\propto \exp\Big\{-\frac{1}{2} \sum_{i_1 < \cdots < i_p} \big(\bY_{i_1 \cdots i_p} - \lambda \bx_{i_1} \cdots \bx_{i_p}\big)^2\Big\} \nonumber\\
&\propto \exp\Big\{\lambda \sum_{i_1 < \cdots < i_p} \bY_{i_1 \cdots i_p} \bx_{i_1} \cdots \bx_{i_p} \Big\}
= \exp\Big\{\lambda \sum_{|E|=p} \bY_E \bx^E \Big\}
\label{eq:posterior}
\end{align}
over the domain $\bx \in \{\pm 1\}^n$.
\noindent We take $p$ to be even; a similar derivation works for odd $p$. Now fix $\ell \in [p/2,n-p/2]$. We can write the above as
\begin{equation}\label{eq:log-like}
\PP(\bx \mid \bY)  \propto \exp\Big\{\frac{\lambda}{N} \sum_{|S \symd T| = p} \bY_{S \symd T} \bx^S \bx^T\Big\} \, , 
\end{equation}
where the sum is over ordered pairs $(S,T)$ of sets $S,T \subseteq [n]$ with $|S| = |T| = \ell$ and $|S \symd T| = p$, and where $N = d_\ell {n \choose \ell} / {n \choose p}$ is the number of terms $(S,T)$ with a given symmetric difference $E$.

A natural algorithm to maximize the log-likelihood is the following. For each $S \subseteq [n]$ of size $|S| = \ell$, keep track of a variable $\bu_S \in \RR$, which is intended to be an estimate of $\bx_*^S \defeq \prod_{i\in S} (\bx_*)_i$. Note that there are consistency constraints that $(\bx_*^S)_{|S|=\ell}$ must obey, such as $\bx_*^S \bx_*^T \bx_*^V = 1$ when $S \symd T \symd V = \varnothing$; we will relax the problem and will not require our vector $\bu=(\bu_S)_{|S|=\ell}$ to obey such constraints. Instead, we simply attempt to maximize
\begin{equation}\label{eq:umu}
\frac{1}{\|\bu\|^2} \sum_{|S \symd T|=p} \bY_{S \symd T} \bu_S \bu_T 
\end{equation}
over all $\bu \in \RR^{n \choose \ell}$. To do this, we iterate the update equations
\begin{equation}\label{eq:update-u}
\bu_S \leftarrow \sum_{T \,:\, |S \symd T|=p} \bY_{S \symd T} \bu_T.
\end{equation}
We call $S$ and $T$ \emph{neighbors} if $|S \symd T|= p$. Intuitively, each neighbor $T$ of $S$ sends a message $m_{T \to S} \defeq \bY_{S \symd T} \bu_T$ to $S$, indicating $T$'s opinion about $\bu_S$. We update $\bu_S$ to be the sum of all incoming messages. 

Now note that the sum in~\eqref{eq:umu} is simply $\|u\|^{-2} \,\bu^\top \bM \bu$ where $\bM$ is the symmetric difference matrix, and~\eqref{eq:update-u} can be written as
\[\bu \leftarrow \bM \bu \, . \]
Thus, this natural message-passing scheme is precisely power iteration against $\bM$, and so we should take the leading eigenvector $\bv_{\textup{top}}$ of $\bM$ as our estimate of $(\bx_*^S)_{|S|=\ell}$ (up to a scaling factor). Finally, defining our voting matrix $\bV(\bv_{\textup{top}})$ and taking its leading eigenvector is a natural method for rounding $\bv_{\textup{top}}$ to a vector of the form $\bu^x$ where $u^x_S = \bx^S$, thus restoring the consistency constraints we ignored before.

For insight on why this ought to work, it is instructive to note that if we carry out this procedure on \emph{all} subsets $S \subseteq [n]$ then this works as intended, and no rounding is necessary: consider the $2^n \times 2^n$ matrix $\overline{\bM}_{S,T} = \bY_{S \symd T} \bone_{|S \symd T|=p}$. 
It is easy to verify that the eigenvectors of $\overline{\bM}$ are precisely the Fourier basis vectors on the hypercube, namely vectors of the form $\bu^\bx$ where $\bu^\bx_S=\bx^S$ and $\bx \in \{\pm 1\}^n$. Moreover, the eigenvalue associated to $\bu^\bx$ is
\[
\frac{1}{2^n} (\bu^\bx)^\top \overline{\bM} \bu^\bx 
= \frac{1}{2^n} \sum_{S, T \subseteq [n] \,:\, |S \symd T|=p} \bY_{S \symd T} \bx^S \bx^T 
= \sum_{|E|=p} \bY_E \bx^E \, .
\]
This is the expression in the log-likelihood in~\eqref{eq:posterior}. Thus, the leading eigenvector of $\overline{\bM}$ is exactly $\bu^\bx$ where $\bx$ is the maximum likelihood estimate of $\bx_*$.

This procedure succeeds all the way down to the information-theoretic threshold $\lambda \sim n^{(1-p)/2}$, but takes exponential time. Our contribution can be viewed as showing that even when we restrict to the submatrix $\bM$ of $\overline{\bM}$ supported on subsets of size $\ell$, the leading eigenvector still allows us to recover $\bx_*$ whenever the SNR is sufficiently large. Proving this requires us to perform Fourier analysis over a slice of the hypercube rather than the simpler setting of the entire hypercube, which we do by appealing to Johnson schemes and some results of~\cite{filmus}.

\subsection{Variational Inference and Kikuchi Free Energy} \label{sec:var-kik}

We now introduce the \emph{Kikuchi approximations to the free energy} (or simply the \emph{Kikuchi free energies}) of the above posterior~\eqref{eq:posterior} \cite{kikuchi1,kikuchi2}, the principle from which our algorithms are derived. For concreteness we restrict to the case of the Rademacher-spiked tensor model, but the Kikuchi free energies can be defined for general graphical models~\cite{GBP}.  

The posterior distribution in~\eqref{eq:posterior} is a Gibbs distribution $\PP(\bx \mid \bY) \propto e^{-\beta H(\bx)}$ with random Hamiltonian $H(\bx):=  - \lambda \sum_{i_1 < \cdots < i_p} \bY_{i_1 \cdots i_p} \bx_{i_1} \cdots \bx_{i_p}$, and inverse temperature $\beta = 1$. 
We let $Z_n(\beta;\bY) := \sum_{\bx \in \{\pm 1\}^n} e^{-\beta H(\bx)}$ be the partition function of the model, and denote by $F_n(\beta;\bY) : =  -\frac{1}{\beta}\log Z_n(\beta;\bY)$ its \emph{free energy}. 
It is a classical fact that the Gibbs distribution has the following variational characterization.
Fix a finite domain $\Omega$ (e.g., $\{\pm 1\}^n$), $\beta > 0$ and $H: \Omega \to \RR$. Consider the optimization problem
\begin{equation}\label{eq:variational}
\inf_{\mu} F(\mu),
\end{equation}
where the infimum is over probability distributions $\mu$ supported on $\Omega$, and define the \emph{free energy functional} $F$ of $\mu$ by
\begin{equation}\label{eq:F}
F(\mu) \defeq \Ex_{\bx \sim \mu}[H(\bx)] - \frac{1}{\beta} \bar{\mathcal{S}}(\mu),
\end{equation}
where $\bar{\mathcal{S}}(\mu)$ is the Shannon entropy of $\mu$, i.e., $\bar{\mathcal{S}}(\mu) = -\sum_{\bx \in \Omega} \mu(\bx) \log \mu(\bx)$. Then the unique optimizer of \eqref{eq:variational} is the Gibbs distribution $\mu^*(\bx) \propto \exp(-\beta H(\bx))$. 
If we specialize this statement to our setting, $\mu^* = \PP(\,\cdot\, | \bY)$ and $F_n(1;\bY) = F(\mu^*)$. We refer to~\cite{wainwright2008graphical} for more background.

In light of the above variational characterization, a natural algorithmic strategy to learn the posterior distribution is to minimize the free energy functional $F(\mu)$ over distributions $\mu$. However, this is \emph{a priori} intractable because (for a high-dimensional domain such as $\Omega = \{\pm 1\}^n$) an exponential number of parameters are required to represent $\mu$. The idea underlying the \emph{belief propagation} algorithm \cite{pearl-bp,GBP} is to work only with locally-consistent marginals, or \emph{beliefs}, instead of a complete distribution $\mu$. Standard belief propagation works with beliefs on singleton variables and on pairs of variables. The \emph{Bethe free energy} is a proxy for the free energy that only depends on these beliefs, and belief propagation is a certain procedure that iteratively updates the beliefs in order to locally minimize the Bethe free energy. The level-$r$ \emph{Kikuchi free energy} is a generalization of the \emph{Bethe free energy} that depends on $r$-wise beliefs and gives (in principle) increasingly better approximations of $F(\mu^*)$ as $r$ increases. Our algorithms are based on the principle of locally minimizing Kikuchi free energy, which we define next.

We now define the level-$r$ Kikuchi approximation to the free energy. We require $r \ge p$, i.e., the Kikuchi level needs to be at least as large as the interactions present in the data (although the $r < p$ case could be handled by defining a modified graphical model with auxiliary variables). The Bethe free energy is the case $r = 2$.

For $S \subseteq [n]$ with $0 < |S| \le r$, let $b_S: \{\pm 1\}^{S} \to \RR$ denote the \emph{belief} on $S$, which is a probability mass function over $\{\pm 1\}^{|S|}$ representing our belief about the joint distribution of $\bx_S \defeq \{\bx_i\}_{i \in S}$. Let $b = \{b_S :S \in {[n]\choose \le r}\}$ denote the set of beliefs on $s$-wise interactions for all $s \le r$.
Following \cite{GBP}, the Kikuchi free energy is a real-valued functional $\mathcal{K}$ of $b$ having the form $\mathcal{E} - \frac{1}{\beta}\mathcal{S}$ (in our case, $\beta = 1$). Here the first term is the `energy' term 
\[\mathcal{E}(b) = - \lambda \sum_{|S|=p} \bY_S \sum_{\bx_S \in \{\pm 1\}^S} b_S(\bx_S) \bx^S.\]
where, recall, $x^S \defeq \prod_{i \in S} x_i$. (This is a proxy for the term $\EE_{x \sim \mu} [H(x)]$ in \eqref{eq:F}.) The second term in $\mathcal{K}$ is the `entropy' term
\begin{equation}\label{eq:entrop}
\mathcal{S}(b) = \sum_{0 < |S| \le r} c_S \mathcal{S}_S(b), \qquad \mathcal{S}_S(b) = -\sum_{\bx_S \in \{\pm 1\}^S} b_S(\bx_S) \log b_S(\bx_S),
\end{equation}
where the \emph{overcounting numbers} are $c_S := \sum_{T \supseteq S,\, |T| \le r} (-1)^{|T \setminus S|}$. These are defined so that for any $S \subseteq [n]$ with $0 < |S| \le r$,
\begin{equation}\label{eq:counting}
\sum_{T \supseteq S,\, |T| \le r} c_T = 1,
\end{equation}
which corrects for overcounting the contribution of every variable $x_i$ to the entropy. The simplest case is when only beliefs on single marginals are considered: $r=1$. In this case the entropy term becomes $\mathcal{S}(b) = \sum_{i=1}^n - b(x_i)\log b(x_i)$, and the resulting approximation to the optimization problem \eqref{eq:variational} is known as the \emph{naive mean-field} approximation, which effectively optimizes over product measures $\mu(x) = \prod_{i=1}^n \mu_i(x_i)$. The case $r=2$ amounts to considering pair interactions, it considers distributions $\mu$ which can be approximated by an \emph{acyclic} Markov random field `at the level of the entropy': a Markov random field on a tree $T = (V=[n], E)$ (a graph without cycles) can be written in terms of its vertex and edge marginals as $\mu(x) = \prod_{i \in V} \mu(x_i) \prod_{(i,j) \in E} \mu(x_i,x_j)/[\mu(x_i)\mu(x_j)]$. The Shannon entropy of $\mu$ can be readily computed and is given by the expression $\bar{\mathcal{S}}(\mu) = \sum_{(i,j) \in E} \mathcal{S}_{\{i,j\}}(\mu) -\sum_{i \in V} (d_i-1)\mathcal{S}_{\{i\}}(\mu) = \mathcal{S}(\mu)$ where $d_i$ is the degree of vertex $i$ in $T$, and in writing $\mathcal{S}(\mu), \mathcal{S}_{\{i\}}(\mu)$ and $\mathcal{S}_{\{i,j\}}(\mu)$, we identify the probability distribution $\mu$ with its collection of marginals $\mu_S = \mu((x_i)_{i\in S})$, $S \subseteq [n]$. The Bethe approximation consists in ignoring the cycles in the graph---in our case the complete graph---and using an entropy of the previous form. The origin of the overcounting terms $c_S$ in this simple case is as follows: heuristically, in the pairwise terms $\sum_{(i,j) \in E} \mathcal{S}_{\{i,j\}}$ defining $\mathcal{S}(b)$ in Eq.~\eqref{eq:entrop}, each vertex $i$ `contributes' $d_i$ times to the entropy, so its marginal entropy must be subtracted $d_i-1$ times so that every vertex `contributes' once. A generalization of this approach to larger $r$ can be done via the language of hypergraphs and using hypertrees as a basis for the approximation. We do not attempt further elaboration here and refer to the excellent book~\cite[Section 4.2]{wainwright2008graphical} for a clear and full exposition of this formalism.

Notice that $\mathcal{E}$ and $\mathcal{S}$ each take the form of an ``expectation'' with respect to the beliefs $b_S$; these would be actual expectations were the beliefs the marginals of an actual probability distribution. This situation is to be compared with the notion of a \emph{pseudo-expectation}, which plays a central role in the theory underlying the sum-of-squares algorithm.

Our algorithms are based on the \emph{Kikuchi Hessian}, a generalization of the Bethe Hessian matrix that was introduced in the setting of community detection \cite{bethe-hessian}. The Bethe Hessian is the Hessian of the Bethe free energy with respect to the moments of the beliefs, evaluated at belief propagation's so-called ``uninformative fixed point.'' The bottom eigenvector of the Bethe Hessian is a natural estimator for the planted signal because it represents the best direction for local improvement of the Bethe free energy, starting from belief propagation's uninformative starting point. We generalize this method and compute the analogous Kikuchi Hessian matrix.
The full derivation is given in Appendix~\ref{app:hessian}. The order-$\ell$ symmetric difference matrix \eqref{symmetric_diff} (approximately) appears as a submatrix of the level-$r$ Kikuchi Hessian whenever $r \ge \ell + p/2$.

\section{Analysis of Symmetric Difference and Voting Matrices}
\label{sec:symm_diff}

We adopt the notation $\bx^{S} \defeq \prod_{i\in S} \bx_i$ for $\bx \in \{\pm 1\}^n$ and $S \subseteq [n]$. Recall the matrix $\bM$ indexed by sets $S \subseteq [n]$ of size $\ell$, having entries 
\begin{equation}
\bM_{S,T} = \bY_{S \symd T} \bone_{|S \symd T| = p}
\quad \text{where} \quad  
\bY_{S \symd T} = \lambda x_*^{S \symd T} + \bG_{S \symd T} \, . 
\end{equation}   
First, observe that we can restrict our attention to the case where the spike is the all-ones vector $x_* = \One$ without loss of generality. To see this, conjugate $\bM$ by a diagonal matrix $\bD$ with diagonal entries $\bD_{S,S} = x_*^S$ and obtain $(\bM')_{S,T} = (\bD^{-1}\bM \bD)_{S,T} = \bY'_{S \symd T} \bone_{|S \symd T| = p}$ where $\bY'_{S \symd T} = x_*^S x_*^T \bY_{S \symd T} = x_*^{S \symd T} \bY_{S \symd T} = \lambda + G'_{S \symd T}$ where $G'_{S \symd T} = x_*^{S}x_*^{T}G_{S \symd T}$. By symmetry of the Gaussian distribution, $(G'_{E})_{|E|=p}$ are i.i.d.\ $\sN(0,1)$ random variables.  
Therefore, the two matrices have the same spectrum and the eigenvectors of $\bM$ can be obtained from those of $\bM'$ by pre-multiplying by $\bD$. Similarly, the case $x_* = \One$ is sufficient for the analysis of the voting matrices. So from now on we write 
\begin{equation}\label{signal_noise}
\bM = \lambda \bX + \bZ,
\end{equation}
where $\bX_{S,T} = \bone_{|S \symd T|=p}$ and $\bZ_{S,T} = G_{S \symd T}\bone_{|S \symd T|=p}$, where $(G_{E})_{|E|=p}$ is a collection of i.i.d.\ $\sN(0,1)$ r.v.'s.

\subsection{Structure of \emph{X}}

The matrix $\bX$ is the adjacency matrix of a regular graph $J_{n,\ell,p}$ on ${n \choose \ell}$ vertices, where vertices are represented by sets, and two sets $S$ and $T$ are connected by an edge if $|S\symd T|=p$, or equivalently $|S \cap T| = \ell - p/2$. 
This matrix belongs to the Bose-Mesner algebra of the $(n,\ell)$-Johnson association scheme (see for instance~\cite{schrijver,godsil2010association}). This is the algebra of ${n \choose \ell} \times {n \choose \ell}$ real- or complex-valued symmetric matrices where the entry $\bX_{S,T}$ depends only on the size of the intersection $|S \cap T|$. In addition to this set of matrices being an algebra, it is a \emph{commutative} algebra, which means that all such matrices are simultaneously diagonalizable and share the same eigenvectors. 

Filmus~\cite{filmus} provides a common eigenbasis for this algebra: for $0 \le m \le \ell$, let $\varphi = (a_1,b_1,\ldots,a_m,b_m)$ be a sequence of $2m$ distinct elements of $[n]$. Let  $|\varphi|=2m$ denote its total length. Now define a vector $u^{\varphi} \in \RR^{{n \choose \ell}}$ having coordinates    
\[
u^\varphi_S = \prod_{i=1}^m (\bone_{a_i \in S} - \bone_{b_i \in S}) \, , \quad |S| = \ell \, .
\]
In the case $m=0$, $\varphi$ is the empty sequence $\varnothing$ and we have $u^\varnothing = \One$ (the all-ones vector).

\begin{proposition}\label{eigenspaces}
Each $u^\varphi$ is an eigenvector of $\bX$. Furthermore, the linear space $\mathcal{Y}_m :=\text{span}\{u^{\varphi} : |\varphi|=2m\}$ for $0 \le m \le \ell$ is an eigenspace of $\bX$ (i.e., all vectors $u^\varphi$ with sequences $\varphi$ of length of $2m$ have the same eigenvalue $\mu_m$). Lastly $\RR^{{n \choose \ell}}= \bigoplus_{m=0}^{\ell}\mathcal{Y}_m$, and $\dim \mathcal{Y}_m = {n \choose m} - {n \choose m-1}$. (By convention, ${n \choose -1}=0$.)
\end{proposition}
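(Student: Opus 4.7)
The plan is to combine the representation theory of $S_n$ with Filmus's explicit construction. The starting observation is that $\bX$ lies in the Bose--Mesner algebra of the Johnson scheme $J(n,\ell)$: the entries $\bX_{S,T}$ depend only on $|S \cap T|$ (equal to $\ell - p/2$ on the support, zero otherwise). Consequently $\bX$ commutes with the permutation action of $S_n$ on $\RR^{{n \choose \ell}}$ that sends the basis vector $e_S$ to $e_{\sigma(S)}$. Hence every eigenspace of $\bX$ is $S_n$-invariant.

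Next, I would invoke the classical decomposition of the $S_n$-module $\RR^{{n \choose \ell}}$ into irreducible components $\bigoplus_{m=0}^\ell V_m$, where $V_m$ is the Specht module associated to the partition $(n-m, m)$ and has dimension $\binom{n}{m} - \binom{n}{m-1}$. Since these irreducibles are pairwise non-isomorphic, Schur's lemma implies that $\bX$ acts as a scalar $\mu_m$ on each $V_m$. In particular every nonzero vector in $V_m$ is an eigenvector of $\bX$ with eigenvalue $\mu_m$, so once we identify $\mathcal{Y}_m$ with $V_m$ all three claims of the proposition follow at once.

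The remaining task is precisely this identification. For this I would appeal to the harmonic multilinear polynomial framework of Filmus~\cite{filmus}, which produces an explicit orthogonal basis of $V_m$ indexed by pairs of disjoint ordered tuples $(a_1,\ldots,a_m)$ and $(b_1,\ldots,b_m)$, whose elements are linear combinations of the vectors $u^\varphi$. In particular, one checks from the definition that each $u^\varphi$ with $|\varphi|=2m$ is a polynomial of degree $m$ in the coordinate indicators $\{\mathbf{1}_{i \in S}\}$ with zero expectation under the uniform measure on ${[n] \choose \ell}$ whenever restricted to any index, so it lives in the direct sum $V_0 \oplus \cdots \oplus V_m$; applying the antisymmetrization $\prod_i (\mathbf{1}_{a_i \in S} - \mathbf{1}_{b_i \in S})$ kills the components in $V_k$ for $k < m$, placing $u^\varphi$ entirely in $V_m$. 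This yields $\mathcal{Y}_m \subseteq V_m$, and the reverse inclusion follows from Filmus's theorem that his harmonic vectors (which are themselves in $\mathcal{Y}_m$) already span $V_m$. Together with orthogonality of distinct $V_m$, we deduce $\RR^{{n \choose \ell}} = \bigoplus_m \mathcal{Y}_m$ with the correct dimensions.

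The main obstacle is the last step: the family $\{u^\varphi : |\varphi|=2m\}$ has naive cardinality $n(n-1)\cdots(n-2m+1)$, which is vastly larger than $\binom{n}{m}-\binom{n}{m-1}$, so there are many linear dependencies to control. Filmus's construction handles this by isolating a harmonic subfamily (enforcing antisymmetry in each $(a_i,b_i)$ and additional orthogonality across indices $i$) and verifying by a direct combinatorial count that this subfamily has exactly the right cardinality and is linearly independent. An alternative, more hands-on route would be to compute $(\bX u^\varphi)_S$ directly by splitting over how many of the pairs $\{a_i,b_i\}$ meet $S \symd T$; this verifies that each $u^\varphi$ is an eigenvector with an eigenvalue $\mu_m$ depending only on $|\varphi|=2m$, but the precise dimension count $\binom{n}{m}-\binom{n}{m-1}$ would still have to be argued via representation theory or Filmus's harmonic reduction.
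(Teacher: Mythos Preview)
Your proposal is correct, but it is considerably more elaborate than what the paper does: the paper's proof consists entirely of two citations to Filmus~\cite{filmus} (Lemma~4.3 for the eigenvector and eigenspace claims, Lemma~2.1 for the dimension), with no further argument.

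Your route via the Bose--Mesner algebra and Schur's lemma is a genuinely different packaging. It has the virtue of explaining \emph{why} such a decomposition must exist---any $S_n$-equivariant operator on $\RR^{\binom{n}{\ell}}$ necessarily acts as a scalar on each irreducible $V_m$---before invoking Filmus only for the concrete identification $\mathcal{Y}_m = V_m$ and the dimension count. The paper's approach simply black-boxes all of this into the citation. One minor soft spot in your write-up: the sentence about ``zero expectation \ldots\ whenever restricted to any index'' and the assertion that antisymmetrization kills the $V_k$ components for $k<m$ are stated rather loosely; the clean way to say it is that $u^\varphi$ is (up to the identification of indicator functions with tabloids) a polytabloid for the partition $(n-m,m)$, which lies in the Specht module $V_m$ by construction. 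Either way, you still fall back on Filmus for the spanning and dimension statements, so in the end both proofs rest on the same external reference---yours just supplies the representation-theoretic scaffolding around it.
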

\begin{proof}
The first two statements are the content of Lemma 4.3 in~\cite{filmus}. The dimension of $\mathcal{Y}_m$ is given in Lemma 2.1 in~\cite{filmus}.
\end{proof}
We note that $(u^\varphi)_{|\varphi| = 2m}$ are not linearly independent; an orthogonal basis, called the Young basis, consisting of linear combinations of the $u^\varphi$'s is given explicitly in~\cite{filmus}.

We see from the above proposition that $\bX$ has $\ell+1$ distinct eigenvalues $\mu_{0} \ge \mu_{1}\ge \cdots\ge \mu_{\ell}$, each one corresponding to the eigenspace $\mathcal{Y}_m$. The first eigenvalue is the degree of the graph $J_{n,\ell,p}$:  
\begin{equation}\label{eq:mu0}
\mu_0 = d_\ell = {\ell \choose p/2}{n-\ell \choose p/2} \, .
\end{equation}
We provide an explicit formula for all the remaining eigenvalues: 
\begin{lemma}
\label{lem:eberlein}
The eigenvalues of $\bX$ are as follows:
\begin{equation}
\label{eq:eig-mu}
\mu_m = \sum_{s=0}^{\min(m,p/2)} (-1)^s {m \choose s} {\ell-m \choose p/2-s} {n-\ell-m \choose p/2-s} \, , 
\quad 0 \le m \le \ell \, . 
\end{equation}
\end{lemma}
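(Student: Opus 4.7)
My plan is to exploit Proposition~\ref{eigenspaces}, which tells us that all vectors $u^\varphi$ with $|\varphi|=2m$ share the same eigenvalue $\mu_m$. So it suffices to pick one convenient $\varphi$ and one convenient $S$ with $u^\varphi_S \ne 0$, and explicitly compute $(\bX u^\varphi)_S / u^\varphi_S$.

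Concretely, I would take $\varphi = (1,2,3,4,\ldots,2m-1,2m)$ (so the pairs are $(a_i,b_i)=(2i-1,2i)$), and choose $S$ to contain all of $a_1,\ldots,a_m$, none of $b_1,\ldots,b_m$, together with $\ell-m$ arbitrary elements outside $\{1,\ldots,2m\}$. Then $u^\varphi_S = \prod_{i=1}^m (1 - 0) = 1$. Neighbors $T$ of $S$ in the Johnson graph $J_{n,\ell,p}$ are exactly the sets of the form $T = (S\setminus A) \cup B$ with $A \subseteq S$, $B \subseteq [n]\setminus S$, $|A|=|B|=p/2$. For such a $T$,
\[
u^\varphi_T = \prod_{i=1}^m \bigl(\bone_{a_i\notin A} - \bone_{b_i\in B}\bigr),
\]
which vanishes unless, for each pair $i$, either ``$a_i\notin A$ and $b_i\notin B$'' (contributing $+1$) or ``$a_i\in A$ and $b_i\in B$'' (contributing $-1$).

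Next I would parametrize the surviving $T$'s by the number $s \in \{0,\ldots,\min(m,p/2)\}$ of pairs of the second type (``swapped'' pairs). Counting the choices: ${m\choose s}$ ways to choose which $s$ pairs are swapped (this forces $s$ specific elements into $A$ and $s$ specific elements into $B$); then ${\ell-m \choose p/2-s}$ ways to fill the remaining $p/2-s$ slots of $A$ from $S\setminus\{a_1,\ldots,a_m\}$; and ${n-\ell-m \choose p/2-s}$ ways to fill the remaining $p/2-s$ slots of $B$ from $([n]\setminus S)\setminus\{b_1,\ldots,b_m\}$. Each such $T$ contributes $(-1)^s$, so
\[
(\bX u^\varphi)_S = \sum_{s=0}^{\min(m,p/2)} (-1)^s {m\choose s}{\ell-m \choose p/2-s}{n-\ell-m \choose p/2-s},
\]
and dividing by $u^\varphi_S = 1$ gives the claimed formula for $\mu_m$.

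There is no real obstacle here: the result is a direct counting computation once the right $S$ is fixed so that $u^\varphi_S\ne 0$ and the two ``compatible'' cases per pair are identified. The only thing requiring a bit of care is making sure the forced memberships from swapped pairs do not collide with the free choices (they don't, because the $a_i$'s and $b_i$'s are disjoint from $S\setminus\{a_1,\ldots,a_m\}$ and from $([n]\setminus S)\setminus\{b_1,\ldots,b_m\}$ respectively), and verifying that the upper limit $\min(m,p/2)$ is enforced automatically by the vanishing of the binomial coefficients ${m\choose s}$ and ${\ell-m\choose p/2-s}$, ${n-\ell-m\choose p/2-s}$ outside the stated range. As a sanity check, setting $m=0$ recovers $\mu_0 = {\ell\choose p/2}{n-\ell\choose p/2} = d_\ell$, matching \eqref{eq:mu0}.
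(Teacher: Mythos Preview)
Your proposal is correct and is essentially the same argument as the paper's own proof: both fix a convenient $\varphi$ and an $S$ containing all the $a_i$'s and none of the $b_i$'s, then count neighbors $T$ with $u^\varphi_T\ne 0$ by grouping them according to the number $s$ of pairs $(a_i,b_i)$ that get ``swapped'' from $S$ to $T$, yielding the three binomial factors and the sign $(-1)^s$. The only difference is cosmetic: the paper uses $A,B$ to denote $\{a_i\},\{b_i\}$ whereas you use $A,B$ for the removed/added halves of $S\symd T$, but the counting is identical.
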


\begin{proof}
These are the so-called Eberlein polynomials, which are polynomials in $m$ of degree $p$ (see, e.g.,~\cite{schrijver}). We refer to~\cite{burcroffjohnson} for formulae in more general contexts, but we give a proof here for completeness. Let $A=\{a_1,\ldots,a_m\}$ and $B=\{b_1,\ldots,b_m\}$. Note that $u^\varphi_S$ is nonzero if and only if $|S \cap \{a_i,b_i\}|=1$ for each $1 \le i \le m$. By symmetry, we can assume that $A \subseteq S$ and $S \cap B=\varnothing$. Then,
\[ \mu_n = \sum_{T \,:\, |T| = \ell, \, |S \symd T|=p, \, |T \cap \{a_i,b_i\}|=1 \; \forall i} (-1)^{|T \cap B|}. \]
Letting $s = |T \cap B|$, we will count the number of sets $T$ appearing in the sum for a fixed value of $s$. First, there are $\binom{m}{s}$ ways to choose which elements of $A \sqcup B$ belong to $T$. Next, to achieve $|S \symd T|=p$, $T$ must include exactly $p/2-s$ elements of $\overline{S \cup B}$, and there are $\binom{n-\ell-m}{p/2-s}$ way to choose these. Finally, the remaining $\ell-m-p/2+s$ elements of $T$ must come from $S \setminus A$, and there are $\binom{\ell-m}{\ell-m-p/2+s} = \binom{\ell-m}{p/2-s}$ ways to choose these. Also note that we must have $s \le m$ and $s \le p/2$, giving the bounds for the sum.
\end{proof}

As the following lemma shows, the succeeding eigenvalues decay rapidly with $m$.

\begin{lemma}\label{lem:eig-gap} 
Let $3 \le p \le \sqrt{n}$ and let $\ell < n/p^2$.  
For all $0 \le m \le \ell-p/2$ it holds that
\[\frac{|\mu_m|}{\mu_0} \le \left( 1-\frac{m}{\ell} \right)^{p/2} \, .\]
\end{lemma}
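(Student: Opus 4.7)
The plan is to analyze the alternating sum $\mu_m = \sum_{s}(-1)^s T_s$ where
\[T_s := \binom{m}{s}\binom{\ell-m}{p/2-s}\binom{n-\ell-m}{p/2-s} \ge 0,\]
and bound $|\mu_m|$ by its leading nonzero term. Setting $s_0 := \max\{0,\; p/2 - (\ell-m)\}$, one checks that $T_s = 0$ for $s < s_0$ while $T_{s_0} > 0$. This gives a clean split into Case A ($m \le \ell - p/2$, so $s_0 = 0$) and Case B ($m > \ell - p/2$, so $s_0 = m - \ell + p/2$; write $t := \ell - m < p/2$).

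The first key step is to show geometric decay $T_{s+1}/T_s \le 1/2$ for $s \ge s_0$. Writing the ratio explicitly as
\[\frac{T_{s+1}}{T_s} = \frac{(m-s)(p/2-s)^2}{(s+1)\,(\ell-m-p/2+s+1)\,(n-\ell-m-p/2+s+1)},\]
the numerator is at most $\ell(p/2)^2$; the factors $(s+1)$ and $(\ell-m-p/2+s+1)$ are each $\ge 1$ for $s \ge s_0$ (this is precisely how $s_0$ is defined); and the last factor exceeds $n - 2\ell \ge n/2$, since $\ell < n/p^2$ and $p\ge 3$ force $\ell + p/2 \le n/2$. Thus $T_{s+1}/T_s < \ell p^2/(2n) < 1/2$, and the alternating-series estimation theorem gives $|\mu_m| \le T_{s_0}$.

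The second key step is to bound $T_{s_0}/\mu_0$ in each case. In Case A, dropping the factor $\binom{n-\ell-m}{p/2}/\binom{n-\ell}{p/2} \le 1$ leaves $T_0/\mu_0 \le \binom{\ell-m}{p/2}/\binom{\ell}{p/2} = \prod_{i=0}^{p/2-1}\bigl(1 - m/(\ell-i)\bigr) \le (1-m/\ell)^{p/2}$ (each factor is maximized at $i=0$). In Case B, factor $T_{s_0}/\mu_0$ as a hypergeometric probability $[\binom{m}{p/2-t}\binom{t}{t}/\binom{\ell}{p/2}] \le 1$ times $\binom{n-\ell-m}{t}/\binom{n-\ell}{p/2}$. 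Bound the latter by $\binom{n-\ell}{t}/\binom{n-\ell}{p/2}$ and use the identity $\binom{N}{p/2}\binom{p/2}{t} = \binom{N}{t}\binom{N-t}{p/2-t}$ (with $N = n-\ell$) to rewrite it as $\binom{p/2}{t}/\binom{n-\ell-t}{p/2-t}$. This final ratio is a product of $p/2 - t \ge 1$ factors of the form $(p/2-i+1)/(n-\ell-t-i+1)$, each bounded by $(p/2)/(n-\ell-p/2+1) \le p/n$, so the product is at most $(p/n)^{p/2-t} \le p/n$. Combining the two cases gives $|\mu_m|/\mu_0 \le \max\{(1-m/\ell)^{p/2},\; p/n\}$.

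The main obstacle I anticipate is the Case B bookkeeping: selecting the right binomial identity and estimating each factor of the resulting product so that one lands on exactly $p/n$ (rather than a larger constant times $p/n$), since loose estimates earlier would multiply unfavorably across the $p/2 - t$ factors and yield a suboptimal bound.
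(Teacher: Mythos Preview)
Your proposal is correct and follows essentially the same approach as the paper: split into the two cases $m \le \ell - p/2$ and $m > \ell - p/2$, show the terms of the alternating sum decrease in absolute value so that $|\mu_m|$ is bounded by the first nonzero term $T_{s_0}$, and then bound $T_{s_0}/\mu_0$ by $(1-m/\ell)^{p/2}$ in the first case and by $(p/n)^{p/2-\ell+m}\le p/n$ in the second. The only cosmetic differences are that you package the ratio bound uniformly via $s_0$ and phrase the first factor in Case~B as a hypergeometric probability (Vandermonde), whereas the paper directly uses $\binom{m}{\ell-p/2}\le \binom{\ell}{\ell-p/2}=\binom{\ell}{p/2}$; both routes land on the same quantity $\binom{n-\ell}{\ell-m}/\binom{n-\ell}{p/2}$ and the same final estimate.
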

\begin{proof}
The terms in~\eqref{eq:eig-mu} have alternating signs. We will show that they decrease in absolute value beyond the first nonzero term, so that this gives a bound on $\mu_m$. Since $m \le \ell-p/2$, the $s=0$ term is positive. The $(s+1)$st term divided by the $s$th term is, in absolute value, 
\begin{align*}
\frac{{m \choose s+1} {\ell-m \choose p/2-s-1} {n-\ell-m \choose p/2-s-1}}{{m \choose s} {\ell-m \choose p/2-s} {n-\ell-m \choose p/2-s}}
&= \left( \frac{m-s}{s+1} \right)
\left( \frac{p/2-s}{\ell-m-p/2+s+1} \right) 
\left( \frac{p/2-s}{n-\ell-m-p/2+s+1} \right) \\ 
&\le \frac{m (p/2)(p/2)}
{(\ell-m-p/2+1) (n-\ell-m-p/2+1)} \\
&\le \frac{(\ell-p/2)(p/2)(p/2)}
{n-2\ell+1}  \quad \text{since $m \le \ell-p/2$} \\
 &\le \frac{\ell p^2/4}
{n-2\ell} < 1 \, .
\end{align*}
We used the assumptions $3 \le p \le \sqrt{n}$ and $\ell p^2 \le n$ to obtain the last bound.
It follows that the $s=0$ term is an upper bound, 
\[
\mu_m \le {\ell-m \choose p/2} {n-\ell-m \choose p/2} \, , 
\]
and so 
\[
\frac{\mu_m}{\mu_0} 
\le {\ell-m \choose p/2} \!\left\slash {\ell \choose p/2} \right. 
\le \left( \frac{\ell-m}{\ell} \right)^{\!p/2} \, .
\]
\end{proof}

\subsection{Proof of Strong Detection}

Here we prove our strong detection result, Theorem~\ref{detection_result}. The proof does not exploit the full details of the structure exhibited above. Instead, the proof is a straightforward application of the matrix Chernoff bound for Gaussian series~\cite{oliveira} (see also Theorem~4.1 of~\cite{tropp}):
\begin{theorem}\label{thm:matrix-conc}
Let $\{\bA_i\}$ be a finite sequence of fixed symmetric $d \times d$ matrices, and let $\{\xi_i\}$ be independent $\sN(0,1)$ random variables. Let ${\bm \Sigma} = \sum_i \xi_i \bA_i$. Then, for all $t \ge 0$,
\[\PP\left( \|{\bm \Sigma}\|_{\textup{op}}\ge t\right) \le 2d e^{-t^2/2\sigma^2} \quad\text{where } \sigma^2 =  \left\| \EE[{\bm \Sigma}^2]\right\|_{\textup{op}}.\]
Furthermore, the same bound holds if $\{\xi_i\}$ is a sequence of independent Rademacher random variables, i.e., $\pm 1$-valued with equal probabilities.  
\end{theorem}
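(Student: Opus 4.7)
The plan is to follow the standard matrix Laplace transform (Chernoff) argument, with the crucial ingredient being Lieb's concavity theorem to decouple the independent Gaussian summands inside a trace exponential. First, I would reduce the operator-norm bound to a bound on the maximum eigenvalue: since $\|\bm{\Sigma}\|_{\textup{op}} = \max(\lambda_{\max}(\bm{\Sigma}), \lambda_{\max}(-\bm{\Sigma}))$ and $-\bm{\Sigma} = \sum_i (-\xi_i)\bA_i$ has the same distribution as $\bm{\Sigma}$ by symmetry of the Gaussian law, a union bound gives $\PP(\|\bm{\Sigma}\|_{\textup{op}} \ge t) \le 2 \PP(\lambda_{\max}(\bm{\Sigma}) \ge t)$. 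So it suffices to prove $\PP(\lambda_{\max}(\bm{\Sigma}) \ge t) \le d\, e^{-t^2/2\sigma^2}$.

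Next I would set up the matrix Chernoff inequality. For any $\theta > 0$, Markov's inequality applied to the increasing function $s \mapsto e^{\theta s}$, together with $e^{\theta \lambda_{\max}(\bm{\Sigma})} \le \operatorname{tr}\, e^{\theta \bm{\Sigma}}$, yields
\[
\PP\bigl(\lambda_{\max}(\bm{\Sigma}) \ge t\bigr) \le e^{-\theta t}\, \EE\bigl[\operatorname{tr}\, e^{\theta \bm{\Sigma}}\bigr].
\]
The heart of the argument is to bound $\EE[\operatorname{tr}\, e^{\theta \bm{\Sigma}}]$ using Lieb's concavity theorem, which states that for any fixed symmetric $\bH$, the map $\bA \mapsto \operatorname{tr}\,\exp(\bH + \log \bA)$ is concave on the positive-definite cone. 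Iteratively conditioning on the $\xi_j$ for $j \ne i$ and applying Jensen's inequality via Lieb yields Tropp's master inequality for independent symmetric random matrices $\bX_i$:
\[
\EE\bigl[\operatorname{tr}\, e^{\theta \sum_i \bX_i}\bigr] \le \operatorname{tr}\,\exp\!\Bigl(\sum_i \log \EE[e^{\theta \bX_i}]\Bigr).
\]

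Third, I would compute the matrix cumulant generating function of each summand $\bX_i = \xi_i \bA_i$. Since $\xi_i \sim \sN(0,1)$ and $\bA_i$ is a fixed symmetric matrix, diagonalizing $\bA_i$ and using the scalar identity $\EE[e^{\theta \xi_i a}] = e^{\theta^2 a^2/2}$ in each eigen-direction gives $\EE[e^{\theta \xi_i \bA_i}] = e^{\theta^2 \bA_i^2/2}$, hence $\log \EE[e^{\theta \bX_i}] = \tfrac{\theta^2}{2}\bA_i^2$. Noting that $\EE[\bm{\Sigma}^2] = \sum_i \bA_i^2$ (the cross terms vanish by independence and $\EE[\xi_i]=0$), we obtain
\[
\EE\bigl[\operatorname{tr}\, e^{\theta \bm{\Sigma}}\bigr] \le \operatorname{tr}\,\exp\!\Bigl(\tfrac{\theta^2}{2}\, \EE[\bm{\Sigma}^2]\Bigr) \le d\, \exp\!\Bigl(\tfrac{\theta^2}{2}\, \sigma^2\Bigr),
\]
where the last inequality uses $\operatorname{tr}\, e^{\bH} \le d\, e^{\lambda_{\max}(\bH)}$ and the fact that $\EE[\bm{\Sigma}^2]$ is positive semidefinite, so $\lambda_{\max}(\EE[\bm{\Sigma}^2]) = \|\EE[\bm{\Sigma}^2]\|_{\textup{op}} = \sigma^2$.

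Finally, I would optimize the Chernoff bound. Combining the estimates gives $\PP(\lambda_{\max}(\bm{\Sigma}) \ge t) \le d\, \exp(\tfrac{\theta^2 \sigma^2}{2} - \theta t)$; choosing $\theta = t/\sigma^2$ (which minimizes the exponent) yields $d\, e^{-t^2/2\sigma^2}$, and the union bound over $\pm \bm{\Sigma}$ produces the stated factor of $2d$. The only nontrivial step is Lieb's concavity theorem, which I would invoke as a black box; everything else reduces to standard scalar Gaussian MGF computations and routine optimization.
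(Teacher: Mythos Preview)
Your argument is correct and is precisely the standard proof via Lieb's concavity and Tropp's master inequality. Note, however, that the paper does not supply its own proof of this statement: it quotes the result directly from \cite{oliveira} and \cite{tropp} (Theorem~4.1.1) as a black box, so there is no paper proof to compare against beyond observing that your sketch reproduces the cited one.
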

  
Let us first write $\bM$ in the form of a Gaussian series. For a set $E \in {[n]\choose p}$, define the ${n \choose \ell} \times {n \choose \ell}$ matrix $\bA_{E}$ as
\[(\bA_{E})_{S,T} = \bone_{S \symd T = E} \, . \]   
It is immediate that for $\lambda=0$, $\bM = \bZ = \sum_{|E|=p} g_E \bA_E$ where $(g_E)_{|E|=p}$ is a collection of i.i.d.\ $\sN(0,1)$ random variables. The second moment of this random matrix is
\[\EE[\bM^2]= \sum_{E\,:\,|E|=p} \bA_E^2 = d_\ell \bI \, , \]
since $\bA_E^2$ is the diagonal matrix with $(\bA_E)_{S,S}=\bone_{|S \symd E|=p}$, and summing over all $E$ gives $d_\ell$ on the diagonal. 
The operator norm of the second moment is then $d_\ell$. It follows that for all $t \ge 0$,
\begin{equation}\label{eq:Z-bound}
\PP_{0}\big(\lambda_{\max}(\bM) \ge t\big) \le 2{n \choose \ell} e^{-t^2/2d_\ell}\le 2n^\ell e^{-t^2/2d_\ell}.
\end{equation}
Now letting $t= \frac{\lambda d_\ell}{2}$ yields the first statement of the theorem. 

As for the second statement, we have $\lambda_{\max}(\bM) \ge \|\bX\|_{\textup{op}} - \|\bZ\|_{\textup{op}} = \lambda d_{\ell} - \|\bZ\|_{\textup{op}}$ where $\bZ$ is defined in~\eqref{signal_noise}. Applying the same bound we have
\[\PP_{\lambda}\Big(\lambda_{\max}(\bM) \le \frac{\lambda d_\ell}{2}\Big) \le \PP\Big(\|\bZ\|_{\textup{op}} \ge  \frac{\lambda d_\ell}{2}\Big)
 \le 2n^{\ell}e^{- \lambda^2 d_\ell / 8}.\]

\subsection{Proof of Strong Recovery}
Here we prove our strong recovery result, Theorem~\ref{recovery_result}.  
Let $v_0 = \bv_{\textup{top}}(\bM)$ be a unit-norm leading eigenvector of $\bM$. For a fixed $m \in [\ell]$ to be determined later on, we write the orthogonal decomposition $v_0 = v^{(m)}+v^{\perp}$, where $v^{(m)} \in \bigoplus_{s \le m} \mathcal{Y}_s$, and $v^{\perp}$ in the orthogonal complement.
The goal is to first show that if $m$ is proportional to $\ell$ then $v^\perp$ has small Euclidean norm, so that $v_0$ and $v^{(m)}$ have high inner product. The second step of the argument is to approximate the  voting matrix $\bV(v_0)$ by $\bV(v^{(m)})$, and then use Fourier-analytic tools to reason about the latter. 

Let us start with the first step. 
\begin{lemma}\label{orthogonal_noise}
With $\bZ$ defined as in~\eqref{signal_noise}, we have \[\big\|v^{\perp}\big\|^2 \le 2\frac{\|\bZ\|_{\textup{op}}}{\lambda(\mu_0 - \mu_{m+1})}.\]
 \end{lemma}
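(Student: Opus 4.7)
The idea is a standard variational/Davis--Kahan style argument. Write $v_0$ in the eigenbasis of $\bX$: since $\mathbb{R}^{\binom{n}{\ell}} = \bigoplus_{s=0}^{\ell} \mathcal{Y}_s$ and all the $\mathcal{Y}_s$ are eigenspaces of $\bX$ with eigenvalues $\mu_0 \ge \mu_1 \ge \cdots \ge \mu_\ell$ (Proposition \ref{eigenspaces}), decompose $v_0 = \sum_{s=0}^\ell P_s v_0$ where $P_s$ is the orthogonal projector onto $\mathcal{Y}_s$. Then $v^{(m)} = \sum_{s=0}^{m} P_s v_0$ and $v^\perp = \sum_{s=m+1}^{\ell} P_s v_0$, and $\|v_0\|^2 = \|v^{(m)}\|^2 + \|v^\perp\|^2 = 1$.

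\textbf{Upper bound on the Rayleigh quotient of $\bM$ at $v_0$.} Because the $\mu_s$ are non-increasing,
\[
v_0^\top \bX v_0 = \sum_{s=0}^{\ell} \mu_s \|P_s v_0\|^2 \le \mu_0 \|v^{(m)}\|^2 + \mu_{m+1} \|v^\perp\|^2 = \mu_0 - (\mu_0 - \mu_{m+1})\|v^\perp\|^2.
\]
Combined with $v_0^\top \bZ v_0 \le \|\bZ\|_{\textup{op}}$, this gives
\[
\lambda_{\max}(\bM) = v_0^\top \bM v_0 \le \lambda \mu_0 - \lambda(\mu_0 - \mu_{m+1})\|v^\perp\|^2 + \|\bZ\|_{\textup{op}}.
\]

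\textbf{Lower bound on $\lambda_{\max}(\bM)$.} Use the variational characterization with the top eigenvector $e_0$ of $\bX$ (the normalized all-ones vector $\One/\sqrt{\binom{n}{\ell}} \in \mathcal{Y}_0$, with eigenvalue $\mu_0$):
\[
\lambda_{\max}(\bM) \ge e_0^\top \bM e_0 = \lambda \mu_0 + e_0^\top \bZ e_0 \ge \lambda \mu_0 - \|\bZ\|_{\textup{op}}.
\]

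\textbf{Combining.} Chaining the two bounds yields
\[
\lambda \mu_0 - \|\bZ\|_{\textup{op}} \le \lambda \mu_0 - \lambda(\mu_0 - \mu_{m+1})\|v^\perp\|^2 + \|\bZ\|_{\textup{op}},
\]
and rearranging gives the claimed inequality $\|v^\perp\|^2 \le 2\|\bZ\|_{\textup{op}}/\big(\lambda(\mu_0 - \mu_{m+1})\big)$, provided the spectral gap $\mu_0 - \mu_{m+1}$ is positive (which is the case for $m < \ell$ by Lemma \ref{lem:eig-gap}). There is essentially no obstacle here: the proof is a two-line sandwiching argument. The only subtlety is that it requires $\bX$ and $\bZ$ to be simultaneously accessible in the operator-norm sense and that the top eigenvalue of $\bX$ be attained on a single eigenspace, both of which are guaranteed by the structural results on $\bX$ established in the Johnson-scheme analysis. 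The real work for recovery will come later, when one uses this bound on $\|v^\perp\|^2$ together with Fourier-analytic control on $\bV(v^{(m)})$ to argue that the voting matrix's top eigenvector correlates with $\bx_*$.
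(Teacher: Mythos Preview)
Your proof is correct and is essentially the same sandwiching argument the paper uses: compare the Rayleigh quotient of $\bM$ at the top eigenvector $v_0$ with its value at the top eigenvector $e_0=u_0$ of $\bX$, decompose $v_0^\top \bX v_0$ along the eigenspaces $\mathcal{Y}_s$, and bound each $\bZ$-term by $\|\bZ\|_{\textup{op}}$. The only cosmetic difference is that the paper absorbs the factor $\lambda$ into $\bX$ before carrying out the computation, while you keep it explicit.
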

 \begin{proof}
Let us absorb the factor $\lambda$ in the definition of the matrix $\bX$. Let $\{u_0,\cdots,u_d\}$ be a set of eigenvectors of $\bX$ which also form an orthogonal basis for $\bigoplus_{s \le m} \mathcal{Y}_s$, with $u_0$ being the top eigenvector of $\bX$ ($u_0$ is the normalized all-ones vector).  
We start with the inequality $u_0^\top \bM u_0 \le v_0^\top \bM v_0$.
The left-hand side of the inequality is $\mu_0 + u_0^\top \bZ u_0$.
The right-hand side is $v_0^\top \bX v_0 + v_0^\top \bZ v_0$. Moreover $v_0^\top \bX v_0 = v_0^{\top}\bX v^{(m)} +  v_0^{\top}\bX v^{\perp}$. Since $v^{(m)} \in \bigoplus_{s \le m}\mathcal{Y}_s$, by Proposition~\ref{eigenspaces}, $\bX v^{(m)} $ belongs to the space as well, and therefore $v_0^{\top}\bX v^{(m)} = v^{(m)\top}\bX v^{(m)}$. Similarly $v_0^{\top}\bX v^{\perp} = (v^{\perp})^{\top}\bX v^{\perp}$, so $v_0^\top \bX v_0 = v^{(m)\top}\bX v^{(m)} + (v^{\perp})^{\top}\bX v^{\perp}$.
Therefore the inequality becomes 
\[\mu_0 + u_0^\top \bZ u_0 \le v^{(m)\top}\bX v^{(m)}+ (v^{\perp})^{\top}\bX v^{\perp} + v_0^\top \bZ v_0.\]
Since $v^\perp$ is orthogonal to the top $m$ eigenspaces of $\bX$ we have $(v^{\perp})^{\top}\bX v^{\perp} \le \mu_{m+1} \big\|v^{\perp}\big\|_2^2$. Moreover, $v^{(m)\top}\bX v^{(m)} \le \mu_0 \big\|v^{(m)}\big\|^2$, hence
\[\mu_0 + u_0^\top \bZ u_0 \le \mu_0 \big\|v^{(m)}\big\|^2+ \mu_{m+1} \big\|v^{\perp}\big\|^2 + v_0^\top \bZ v_0.\]
By rearranging and applying the triangle inequality we get,
\[(\mu_0 - \mu_{m+1}) \big\|v^{\perp}\big\|^2   \le |v_0^\top \bZ v_0| + |u_0^\top \bZ u_0| \le 2 \|\bZ\|_{\textup{op}}.\qedhere\]
\end{proof}

Combining this fact with Lemma~\ref{lem:eig-gap} and recalling that $\mu_0=d_\ell$, we obtain the following result: 
\begin{lemma}\label{lem:v_perpendicular}
For any $\epsilon>0$ and $\delta\in (0,1)$, if $\lambda \ge \epsilon^{-1} \sqrt{2\log(n^\ell/\delta) / d_\ell}$, and $m+1 \le \ell - p/2$ then
\[\big\|v^{\perp}\big\|^2 \le \epsilon\frac{\ell}{m+1}\]
with probability at least $1-\delta$.
\end{lemma}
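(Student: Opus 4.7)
The plan is to assemble the three ingredients that are already essentially in place: Lemma~\ref{orthogonal_noise} which controls $\|v^\perp\|^2$ in terms of the noise operator norm $\|\bZ\|_{\textup{op}}$ and the eigenvalue gap $\mu_0 - \mu_{m+1}$ of $\bX$; Lemma~\ref{lem:eig-gap} to lower-bound that gap; and the Gaussian matrix Chernoff bound (as already used in the proof of Theorem~\ref{detection_result}) to upper-bound $\|\bZ\|_{\textup{op}}$. Nothing new is needed beyond carefully matching the scaling of these two quantities against the assumption on $\lambda$.

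First I would lower-bound the gap. Recall $\mu_0 = d_\ell$. Lemma~\ref{lem:eig-gap} gives $|\mu_{m+1}|/\mu_0 \le \max\{(1-(m+1)/\ell)^{p/2},\, p/n\}$, so $\mu_0 - \mu_{m+1} \ge \mu_0 \bigl(1-|\mu_{m+1}|/\mu_0\bigr)$. In the regime where the first term of the maximum dominates, I would use that for $p\ge 2$ the map $x\mapsto (1-x)^{p/2}$ is convex on $[0,1]$ and agrees with $1-x$ at $x=0,1$, hence $(1-x)^{p/2}\le 1-x$ and therefore $1-(1-x)^{p/2} \ge x$; applied at $x=(m+1)/\ell$ this yields $\mu_0 - \mu_{m+1} \ge d_\ell\, m/\ell$. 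In the other regime ($p/n$ dominant), the gap is at least $\mu_0(1-p/n) \ge \mu_0/2$ for $n\ge 2p$, which, since $m\le \ell$, is also $\ge d_\ell\, m/(2\ell)$. In either case $\mu_0 - \mu_{m+1} \gtrsim d_\ell\, m/\ell$.

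Second, the control on $\|\bZ\|_{\textup{op}}$ is already present: the argument leading to~\eqref{eq:Z-bound} gives $\PP(\|\bZ\|_{\textup{op}} \ge t) \le 2 n^\ell e^{-t^2/(2d_\ell)}$, so choosing $t$ to make the right-hand side equal to $\delta$ yields $\|\bZ\|_{\textup{op}} \le \sqrt{2 d_\ell \log(2n^\ell/\delta)}$ with probability at least $1-\delta$. Plugging this and the gap bound into Lemma~\ref{orthogonal_noise} gives
$$\|v^\perp\|^2 \;\le\; \frac{2\|\bZ\|_{\textup{op}}}{\lambda\,(\mu_0-\mu_{m+1})} \;\lesssim\; \frac{\ell}{\lambda\, m}\sqrt{\frac{\log(n^\ell/\delta)}{d_\ell}},$$
and substituting $\lambda \ge \epsilon^{-1}\sqrt{2\log(n^\ell/\delta)/d_\ell}$ yields $\|v^\perp\|^2 \lesssim \epsilon\,\ell/m$, which is the claim (after absorbing universal constants into $\epsilon$, or equivalently shrinking the threshold on $\lambda$ by a constant).

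I do not expect a serious obstacle here: the whole proof is a one-line combination of the pieces above. The only delicate point is to make sure the gap really scales like $d_\ell\, m/\ell$ rather than something weaker, since this precise scaling is what produces the $\ell/m$ factor (as opposed to just an $O(1)$ factor) on the right-hand side of the lemma; this is why the convexity inequality $(1-x)^{p/2}\le 1-x$ for $p\ge 2$ is the key estimate, and why both cases in the maximum of Lemma~\ref{lem:eig-gap} must be handled.
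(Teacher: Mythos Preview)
Your proposal is correct and follows essentially the same route as the paper: combine Lemma~\ref{orthogonal_noise}, the eigenvalue gap bound from Lemma~\ref{lem:eig-gap} (yielding $\mu_0-\mu_{m+1}\ge d_\ell\, m/\ell$), and the matrix Chernoff bound on $\|\bZ\|_{\textup{op}}$, then substitute the hypothesis on $\lambda$. If anything you are slightly more careful than the paper, which tacitly uses only the first branch of the maximum in Lemma~\ref{lem:eig-gap} and silently absorbs the stray factor of~$2$.
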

\begin{proof}
Using Lemma~\ref{lem:eig-gap} implies $\frac{1}{\mu_0 - \mu_{m+1}} \le \frac{1}{\mu_0} \frac{1}{1-(1-\frac{m+1}{\ell})^{p/2}} \le \frac{1}{\mu_0} \cdot  \frac{\ell}{m+1}$. To obtain the last bound we use the inequalities  $\frac{1}{1-(1-x)^\alpha} \le \frac{1}{1-e^{-\alpha x}} < \frac{1}{x}$, the last one, obtained by simple calculus, is valid for all $\alpha>1$ and $0 < x < x_0(\alpha) = \alpha^{-1}\log \alpha$. We let $x = (m+1)/\ell$ and $\alpha = p/2>1$. 
Therefore, Lemma~\ref{orthogonal_noise} implies
\[\big\|v^{\perp}\big\|^2 \le \frac{\|\bZ\|_{\textup{op}}}{\lambda d_{\ell}}  \frac{\ell}{m+1}.\]
The operator norm of the noise can be bounded by a matrix Chernoff bound~\cite{oliveira,tropp}, similarly to our argument in the proof of detection: for all $t \ge 0$,
 \[\PP(\|\bZ\|_{\textup{op}} \ge t) \le n^\ell e^{-t^2/2d_\ell}.\]  
Therefore, letting $\lambda \ge \epsilon^{-1}\sqrt{2\log(n^\ell/\delta)/d_{\ell}}$ we obtain the desired result. 
 \end{proof}
 
\subsubsection{Analysis of the Voting Matrix} 
Recall that the voting matrix $\bV(v)$ of a vector $v \in \RR^{{n \choose \ell}}$ has zeros on the diagonal, and off-diagonal entries  
\begin{align*}
\bV_{ij}(v) &= \frac{1}{2}\sum_{S,T \in  {[n] \choose \ell}} v_S v_T \bone_{S \symd T = \{i,j\}} \\
&= \sum_{S\in  {[n] \choose \ell}} v_S v_{S \symd \{i,j\}} \bone_{i \in S, j\notin S} \, , \quad 1 \le i \neq j \le n \, .
\end{align*}
It will be more convenient in our analysis to work with $\bV(v^{(m)})$ instead of $\bV(v_0)$. To this end we produce the following approximation result:
\begin{lemma}\label{lem:voting-noise}
Let $u,e \in \RR^{n \choose \ell}$ and $v = u + e$. Then
\[\|\bV(v) - \bV(u)\|_F^2 \le 3 \ell^2 \|e\|^2 (2 \|u\|^2 + \|e\|^2).\]
In particular,
\[\big\|\bV(v_0) - \bV(v^{(m)})\big\|_F^2 \le 9 \ell^2 \big\|v^{\perp}\big\|^2.\]
\end{lemma}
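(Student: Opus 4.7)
\textbf{Proof plan for Lemma~\ref{lem:voting-noise}.} The plan is to expand the difference $\bV(v)-\bV(u)$ into its bilinear components, bound each entry-squared via Cauchy--Schwarz, and then exchange the order of summation so that a clean combinatorial count gives the factor $\ell^2$.

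\emph{Step 1: Bilinear expansion.} Since each off-diagonal entry of the voting matrix is a quadratic form in its argument, I would first note that
\[
\bV_{ij}(w) \;=\; \sum_{S\,:\,i\in S,\,j\notin S} w_S\, w_{S\symd\{i,j\}} \, ,
\]
and introduce the bilinear version $B_{ij}(a,b)\defeq \sum_{S\,:\,i\in S,\,j\notin S} a_S\, b_{S\symd\{i,j\}}$. Substituting $v=u+e$ and expanding the product $v_S v_{S\symd\{i,j\}}$ gives
\[
\bV_{ij}(v)-\bV_{ij}(u) \;=\; B_{ij}(u,e)+B_{ij}(e,u)+B_{ij}(e,e) \, .
\]
Squaring and applying $(x+y+z)^2\le 3(x^2+y^2+z^2)$ reduces matters to bounding $\sum_{i\neq j} B_{ij}(a,b)^2$ for each of the three pairs $(a,b)\in\{(u,e),(e,u),(e,e)\}$.

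\emph{Step 2: Entrywise Cauchy--Schwarz.} I would apply Cauchy--Schwarz to the sum defining $B_{ij}(a,b)$, noting that as $S$ ranges over $\ell$-sets with $i\in S$, $j\notin S$, the set $T=S\symd\{i,j\}$ ranges bijectively over $\ell$-sets with $j\in T$, $i\notin T$. This gives
\[
B_{ij}(a,b)^2 \;\le\; \Big(\sum_{S\,:\,i\in S,\,j\notin S} a_S^2\Big)\Big(\sum_{T\,:\,j\in T,\,i\notin T} b_T^2\Big) \, .
\]

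\emph{Step 3: Swap and count.} Summing over ordered pairs $i\neq j$ and swapping the order of summation puts all the $i,j$-dependence into an indicator whose sum becomes the combinatorial count
\[
\sum_{i\neq j} \mathbf{1}_{i\in S,\,j\notin S,\,j\in T,\,i\notin T} \;=\; |S\setminus T|\cdot|T\setminus S| \;\le\; \ell^2 \, ,
\]
since both differences are contained in $\ell$-sets. Therefore
\[
\sum_{i\neq j} B_{ij}(a,b)^2 \;\le\; \ell^2 \sum_{S,T} a_S^2\, b_T^2 \;=\; \ell^2\,\|a\|^2\,\|b\|^2 \, .
\]

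\emph{Step 4: Assembly and corollary.} Combining Steps 1 and 3 yields
\[
\|\bV(v)-\bV(u)\|_F^2 \;\le\; 3\ell^2\bigl(\|u\|^2\|e\|^2+\|e\|^2\|u\|^2+\|e\|^4\bigr) \;=\; 3\ell^2\|e\|^2\bigl(2\|u\|^2+\|e\|^2\bigr),
\]
which is the first claim. For the second claim, I apply it with $v=v_0$ (unit-norm), $u=v^{(m)}$, $e=v^{\perp}$; since $v^{(m)}\perp v^{\perp}$, we have $\|v^{(m)}\|^2+\|v^{\perp}\|^2=1$, so $2\|u\|^2+\|e\|^2\le 3$, giving the $9\ell^2\|v^{\perp}\|^2$ bound.

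\emph{Where the difficulty lies.} The only step with any subtlety is the counting in Step~3: a sloppy bound (e.g.\ using $n$ instead of $\ell$ for either difference) would blow up the estimate. The key observation is that both $S\setminus T$ and $T\setminus S$ lie inside sets of size $\ell$, so their sizes are automatically at most $\ell$ even though $i,j$ a priori range over $[n]$. Everything else is a direct application of Cauchy--Schwarz and the triangle inequality.
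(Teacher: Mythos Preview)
Your proof is correct and follows essentially the same approach as the paper: bilinear expansion of $\bV_{ij}(u+e)-\bV_{ij}(u)$, the inequality $(x+y+z)^2\le 3(x^2+y^2+z^2)$, and Cauchy--Schwarz on each $B_{ij}$. The only cosmetic difference is in the counting after Cauchy--Schwarz: the paper relaxes the constraints $j\notin S$ and $i\notin T$ and then factors the double sum as $(\sum_i\sum_{S\ni i}a_S^2)(\sum_j\sum_{T\ni j}b_T^2)=\ell^2\|a\|^2\|b\|^2$, whereas you swap the order of summation and use $|S\setminus T|\cdot|T\setminus S|\le \ell^2$; both arrive at the identical bound.
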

\begin{proof}
Let us introduce the shorthand notation $\langle u,v \rangle_{ij} \defeq \sum_{S\in  {[n] \choose \ell}} u_S v_{S \symd \{i,j\}} \bone_{i \in S, j\notin S}$. We have
\begin{align}
\big\|\bV(v) - \bV(u)\big\|_F^2 &= \sum_{i,j} (\bV_{ij}(u+e) - \bV_{ij}(u))^2 \nonumber\\
&= \sum_{i,j} \left(\langle u+e,u+e \rangle_{ij} - \langle u,u \rangle_{ij}\right)^2 \nonumber\\
&= \sum_{i,j} \left(\langle u,e \rangle_{ij} + \langle e,u \rangle_{ij} + \langle e,e \rangle_{ij}\right)^2 \nonumber\\
&\le 3 \sum_{i,j} \left(\langle u,e \rangle_{ij}^2 + \langle e,u \rangle_{ij}^2 + \langle e,e \rangle_{ij}^2\right),
\label{eq:VF}
\end{align}
where the last step uses the bound $(a+b+c)^2 \le 3(a^2+b^2+c^2)$.
Now we expand
\begin{align}
\sum_{i,j} \langle u,e \rangle_{ij}^2 &= \sum_{i,j} \left(\sum_{S\,:\,|S|=\ell,\,i \in S,\,j \notin S} u_S e_{S \symd \{i,j\}}\right)^2 \nonumber\\
&\le \sum_{i,j} \left(\sum_{S\,:\,|S|=\ell,\,i \in S,\,j \notin S} u_S^2\right)\left(\sum_{S\,:\,|S|=\ell,\,i \in S,\,j \notin S} e^2_{S \symd \{i,j\}}\right) \quad\text{(Cauchy--Schwarz)}\nonumber\\
&\le \sum_{i,j} \left(\sum_{S\,:\,|S|=\ell,\,i \in S} u_S^2\right)\left(\sum_{T\,:\,|T|=\ell,\,j \in T} e^2_T\right) \nonumber\\
&= \left(\sum_i \sum_{S\,:\,|S|=\ell,\,i \in S} u_S^2\right)\left(\sum_j \sum_{T\,:\,|T|=\ell,\,j \in T} e^2_T\right) \nonumber\\
&= \left(\ell \sum_{|S|=\ell} u_S^2\right)\left(\ell \sum_{|T|=\ell} e_T^2\right) \nonumber\\
&= \ell^2 \|u\|^2 \|e\|^2.
\label{eq:CS}
\end{align}
Plugging this back into \eqref{eq:VF} yields the desired result.
To obtain $\|\bV(v_0) - \bV(v^{(m)})\|_F^2 \le 9 \ell^2 \|v^{\perp}\|^2$ we just bound $2\|v^{(m)}\|^2 + \|v^\perp\|^2$ by $3$.
\end{proof}

 Let us also state the following lemma, which will be need later on:
\begin{lemma}\label{lem:voting-norm}
For $u \in \RR^{n \choose \ell}$, $\|\bV(u)\|_F^2 \le \ell^2 \|u\|^4$.
\end{lemma}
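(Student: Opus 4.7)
The plan is to mimic exactly the Cauchy--Schwarz calculation that was carried out for $\sum_{i,j}\langle u,e\rangle_{ij}^2$ inside the proof of Lemma~\ref{lem:voting-noise}, but with $e$ replaced by $u$, since Lemma~\ref{lem:voting-norm} is really the ``diagonal'' special case of that bound. The whole argument is purely combinatorial and uses no structure of $u$ beyond $\|u\|^2 = \sum_S u_S^2$.

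First I would expand
\[
\|\bV(u)\|_F^2 \;=\; \sum_{i\ne j} \bV_{ij}(u)^2 \;=\; \sum_{i\ne j} \Bigl(\sum_{S\,:\,|S|=\ell,\,i\in S,\,j\notin S} u_S\, u_{S\triangle\{i,j\}}\Bigr)^{\!2}.
\]
Then I would apply the Cauchy--Schwarz inequality inside each summand in $(i,j)$, splitting the product $u_S\,u_{S\triangle\{i,j\}}$ between the two factors, to get
\[
\bV_{ij}(u)^2 \;\le\; \Bigl(\sum_{S\,:\,i\in S,\,j\notin S} u_S^2\Bigr)\Bigl(\sum_{S\,:\,i\in S,\,j\notin S} u_{S\triangle\{i,j\}}^2\Bigr).
\]
In the second factor I would change variables $T := S\triangle\{i,j\}$, so that the constraint becomes $j\in T$, $i\notin T$.

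Next I would drop the ``negative'' constraints $j\notin S$ and $i\notin T$ (this only enlarges the sums), obtaining
\[
\sum_{i\ne j}\bV_{ij}(u)^2 \;\le\; \sum_{i,j} \Bigl(\sum_{S\,:\,i\in S} u_S^2\Bigr)\Bigl(\sum_{T\,:\,j\in T} u_T^2\Bigr),
\]
and then factor the double sum over $(i,j)$ into a product of two identical single sums. Each single sum equals $\sum_{i}\sum_{S\,:\,i\in S} u_S^2 = \sum_S |S|\, u_S^2 = \ell\,\|u\|^2$, since each set $S$ of size $\ell$ is counted once for every element it contains. Multiplying the two factors yields the claim $\|\bV(u)\|_F^2 \le \ell^2\|u\|^4$.

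There is no real obstacle here; the calculation is essentially a copy of equation~\eqref{eq:CS} in the proof of Lemma~\ref{lem:voting-noise} with $u=e$, and the only thing to keep track of is the combinatorial identity $\sum_{i}\sum_{S\ni i} u_S^2 = \ell\|u\|^2$. The mild subtlety worth noting is that relaxing the disjointness constraints $j\notin S$, $i\notin T$ is what makes the two factors decouple; keeping those constraints would complicate the bookkeeping for no gain, since the looser bound already suffices.
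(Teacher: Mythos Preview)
Your proposal is correct and is exactly the paper's approach: the paper's proof simply notes that $\|\bV(u)\|_F^2=\sum_{i,j}\langle u,u\rangle_{ij}^2$ and invokes the Cauchy--Schwarz computation~\eqref{eq:CS} with $e=u$, which is precisely what you have written out in full.
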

\begin{proof}
Note that $\|\bV(u)\|_F^2 = \sum_{i,j} \bV_{ij}(u)^2 = \sum_{i,j} \langle u,u \rangle_{ij}^2$ and so the desired result follows immediately from \eqref{eq:CS}.
\end{proof}

Next, in the main technical part of the proof, we show that $\bV(v^{(m)})$ is close to a multiple of the all-ones matrix in Frobenius norm:
\begin{proposition}\label{lem:approx_rank_one}
Let $\hat\One = \One/\sqrt{n}$ and $\alpha = \ell \|v^{(m)}\|^2$. Then
\[\big\|\bV(v^{(m)}) - \alpha \hat\One\hat\One^\top\big\|_F^2 \le 2 \left(\frac{m}{\ell} + \frac{\ell}{n}\right) \alpha^2.\]
\end{proposition}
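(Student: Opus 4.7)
The plan is to expand the squared Frobenius norm into three pieces and bound each using tools already in hand. Setting $w = v^{(m)}$ for brevity and using $\|\hat\One\hat\One^\top\|_F^2 = 1$ together with $\langle \bV(w), \hat\One\hat\One^\top\rangle = \frac{1}{n}\One^\top \bV(w) \One$, we have
\[
\big\|\bV(w) - \alpha \hat\One\hat\One^\top\big\|_F^2 = \|\bV(w)\|_F^2 - \frac{2\alpha}{n}\One^\top \bV(w) \One + \alpha^2.
\]
Lemma~\ref{lem:voting-norm} immediately controls the first piece: $\|\bV(w)\|_F^2 \le \ell^2 \|w\|^4 = \alpha^2$. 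The entire task therefore reduces to lower-bounding the cross term by roughly $n\alpha(1-\eta)$, after which the bound closes by arithmetic.

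For the cross term I would rewrite
\[
\One^\top \bV(w) \One = \sum_{i \neq j} \sum_S w_S w_{S\symd\{i,j\}}\bone_{i \in S,\, j \notin S} = \sum_{|S \symd T| = 2} w_S w_T = w^\top \bA_1 w,
\]
where $\bA_1$ is the adjacency matrix of the Johnson graph $J_{n,\ell,2}$ (pairs of $\ell$-sets differing by a single element swap). Since $\bA_1$ lies in the Bose--Mesner algebra of the $(n,\ell)$-Johnson scheme, Proposition~\ref{eigenspaces} tells us its eigenspaces are exactly the $\mathcal{Y}_s$; Lemma~\ref{lem:eberlein} applied with $p=2$ then gives eigenvalues $\nu_0 = \ell(n-\ell)$ and $\nu_s = (\ell-s)(n-\ell-s) - s$ for $s \ge 1$. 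Writing $w = \sum_{s=0}^m w_s$ with $w_s \in \mathcal{Y}_s$ yields $\One^\top \bV(w)\One = \sum_{s=0}^m \nu_s \|w_s\|^2$.

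The crux is a uniform lower bound $\nu_s \ge n\ell(1-\eta)$ for every $0 \le s \le m$. A direct expansion gives
\[
\nu_s - n\ell(1-\eta) = \nu_s - \bigl(n\ell - nm - \ell^2\bigr) = n(m-s) + s(s-1),
\]
which is visibly nonnegative in the required range. This yields $\One^\top \bV(w)\One \ge n\ell(1-\eta)\|w\|^2 = n(1-\eta)\alpha$, and substituting back,
\[
\big\|\bV(w) - \alpha\hat\One\hat\One^\top\big\|_F^2 \le \alpha^2 - 2(1-\eta)\alpha^2 + \alpha^2 = 2\eta\alpha^2,
\]
as required.

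The only real obstacle is identifying the right auxiliary matrix ($\bA_1$), recognizing $\One^\top \bV(\cdot)\One$ as its quadratic form, and matching the resulting eigenvalue bound to the prescribed $\eta = m/\ell + \ell/n$ without slack; the latter reduces to the one-line inequality $n(m-s) + s(s-1) \ge 0$. Everything else is bookkeeping via the expanded Frobenius norm identity.
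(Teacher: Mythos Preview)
Your proof is correct and shares the paper's overall skeleton: expand the squared Frobenius norm, bound $\|\bV(v^{(m)})\|_F^2$ by $\alpha^2$ via Lemma~\ref{lem:voting-norm}, and then lower-bound the cross term $\sum_{i,j}\bV_{ij}(v^{(m)})$ by $((\ell-m)n-\ell^2)\|v^{(m)}\|^2 = n\ell(1-\eta)\|v^{(m)}\|^2$. The difference lies entirely in how that cross-term bound is obtained. The paper packages it as Lemma~\ref{lem:voting-fourier} and proves it via the Poincar\'e-type inequality $\Inf[v]\le m\,\VV[v]$ on the slice (Filmus, Lemma~5.6), translating total influence back into $\sum_{i,j}\bV_{ij}(v)$. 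You instead recognize $\sum_{i,j}\bV_{ij}(w)=w^\top \bA_1 w$ with $\bA_1$ the Johnson graph $J_{n,\ell,2}$, invoke the Eberlein formula (Lemma~\ref{lem:eberlein}, specialized to $p=2$) to get $\nu_s=(\ell-s)(n-\ell-s)-s$, and check $\nu_s - n\ell(1-\eta)=n(m-s)+s(s-1)\ge 0$ for integers $0\le s\le m$. Both routes land on exactly the same numerical bound; yours is more self-contained in that it reuses the eigenvalue machinery already developed in the paper rather than importing an external inequality, at the small cost of citing Lemma~\ref{lem:eberlein} for a matrix ($\bA_1$) different from the one it is literally stated for---which is harmless since the proof there is parameter-free.
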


Before proving the above proposition, let us put the results together and prove our recovery result.

\subsubsection{Proof of Theorem~\ref{recovery_result}} 
For $\epsilon, \delta>0$, assume $\lambda \ge \epsilon^{-1} \sqrt{2\log(n^\ell/\delta) / d_\ell}$ and $m+1 \le \ell - p/2$. By Lemma~\ref{lem:v_perpendicular} and Lemma~\ref{lem:voting-noise} we have
\[\big\|\bV(v_0) - \bV(v^{(m)})\big\|_F^2 \le 9 \ell^2 \frac{\ell}{m+1}\epsilon \]
with probability at least $1-\delta$. Moreover, by Proposition~\ref{lem:approx_rank_one}, we have
\[\big\|\bV(v^{(m)}) - \alpha \hat\One\hat\One^\top\big\|_F^2 \le 2 \eta \alpha^2,\]
with $\alpha = \ell \|v^{(m)}\|^2 \le \ell$ and $\eta = \frac{m}{\ell} + \frac{\ell}{n}$.
Therefore, by a triangle inequality we have
\begin{align*}
\big\|\bV(v_0) - \alpha \hat\One\hat\One^\top\big\|_F &\le  3\ell \sqrt{\frac{\ell\epsilon}{m+1}}+ \sqrt{2 \eta} \, \alpha \\
&\le 3\ell \Big( \sqrt{\frac{\ell\epsilon}{m+1}} + \sqrt{\frac{m}{\ell} + \frac{\ell}{n}}\Big),
\end{align*}
with probability at least $1-\delta$. Now let us choose a value of $m$ that achieves a good tradeoff of the above two error terms: we take $m = \lfloor\ell\sqrt{\epsilon}\rfloor$ if this choice satisfies the condition $m+1 \le \ell - p/2$, i.e, if $\ell \ge (p/2+1)/(1-\sqrt{\epsilon})$. Otherwise, we take $m=0$. 
Use the inequality $\sqrt{a+b} \le \sqrt{a}+\sqrt{b}$ for positive $a,b$, we obtain
\begin{align}
\big\|\bV(v_0) - \alpha \hat\One\hat\One^\top\big\|_F &\le 3\ell \max\Big\{\Big(2\epsilon^{1/4} + \sqrt{\frac{\ell}{n}}\Big), c(p)\Big(\sqrt{\epsilon} + \sqrt{\frac{1}{n}}\Big)\Big\} \label{frobenius_error0},\\
&\le 6\ell\Big(\epsilon^{1/4} + \sqrt{\frac{\ell}{n}}\Big)\,,\label{frobenius_error}
\end{align}
where $c(p)$ is a constant depending only on $p$ when $\epsilon$ is bounded away from 1. (The first term in the above upper bound corresponds to the case $m = \lfloor \ell \sqrt{\epsilon}\rfloor, \ell \ge (p/2+1)/(1-\sqrt{\epsilon})$, and the second term to $m = 0, \ell \le (p/2+1)/(1-\sqrt{\epsilon})$.)
For small $\epsilon$ and large $n$ the first term in~\eqref{frobenius_error0} dominates, leading to the bound~\eqref{frobenius_error}. 

Now let $\widehat{x}$ be a leading eigenvector of $\bV(v_0)$, and let $\bR = \bV(v_0) - \alpha \hat\One\hat\One^\top$. Since $\widehat{x}^\top \bV(v_0)\widehat{x} \ge \hat{\One}^\top \bV(v_0)\hat{\One}$, we have
\[\alpha \langle \widehat{x}, \hat{\One}\rangle^2 + \widehat{x}^\top \bR \widehat{x} \ge \alpha + \hat{\One}^\top \bR\hat{\One}.\] 
Therefore
\[\langle \widehat{x}, \hat{\One}\rangle^2 \ge 1 - 2 \frac{\|\bR\|_{\textup{op}}}{\alpha}.\]
Since $\alpha = \ell(1-\|v^\perp\|^2)$, and $\|v^{\perp}\|^2 \le \epsilon \frac{\ell}{m+1} \le \max\{\sqrt{\epsilon}, c(p)\epsilon\} \le \sqrt{\epsilon}$ for small $\epsilon$, 
the bound~\eqref{frobenius_error} (together with $\|\bR\|_{\textup{op}}\le \|\bR\|_{F}$) implies 
\begin{equation}\label{correlation_bound_final}
\langle \widehat{x}, \hat{\One}\rangle^2 \ge 1 - 12\frac{\epsilon^{1/4} + \sqrt{\ell/n}}{1-\sqrt{\epsilon}}.
\end{equation}
To conclude the proof of our theorem, we let $\ell \le n\sqrt{\epsilon} $, $\epsilon < 1/16$ and then replace $\epsilon$ by $\epsilon^{4}$: we obtain $\langle \widehat{x}, \hat{\One}\rangle^2 \ge 1 - 48 \epsilon$ with probability at least $1-\delta$ if $\lambda \ge  \epsilon^{-4} \sqrt{2\log(n^\ell/\delta) / d_\ell}$.

\subsubsection{Proof of Proposition~\ref{lem:approx_rank_one}}

Let $\alpha = \ell \|v^{(m)}\|^2$. By Lemma~\ref{lem:voting-norm} we have $\big\|\bV(v^{(m)})\big\|_F^2\le \alpha^2$. Therefore
\begin{align}
\big\|\bV(v^{(m)}) - \alpha \hat\One\hat\One^\top\big\|_F^2 &= \big\|\bV(v^{(m)})\big\|_F^2 - \frac{2}{n} \alpha \sum_{i,j=1}^n \bV_{ij}(v^{(m)})+ \alpha^2 \nonumber\\
&\le 2\alpha^2 - \frac{2}{n} \alpha \sum_{i,j=1}^n \bV_{ij}(v^{(m)}).\label{fro_bound}
\end{align}
Now we need a lower bound on $\sum_{i,j=1}^n \bV_{ij}(v^{(m)})$. This will crucially rely on the fact that $v^{(m)}$ lies in the span of the top $m$ eigenspaces of $\bX$:

\begin{lemma}\label{lem:voting-fourier}
For a fixed $m \le \ell$,  let $v \in \bigoplus_{s=0}^{m}\mathcal{Y}_s$. Then
\[\sum_{i,j=1}^n \bV_{ij}(v) \ge ((\ell - m)n - \ell^2) \,\|v\|^2.\]
\end{lemma}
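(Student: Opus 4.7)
The plan is to recognize $\sum_{i,j} \bV_{ij}(v)$ as a quadratic form $v^\top \bX_{(2)} v$ for a matrix $\bX_{(2)}$ that lies in the Bose--Mesner algebra of the $(n,\ell)$-Johnson scheme, and then exploit the fact that $v$ is supported on the top $m+1$ eigenspaces to bound this quadratic form from below.

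First, I would expand the definition of $\bV_{ij}(v)$ and swap the order of summation. Since $\bV_{ii}(v) = 0$ and for $i \ne j$ the ordered pair $(S, S \symd \{i,j\})$ (with $i \in S$, $j \notin S$) is the unique ordered pair of $\ell$-sets with $|S \symd T| = 2$ and $S \setminus T = \{i\}$, $T \setminus S = \{j\}$, I expect the rearrangement to yield
\[\sum_{i,j=1}^n \bV_{ij}(v) \;=\; \sum_{(S,T)\,:\,|S \symd T|=2} v_S v_T \;=\; v^\top \bX_{(2)} v,\]
where $\bX_{(2)}$ is the $\binom{n}{\ell}\times\binom{n}{\ell}$ matrix with entries $(\bX_{(2)})_{S,T} = \bone_{|S\symd T|=2}$, i.e., the adjacency matrix of the Johnson graph $J(n,\ell,2)$.

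Second, the entries of $\bX_{(2)}$ depend only on $|S\cap T|$, so $\bX_{(2)}$ belongs to the Bose--Mesner algebra of the Johnson scheme. By the commutativity of this algebra noted just before Proposition~\ref{eigenspaces}, the subspaces $\mathcal{Y}_0,\ldots,\mathcal{Y}_\ell$ are also eigenspaces of $\bX_{(2)}$. Its eigenvalues $\mu_s^{(2)}$ on $\mathcal{Y}_s$ are given by the $p=2$ instance of Lemma~\ref{lem:eberlein}; a direct evaluation of the Eberlein formula yields
\[\mu_s^{(2)} \;=\; (\ell - s)(n-\ell-s) \;-\; s\,\bone_{s\ge 1}\]
for $0 \le s \le \ell$. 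Since $v \in \bigoplus_{s\le m}\mathcal{Y}_s$, I then obtain
\[v^\top \bX_{(2)} v \;\ge\; \Bigl(\min_{0 \le s \le m} \mu_s^{(2)}\Bigr)\,\|v\|^2.\]

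Finally, the target inequality reduces to checking that $\mu_s^{(2)} \ge (\ell-m)n - \ell^2$ for every $0 \le s \le m$. Expanding gives
\[\mu_s^{(2)} - \bigl((\ell-m)n - \ell^2\bigr) \;=\; (m-s)\,n \;+\; s(s-1),\]
which is nonnegative for every integer $s$ in $[0,m]$, completing the proof. The core content of the argument is really the combinatorial identification in the first step; the rest follows mechanically from the Johnson-scheme diagonalization developed earlier. The only thing requiring slight care is the $\bone_{s \ge 1}$ convention in the Eberlein formula (the $k=1$ term vanishes at $s = 0$), which has no effect on the final bound since at $s=0$ one has $\mu_0^{(2)} = \ell(n-\ell) \ge (\ell-m)n - \ell^2$ trivially.
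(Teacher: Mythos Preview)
Your proof is correct, and it takes a different route from the paper's. The paper invokes a Poincar\'e-type inequality from~\cite{filmus} (namely $\Inf[v] \le m\,\VV[v]$ for $v \in \bigoplus_{s\le m}\mathcal{Y}_s$), computes the total influence explicitly in terms of $\sum_{i,j}\bV_{ij}(v)$, and rearranges. You instead observe directly that $\sum_{i,j}\bV_{ij}(v) = v^\top \bX_{(2)} v$ for the Johnson-scheme matrix $\bX_{(2)}$ with parameter $2$, and then read off its eigenvalues from the Eberlein formula already proved in Lemma~\ref{lem:eberlein} (specialized to $p=2$). The two arguments are morally the same---the Filmus Poincar\'e inequality is itself a spectral-gap statement for the transposition Markov chain, which is $\ell(n-\ell)\bI - \bX_{(2)}$ up to normalization---but your version is more self-contained: it stays entirely within the machinery the paper has already set up and avoids the extra citation. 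Your bound is even slightly sharper before you relax it, since $\min_{s\le m}\mu_s^{(2)} = (\ell-m)(n-\ell-m)-m = (\ell-m)n - \ell^2 + m(m-1)$. One small remark: the indicator $\bone_{s\ge 1}$ in your eigenvalue formula is harmless but unnecessary, since the expression $(\ell-s)(n-\ell-s)-s$ already gives the correct value $\ell(n-\ell)$ at $s=0$.
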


We substitute the result of the above lemma in~\eqref{fro_bound} to obtain
\begin{align*}
\big\|\bV(v^{(m)}) - \alpha \hat\One\hat\One^\top\big\|_F^2  &\le 2\alpha^2\Big(1 - \frac{\ell - m}{\ell}+\frac{\ell}{n}\Big)\\
&=2\alpha^2\Big(\frac{m}{\ell}+\frac{\ell}{n}\Big),
\end{align*}
as desired. 

\subsubsection{A Poincar\'e Inequality on a Slice of the Hypercube}
To prove Lemma~\ref{lem:voting-fourier}, we need some results on Fourier analysis on the slice of the hypercube ${[n]\choose \ell}$. Following~\cite{filmus}, we define the following. 
First, given a function $f : {[n]\choose \ell} \mapsto \RR$, we define its expectation as its average value over all sets of size $\ell$, and write
\[\Ex_{|S|=\ell}[f(S)] := \frac{1}{{n\choose \ell}}\sum_{|S|=\ell} f(S).\]  
We also define its variance as
\[\VV[f] := \Ex_{|S|=\ell} [f(S)^2] - \Ex_{|S|=\ell} [f(S)]^2.\]
Moreover, we identify a vector $u \in \RR^{{n \choose \ell}}$ with a function on sets of size $\ell$ in the obvious way: $f(S) = u_S$.
\begin{definition}
For $u \in \RR^{n \choose \ell}$ and $i,j \in [n]$, let $u^{(ij)}$ denote the vector having coordinates
\[u^{(ij)}_S = \left\{\begin{array}{ll} u_{S \symd \{i,j\}} & \text{if } |S \cap \{i,j\}| = 1 \\ u_S & \text{otherwise.}\end{array}\right.\]
(The operation $u \mapsto u^{(ij)}$ exchanges the roles of $i$ and $j$ whenever possible.) 
\end{definition}

\begin{definition}
For $u \in \RR^{n \choose \ell}$ and $i,j \in [n]$, define the \emph{influence} of the pair $(i,j)$ as
\[\Inf_{ij}[u] := \frac{1}{2} \Ex_{|S|=\ell} \Big[\big(u^{(ij)}_S - u_S\big)^2\Big],\]
and the \emph{total influence} as
\[\Inf[u] := \frac{1}{n} \sum_{i < j} \Inf_{ij}[u].\]
\end{definition}

With this notation, our main tool is a version of Poincar\'e's inequality on ${[n]\choose \ell}$:
\begin{lemma}[Lemma~5.6 in~\cite{filmus}]
For $v \in  \bigoplus_{s=0}^{m}\mathcal{Y}_s$ we have
\[\Inf[v] \le m \VV[v].\]
\end{lemma}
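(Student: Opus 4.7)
The plan is to identify the total-influence functional with the Laplacian quadratic form of the Johnson graph $J(n,\ell)$ whose edges are pairs of $\ell$-sets at symmetric-difference distance $2$, to diagonalize it simultaneously with $\VV$ against the Johnson-scheme eigenspaces $\mathcal{Y}_s$ already introduced in Proposition~\ref{eigenspaces}, and then to compare eigenvalues term by term in the Fourier expansion of $v$.

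First, I would rewrite $\Inf[u]$ operator-theoretically. Since $T_{ij} \colon u \mapsto u^{(ij)}$ is an isometric permutation of coordinates, $\|u - u^{(ij)}\|^2 = 2\langle u, u - u^{(ij)}\rangle$, and hence
\[
\sum_{i<j}\Inf_{ij}[u] \;=\; \frac{1}{2\binom{n}{\ell}} \sum_{i<j} \|u - u^{(ij)}\|^2 \;=\; \frac{1}{\binom{n}{\ell}} \langle u, \mathcal{L} u \rangle,
\qquad \mathcal{L} := \sum_{i<j}(I - T_{ij}).
\]
A direct inspection shows that transpositions with $|\{i,j\} \cap S| \in \{0,2\}$ contribute $0$ entrywise, while each edge $\{S,T\}$ of $J(n,\ell)$ is produced by exactly one transposition, namely $(i,j) = S \symd T$. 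Thus $\mathcal{L} = D - A$ is precisely the (unnormalized) Laplacian of $J(n,\ell)$ with $D = \ell(n-\ell)\, I$ and $A$ its adjacency matrix (the $p=2$ case of the matrix $\bX$ analyzed earlier). Consequently $\Inf[u] = \tfrac{1}{n\binom{n}{\ell}} \langle u, (D - A) u \rangle$.

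Next, I would diagonalize $D - A$ through the Johnson-scheme eigenspaces. By Proposition~\ref{eigenspaces} the $\mathcal{Y}_s$ are common eigenspaces for the entire Bose--Mesner algebra, and specializing the Eberlein formula (Lemma~\ref{lem:eberlein}) to $p=2$ gives $\mu_s = (\ell-s)(n-\ell-s) - s$ for $s \ge 1$. A short algebraic simplification then yields
\[
\ell(n-\ell) - \mu_s \;=\; s(n - s + 1), \qquad 0 \le s \le \ell,
\]
so that $\mathcal{L}$ acts on $\mathcal{Y}_s$ by the scalar $s(n-s+1)$, with kernel $\mathcal{Y}_0 = \text{span}(\One)$, as expected for a graph Laplacian.

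Finally, for $v \in \bigoplus_{s=0}^{m} \mathcal{Y}_s$, decompose $v = \sum_{s=0}^m v_s$ orthogonally. Diagonality of $\mathcal{L}$ in this basis, together with the fact that for $s \ge 1$ one has $v_s \perp \One$ (so $v_s$ has mean zero), gives
\[
\Inf[v] = \frac{1}{n\binom{n}{\ell}} \sum_{s=1}^m s(n-s+1)\, \|v_s\|^2,
\qquad
\VV[v] = \frac{1}{\binom{n}{\ell}} \sum_{s=1}^m \|v_s\|^2,
\]
since in $\VV[v]$ the $\mathcal{Y}_0$ component contributes exactly $\Ex[v]^2$, which cancels against $\Ex[v^2]$. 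For $1 \le s \le m$ one has $s(n-s+1)/n \le s \le m$, and comparing eigenvalue by eigenvalue yields the desired $\Inf[v] \le m \,\VV[v]$. I expect the only mildly delicate part to be the bookkeeping: tracking the averaging factors $1/\binom{n}{\ell}$ and $1/n$ consistently, and confirming the bijection between transpositions and Johnson-graph edges so that $\mathcal{L}$ really coincides with the combinatorial Laplacian. Once the operator identification is in place, the rest is a one-line eigenvalue comparison.
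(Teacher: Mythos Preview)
Your argument is correct. The identification of $\mathcal{L}=\sum_{i<j}(I-T_{ij})$ with the Laplacian $\ell(n-\ell)I-A$ of the Johnson graph $J(n,\ell)$ is clean, the Eberlein specialization to $p=2$ giving Laplacian eigenvalue $s(n-s+1)$ on $\mathcal{Y}_s$ checks out, and the final eigenvalue-by-eigenvalue comparison $s(n-s+1)/n\le s\le m$ is immediate for $1\le s\le m$. The bookkeeping with the $\binom{n}{\ell}^{-1}$ and $n^{-1}$ normalizations is handled correctly, and the cancellation of the $\mathcal{Y}_0$ component in $\VV[v]$ is justified since $\dim\mathcal{Y}_0=1$.

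As for the comparison you asked about: the paper does \emph{not} supply its own proof of this lemma. It is quoted verbatim as Lemma~5.6 of \cite{filmus} and used as a black box in the proof of Lemma~\ref{lem:voting-fourier}. Your write-up therefore goes further than the paper, giving a self-contained derivation using only the Johnson-scheme machinery already set up in Proposition~\ref{eigenspaces} and Lemma~\ref{lem:eberlein}. This is essentially the argument in \cite{filmus} as well (there it is phrased via the spectrum of the Markov operator $\frac{1}{\ell(n-\ell)}A$ rather than the Laplacian, but the content is identical), so you have recovered the cited proof rather than found a genuinely different route.
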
 

\vspace{.25cm}
\begin{proof}[Proof of Lemma~\ref{lem:voting-fourier}]
We have $m \Ex_{|S|=\ell} [v_S^2] \ge m \VV[v] \ge \Inf[v]=\frac{1}{n} \sum_{i < j} \Inf_{ij}[v]$, and
\begin{align*}
2\,\Inf_{ij}[v] &=\Ex_{|S|=\ell} \big[(v_S^{(ij)} - v_S)^2\big] \\
&= \Ex_{|S|=\ell} \big[\bone_{|S \cap \{i,j\}|=1} (v_S^{(ij)} - v_S)^2\big]\\
&=\Ex_{|S|=\ell} \Big[\bone_{|S \cap \{i,j\}|=1} \Big((v_S^{(ij)})^2 - 2 v_S^{(ij)} v_S + v_S^2\Big)\Big].
\end{align*}
Since $\Ex_{|S|=\ell} [\bone_{|S \cap \{i,j\}|=1} (v_S^{(ij)})^2] = \Ex_{|S|=\ell} [\bone_{|S \cap \{i,j\}|=1} v_S^2]$, the above is equal to
\[2\Ex_{|S|=\ell} \Big[\bone_{|S \cap \{i,j\}|=1} \big(v_S^2 - v_S^{(ij)} v_S\big)\Big].\]
Now,
\begin{align*}
\frac{1}{n} \sum_{i < j} \Inf_{ij}[v]
&= \frac{1}{n} \sum_{i < j} {n \choose \ell}^{-1} \sum_{S \,:\, |S|=\ell,\,|S \cap \{i,j\}|=1} \big( v_S^2 -  v_S^{(ij)} v_S\big) \\
&= \frac{1}{n} {n \choose \ell}^{-1} \left[ \ell(n-\ell)\sum_{|S|=\ell} v_S^2 - \sum_{i,j \,:\, i \ne j} \;\sum_{S\,:\,|S|=\ell,\,i \in S,\,j \notin S} v_S v_{S \symd \{i,j\}}\right]\\
&= \frac{1}{n} {n \choose \ell}^{-1} \left[ \ell(n-\ell)\sum_{|S|=\ell} v_S^2 - \sum_{i,j} \bV_{ij}(v)\right] \\
&= \frac{1}{n} {n \choose \ell}^{-1} \left[ \ell(n-\ell)\|v\|^2 - \sum_{i,j} \bV_{ij}(v)\right].
\end{align*}
Therefore \[m \|v\|^2 \ge \frac{1}{n} \Big(\ell(n-\ell)\|v\|^2 - \sum_{i,j} \bV_{ij}(v)\Big),\]
and so
\[\sum_{i,j} \bV_{ij}(v) \ge (\ell(n-\ell) - mn)\, \|v\|^2 = ((\ell - m)n - \ell^2)\, \|v\|^2.\]
as desired.
\end{proof}

\section{Extensions}

\subsection{Refuting Random $k$-XOR Formulas for $k$ Even}
\label{sec:xor}

Our symmetric difference matrices can be used to give a simple algorithm and proof for a related problem: strongly refuting random $k$-XOR formulas (see~\cite{refute-csp,strongly-refuting} and references therein). This is essentially a variant of the spiked tensor problem with sparse Rademacher observations instead of Gaussian ones. It is known \cite{strongly-refuting} that this problem exhibits a smooth tradeoff between subexponential runtime and the number of constraints required, but the proof of \cite{strongly-refuting} involves intensive moment calculations. When $k$ is even, we will give a simple algorithm and a simple proof using the matrix Chernoff bound that achieves the same tradeoff. SOS lower bounds suggest that this tradeoff is optimal~\cite{grig-sos,sch-sos}. 

After the initial appearance of our work, the follow-up work~\cite{smoothed-random} extended our approach to the case where $k$ is odd. Surprisingly, they showed that the construction we used for odd-order tensor PCA in Section~\ref{sec:cert-odd} does not work as is, and requires an additional ``row pruning'' step.

\subsubsection{Setup}

Let $x_1,\ldots,x_n$ be $\{\pm 1\}$-valued variables. A $k$-XOR formula $\Phi$ with $m$ constraints is specified by a sequence of subsets $U_1,\ldots,U_m$ with $U_i \subseteq [n]$ and $|U_i| = k$, along with values $b_1,\ldots,b_m$ with $b_i \in \{\pm 1\}$. For $1 \le i \le m$, constraint $i$ is satisfied if $x^{U_i} = b_i$, where $x^{U_i} \defeq \prod_{j \in U_i} x_j$. We write $P_\Phi(x)$ for the number of constraints satisfied by $x$. We will consider a uniformly random $k$-XOR formula in which each $U_i$ is chosen uniformly and independently from the ${n \choose k}$ possible $k$-subsets, and each $b_i$ is chosen uniformly and independently from $\{\pm 1\}$.

Given a formula $\Phi$, the goal of strong refutation is to certify an upper bound on the number of constraints that can be satisfied. In other words, our algorithm should output a bound $B = B(\Phi)$ such that for \emph{every} formula $\Phi$, $\max_{x \in \{\pm 1\}^n} P_\Phi(x) \le B(\Phi)$. (Note that this must \emph{always} be satisfied, not merely with high probability.) At the same time, we want the bound $B$ to be small with high probability over a random $\Phi$. Since a random assignment $x$ will satisfy roughly half the constraints, the best bound we can hope for is $B = \frac{m}{2}(1+\varepsilon)$ with $\varepsilon > 0$ small.

\subsubsection{Algorithm}

Let $k \ge 2$ be even and let $\ell \ge k/2$. Given a $k$-XOR formula $\Phi$, construct the order-$\ell$ symmetric difference matrix $\bM \in \RR^{{[n] \choose \ell} \times {[n] \choose \ell}}$ as follows. For $S,T \subseteq [n]$ with $|S| = |T| = \ell$ and $i \in [n]$, let
$$\bM_{S,T}^{(i)} = \left\{\begin{array}{ll} b_i & \text{if } S \symd T = U_i \\ 0 & \text{otherwise} \end{array} \right.$$
and let
$$\bM = \sum_{i=1}^m \bM^{(i)}.$$

\noindent Define the parameter
$$d_\ell \defeq {n-\ell \choose k/2}{\ell \choose k/2},$$
which, for any fixed $|S| = \ell$, is the number of sets $|T| = \ell$ such that $|S \symd T| = k$. For an assignment $x \in \{\pm 1\}^n$, let $u^x \in \RR^{[n] \choose \ell}$ be defined by $u^x_S = x^S$ for all $|S| = \ell$. We have
$$\|\bM\| \ge \frac{(u^x)^\top \bM u^x}{\|u^x\|^2} = {n \choose \ell}^{-1} \sum_{i=1}^m \sum_{S \symd T = U_i} x^{U_i} b_i = d_\ell {n \choose k}^{-1} (2P_\Phi(x) - m)$$
since for any fixed $U_i$ (of size $k$), the number of $(S,T)$ pairs such that $S \symd T = U_i$ is ${n \choose \ell} d_\ell {n \choose k}^{-1}$. Thus, we can perform strong refutation by computing $\|\bM\|$:
\begin{equation}\label{eq:M-cert}
P_\Phi(x) \le \frac{m}{2} + \frac{1}{2 d_\ell} {n \choose k} \|\bM\|.
\end{equation}

\begin{theorem}\label{thm:xor-even}
Let $k \ge 2$ be even and let $k/2 \le \ell \le n-k/2$. Let $\beta \in (0,1)$. If
\begin{equation}\label{eq:cond-m}
m \ge \frac{4 e^2 {n \choose k} \log{n \choose \ell}}{\beta^2 d_\ell}
\end{equation}
then $\|\bM\|$ certifies
$$P_\Phi(x) \le \frac{m}{2}(1+\beta)$$
with probability at least $1 - 3{n \choose \ell}^{-1}$ over a uniformly random $k$-XOR formula $\Phi$ with $m$ constraints.
\end{theorem}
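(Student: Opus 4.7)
The certificate inequality~\eqref{eq:M-cert} is deterministic, so the theorem reduces to a probabilistic estimate: I must show that with probability $\ge 1 - 3{n\choose\ell}^{-1}$,
\[
\|\bM\| \;\le\; \frac{m \beta d_\ell}{{n\choose k}}.
\]
Since $\bM=\sum_{i=1}^m \bM^{(i)}$ and the signs $b_i$ are i.i.d.\ uniform on $\{\pm 1\}$ (independent of the $U_i$), the $\bM^{(i)}$ form an i.i.d.\ sequence of \emph{symmetric, centered} random matrices. This is exactly the setting for a matrix Bernstein / Chernoff tail bound analogous to Theorem~\ref{thm:matrix-conc}.

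The two structural ingredients that such a bound requires are a uniform spectral bound $\|\bM^{(i)}\|\le L$ and a bound on the matrix variance $\|\sum_i \EE[(\bM^{(i)})^2]\|$. For the uniform bound, I observe that for any fixed $U_i$ the $0/1$ indicator matrix $M^{U_i}_{S,T}\defeq \bone_{S\symd T = U_i}$ has at most one nonzero entry in every row: given $S$ of size $\ell$, the only candidate for $T$ is $T=S\symd U_i$, and this has size $\ell$ exactly when $|S\cap U_i|=k/2$. Hence $M^{U_i}$ is a partial permutation matrix and $\|\bM^{(i)}\| = \|M^{U_i}\| \le 1$. For the variance, I compute $(\bM^{(i)})^2$ directly: the product $\bM^{(i)}_{S,V}\bM^{(i)}_{V,T}$ is nonzero only if $S\symd V = U_i = V\symd T$, which forces $S=T$, so $(\bM^{(i)})^2$ is diagonal with $(S,S)$-entry $\bone_{|S\cap U_i| = k/2}$. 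Averaging over the uniform choice of $U_i$, this probability is $\binom{\ell}{k/2}\binom{n-\ell}{k/2}/\binom{n}{k} = d_\ell/{n\choose k}$ for \emph{every} $S$. Hence $\EE[(\bM^{(i)})^2] = (d_\ell/{n\choose k})\bI$ and the total variance parameter is $\sigma^2 = m d_\ell/{n\choose k}$.

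With $L=1$ and $\sigma^2 = m d_\ell/{n\choose k}$ in hand, matrix Bernstein gives
\[
\PP\!\left[\|\bM\|\ge t\right] \;\le\; 2{n\choose\ell}\exp\!\left(\frac{-t^2/2}{m d_\ell/{n\choose k} + t/3}\right).
\]
Setting $t = m\beta d_\ell/{n\choose k}$ and using $\beta < 1$ to absorb the $t/3$ contribution into a constant fraction of $\sigma^2$, the exponent is at most $-c\, m\beta^2 d_\ell/{n\choose k}$ for a universal $c>0$. Plugging in the hypothesis $m \ge 4 e^2 {n\choose k}\log{n\choose\ell}/(\beta^2 d_\ell)$ drives this exponent below $-C\log{n\choose\ell}$ with $C$ large enough that the prefactor $2{n\choose\ell}$ is swallowed and the tail is at most $3{n\choose\ell}^{-1}$. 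Combining with~\eqref{eq:M-cert} yields the stated strong refutation bound.

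I do not expect any genuine obstacle. The only mildly delicate step is the variance calculation, and its outcome is exceptionally clean: the ``coincidence'' that the probability $\PP[|S\cap U_i|=k/2]$ equals $d_\ell/{n\choose k}$ (precisely the quantity that appears in the certificate~\eqref{eq:M-cert}) is what makes the matrix Chernoff argument tight and allows the proof to go through without any of the moment-method complications in~\cite{strongly-refuting}. As in Theorem~\ref{detection_result}, the underlying matrix is so simply structured that elementary matrix concentration suffices.
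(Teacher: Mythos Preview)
Your argument is correct, and it takes a genuinely different route from the paper's proof. The paper proceeds in two stages: it first \emph{conditions} on the subsets $U_1,\dots,U_m$, writes $\bM=\sum_i b_i \bA^{(i)}$ as a Rademacher series, and applies the (sub-Gaussian) matrix Chernoff bound for such series, obtaining a tail in terms of the \emph{random} quantity $\sigma^2=\|\sum_i(\bA^{(i)})^2\|=\max_S D_S$, where $D_S$ counts the clauses $i$ with $|S\cap U_i|=k/2$; it then controls $\max_S D_S$ via a binomial tail bound and a union bound over~$S$. Your approach instead treats each $\bM^{(i)}$ as an i.i.d.\ bounded mean-zero matrix with both $U_i$ and $b_i$ random, observes that $\EE[(\bM^{(i)})^2]=(d_\ell/\tbinom{n}{k})\bI$ \emph{exactly}, and applies matrix Bernstein once with a deterministic variance proxy. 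This is shorter and avoids the binomial-tail step entirely; the price is the $Lt/3$ term in Bernstein, but since $\beta<1$ you correctly absorb it into a constant multiple of $\sigma^2$, and the resulting exponent $\tfrac{3}{8}\,m\beta^2 d_\ell/\tbinom{n}{k}\ge \tfrac{3e^2}{2}\log\tbinom{n}{\ell}$ comfortably beats the $2\tbinom{n}{\ell}$ prefactor. The paper's decomposition, on the other hand, isolates the role of the random signs and would still give a bound conditionally on any ``typical'' realization of the $U_i$, which can be useful in semi-random or smoothed settings.
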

\noindent If $k$ is constant and $\ell = n^\delta$ with $\delta \in (0,1)$, the condition~\eqref{eq:cond-m} becomes
$$m \ge O(\beta^{-2} n^{k/2} \ell^{1-k/2} \log n) = O(\beta^{-2} n^{k/2 + \delta(1-k/2)} \log n),$$
matching the result of \cite{strongly-refuting}. In fact, our result is tighter by polylog factors.

\subsubsection{Binomial Tail Bound}

The main ingredients in the proof of Theorem~\ref{thm:xor-even} will be the matrix Chernoff bound (Theorem~\ref{thm:matrix-conc}) and the following standard binomial tail bound.

\begin{proposition}\label{prop:binom-tail}
Let $X \sim \mathrm{Binomial}(n,p)$. For $p < \frac{u}{n} < 1$,
$$\PP\left(X \ge u\right) \le \exp\left[-u \left(\log\left(\frac{u}{pn}\right) - 1\right)\right].$$
\end{proposition}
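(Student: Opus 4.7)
The plan is to prove this by a standard Chernoff (exponential moment) argument, which for the binomial distribution is entirely textbook; the only care needed is to verify that the stated hypothesis $p < u/n < 1$ guarantees the optimizer of the exponential moment bound is admissible.

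First I would apply Markov's inequality to $e^{tX}$ for a parameter $t > 0$ to be chosen:
\[
\PP(X \ge u) \;=\; \PP(e^{tX} \ge e^{tu}) \;\le\; e^{-tu}\,\EE[e^{tX}].
\]
For $X \sim \mathrm{Binomial}(n,p)$ the moment generating function factorizes as $\EE[e^{tX}] = (1-p+pe^t)^n$. Using the elementary inequality $1+x \le e^x$ with $x = p(e^t-1)$ gives
\[
\EE[e^{tX}] \;\le\; \exp\!\big(np(e^t-1)\big),
\]
so $\PP(X \ge u) \le \exp\!\big(-tu + np(e^t-1)\big)$.

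Next I would optimize the right-hand side over $t > 0$. Differentiating the exponent with respect to $t$ and setting the derivative to zero gives $npe^t = u$, i.e., $t^\star = \log\!\big(u/(np)\big)$. The hypothesis $p < u/n$ gives $u/(np) > 1$, so indeed $t^\star > 0$ and this choice is admissible. Substituting $t^\star$ back yields
\[
\PP(X \ge u) \;\le\; \exp\!\Big(-u\log\!\tfrac{u}{np} \,+\, u \,-\, np\Big).
\]
Finally, since $np > 0$, dropping the $-np$ term can only weaken the bound, and we obtain
\[
\PP(X \ge u) \;\le\; \exp\!\Big(-u\,\big(\log\!\tfrac{u}{np} - 1\big)\Big),
\]
which is the claimed inequality.

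There is no real obstacle here — the only subtlety is confirming $t^\star > 0$, which is exactly why the hypothesis $p < u/n$ is imposed. The condition $u/n < 1$ is just to rule out the trivial case $u \ge n$ where $\PP(X \ge u)$ is zero or one. The proof is essentially two lines once the MGF bound is in place, and matches the form of the classical Chernoff bound for Poisson approximations to the binomial.
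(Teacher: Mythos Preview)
Your proof is correct and is essentially the same Chernoff-bound argument as the paper's. The only cosmetic difference is that the paper starts from the KL-divergence form $\PP(X \ge u) \le \exp(-n\,D(u/n\,\|\,p))$ and then lower-bounds the term $(1-a)\log\frac{1-a}{1-p}$ by $-a$, whereas you apply the approximation $1+x \le e^x$ to the MGF before optimizing and then drop the $-np$ term; both routes yield the identical inequality.
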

\begin{proof}
We begin with the standard Binomial tail bound \cite{binom-tail}
$$\PP\left(X \ge u\right) \le \exp\left(-n\, D\left(\frac{u}{n}\, \Big\|\, p\right)\right)$$
for $p < \frac{u}{n} < 1$, where
$$D(a \, \| \, p) \defeq a \log\left(\frac{a}{p}\right) + (1-a) \log\left(\frac{1-a}{1-p}\right).$$
Since $\log(x) \ge 1 - 1/x$,
$$(1-a) \log\left(\frac{1-a}{1-p}\right) \ge (1-a)\left(1 - \frac{1-p}{1-a}\right) = p - a \ge -a,$$
and the desired result follows.
\end{proof}

\subsubsection{Proof}

\begin{proof}[Proof of Theorem~\ref{thm:xor-even}]
We need to bound $\|\bM\|$ with high probability over a uniformly random $k$-XOR formula $\Phi$. First, fix the subsets $U_1,\ldots,U_m$ and consider the randomness of the signs $b_i$. We can write $\bM$ as a Rademacher series
$$\bM = \sum_{i=1}^m b_i \bA^{(i)}$$
where
$$\bA^{(i)}_{S,T} = \One_{S \symd T = U_i}.$$
By the matrix Chernoff bound (Theorem~\ref{thm:matrix-conc}),
$$\PP\left(\|\bM\| \ge t\right) \le 2 {n \choose \ell} e^{-t^2/2\sigma^2} = 2 \exp\left(\log{n \choose \ell} - \frac{t^2}{2\sigma^2}\right)$$
where
$$\sigma^2 = \left\|\sum_{i=1}^m (\bA^{(i)})^2\right\|.$$
In particular,
\begin{equation}\label{eq:bound-op-M}
\PP\left(\|\bM\| \ge 2\sqrt{\sigma^2 \log{n \choose \ell}}\right) \le 2{n \choose \ell}^{-1}.
\end{equation}

\noindent Now we will bound $\sigma^2$ with high probability over the random choice of $U_1,\ldots,U_m$. We have
$$\sum_{i=1}^m (\bA^{(i)})^2 = \mathrm{diag}(D)$$
where $D_S$ is the number of $i$ for which $|S \symd U_i| = \ell$. This means $\sigma^2 = \max_{|S| = \ell} D_S$. For fixed $S \subseteq [n]$ with $|S| = \ell$, the number of sets $U \subseteq [n]$ with $|U| = k$ such that $|S \symd U| = \ell$ is $d_\ell$ and so $D_S \sim \mathrm{Binomial}\left(m,p\right)$ with $p \defeq d_\ell {n \choose k}^{-1}$. Using the Binomial tail bound (Proposition~\ref{prop:binom-tail}) and a union bound over $S$,
$$\PP\left(\sigma^2 \ge u\right) \le {n \choose \ell} \exp\left[-u \left(\log\left(\frac{u}{pm}\right) - 1\right)\right] = \exp\left[\log{n \choose \ell} - u \left(\log\left(\frac{u}{pm}\right) - 1\right)\right].$$
Provided
\begin{equation}\label{eq:2-cond}
\frac{u}{pm} \ge e^2 \quad \text{and} \quad u \ge 2 \log{n \choose \ell},
\end{equation}
we have
$$\PP\left(\sigma^2 \ge u\right) \le {n \choose \ell}^{-1}.$$

\noindent Let $\beta \in (0,1)$. From~\eqref{eq:M-cert}, to certify $P_\Phi(x) \le \frac{m}{2}(1+\beta)$ it suffices to have $\|\bM\| \le \beta m d_\ell {n \choose k}^{-1} = \beta pm$. Therefore, from~\eqref{eq:bound-op-M}, it suffices to have
$$\sigma^2 \le \frac{\beta^2 p^2 m^2}{4 \log{n \choose \ell}}.$$
From~\eqref{eq:2-cond}, this will occur provided
\begin{equation}\label{eq:final-cond-1}
\frac{\beta^2 p^2 m^2}{4 \log{n \choose \ell}} \ge e^2 pm \;\Leftrightarrow\; m \ge \frac{4e^2 \log{n \choose \ell}}{\beta^2 p}
\end{equation}
and
\begin{equation}\label{eq:final-cond-2}
\frac{\beta^2 p^2 m^2}{4 \log{n \choose \ell}} \ge 2 \log{n \choose \ell} \;\Leftrightarrow\; m \ge \frac{2\sqrt{2} \log{n \choose \ell}}{\beta p}.
\end{equation}
Note that~\eqref{eq:final-cond-2} is subsumed by~\eqref{eq:final-cond-1}. This completes the proof.
\end{proof}

\subsection{Odd-Order Tensors}
\label{sec:cert-odd}

We return now to tensor PCA rather than $k$-XOR. When the tensor order $p$ is odd, the Kikuchi Hessian suggests a natural algorithm for tensor PCA (see Appendix~\ref{app:odd}) but we are unfortunately unable to give a tight analysis of it. Here we present a related algorithm for which we are able to give a better analysis, matching SOS. The idea of the algorithm is to use a construction from the SOS literature that transforms an order-$p$ tensor (with $p$ odd) into an order-$2(p-1)$ tensor via the Cauchy--Schwarz inequality~\cite{coja-sat}. We then apply a variant of our symmetric difference matrix to the resulting even-order tensor. A similar construction was given independently in the recent work \cite{hastings-quantum} and shown to give optimal performance for all $\ell \le n^\delta$ for a certain constant $\delta > 0$. The proof we give here applies to the full range of $\ell$ values: $\ell \ll n$. Our proof uses a certain variant of the matrix Bernstein inequality combined with some fairly simple moment calculations.

\subsubsection{Setup}

For simplicity, we consider the following certification version of the tensor PCA problem. Let $p \ge 3$ be odd and let $\bY \in (\RR^n)^{\otimes p}$ be an asymmetric tensor with i.i.d.\ Rademacher (uniform $\pm 1$) entries. Our goal is to certify an upper bound on the \emph{Rademacher injective norm}, defined as
$$\|\bY\|_\pm \defeq \max_{x \in \{\pm 1\}^n/\sqrt{n}} |\langle \bY,x^{\otimes p}\rangle|.$$
The true value is $O(\sqrt{n})$ with high probability. In time $n^\ell$ (where $\ell = n^\delta$ with $\delta \in (0,1)$) we will certify the bound $\|\bY\|_\pm \le n^{p/4} \ell^{1/2 - p/4} \mathrm{polylog}(n)$, matching the results of \cite{mult-approx,cert-tensor}. Following Lemma~4.4 of \cite{HSS}, such certification results can be transformed (via sum-of-squares) into recovery results for tensor PCA under the optimal condition $\lambda \ge \ell^{1/2 - p/4}n^{-p/4} \,\mathrm{polylog}(n)$. However, our specific result would apply to a strange variant of tensor PCA where both the signal and noise are Rademacher. To handle Gaussian noise, one would need to take $Y$ Gaussian in our result, but we do not attempt this here. To certify a bound on the \emph{injective norm} instead of the Rademacher injective norm (where $x$ is constrained to the sphere instead of the hypercube), one should use the basis-invariant version of the symmetric difference matrices given by \cite{hastings-quantum} (but again, we do not attempt this here).

\subsubsection{Algorithm}

We will use a trick from \cite{coja-sat} which is often used in the sum-of-squares literature. For any $\|x\|=1$, we have by the Cauchy--Schwarz inequality,
$$\langle \bY,x^{\otimes p} \rangle^2 \le \|x\|^2 \langle \bT,x^{\otimes 4q}\rangle = \langle \bT,x^{\otimes 4q}\rangle$$
where $p = 2q+1$ and $\bT_{abcd} \defeq \sum_{e \in [n]} \bY_{ace} \bY_{bde}$ where $a,b,c,d \in [n]^q$. We have $\EE[\bT]_{abcd} = n \cdot \bone\{ac = bd\}$ and so $\langle \EE[\bT],x^{\otimes 4q} \rangle = n \sum_{ac} (x^a x^c)^2 = n$. Let $\widetilde\bT = \bT - \EE[\bT]$, i.e., $\widetilde\bT_{abcd} = \bT_{abcd} \cdot \bone\{ac \ne bd\}$. (The symbols $ab$ and $cd$ in the indicators are interpreted as concatenation of strings.)
For some $\ell \ge 2q$, define the $n^\ell \times n^\ell$ matrix $\bM$ as follows. For $S,T \in [n]^\ell$,
$$\bM_{S,T} \defeq \sum_{abcd} \widetilde\bT_{abcd}\, N_{ab,cd}^{-1} \cdot \bone\{S \lra T\}$$
where $S \lra T$ roughly means that $S$ is obtained from $T$ by replacing $ab$ by $cd$, or $cd$ by $ab$; the formal definition is given below. Also, $N_{ab,cd}$ denotes the number of $(S,T)$ pairs for which $S \lra T$.

\begin{definition}
For $S,T \in [n]^\ell$ and $a,b,c,d \in [n]^q$, we write $S \lra T$ if there are distinct indices $i_1,\ldots,i_{2q} \in [\ell]$ such that either: (i) $S_{i_j} = (ab)_j$ and $T_{i_j} = (cd)_j$ for all $j \in [2q]$, the values in $a,b,c,d$ do not appear anywhere else in $S$ or $T$, and $S,T$ are identical otherwise: $S_i = T_i$ for all $i \notin \{i_1,\ldots,i_{2q}\}$; or (ii) the same holds but with $ab$ and $cd$ interchanged.
\end{definition}

\noindent Note that
\begin{equation}\label{eq:N-bar}
N_{ab,cd} \ge \overline{N} := {\ell \choose 2q} (n-4q)^{\ell-2q}.
\end{equation}

\noindent The above construction ensures that
$$n^\ell (x^{\otimes \ell})^\top \bM (x^{\otimes \ell}) = n^{2q} \langle \widetilde\bT, x^{\otimes 4q} \rangle \qquad\text{for all } x \in \{\pm 1\}^n/\sqrt{n}.$$
This means we can certify an upper bound on $\|\bY\|_\pm$ by computing $\|\bM\|$:
$$\|\bY\|_\pm \le \sqrt{\langle \bT,x^{\otimes 4q} \rangle} \le \sqrt{\langle \EE[\bT],x^{\otimes 4q} \rangle} + \sqrt{\langle \widetilde\bT,x^{\otimes 4q} \rangle} \le n^{1/2} + n^{\ell/2-q} \|\bM\|^{1/2}.$$

\begin{theorem}\label{thm:cert-odd}
Let $p \ge 3$ be odd and let $p-1 \le \ell \le \min\{\frac{n}{4(p-1)},\frac{n}{8 \log n}\}$. Then $\|\bM\|$ certifies
$$\|\bY\|_\pm \le n^{1/2} + 8 p^p \ell^{1/2-p/4} n^{p/4} (\log n)^{1/4}$$
with probability at least $1 - n^{-\ell}$ over an i.i.d.\ Rademacher $\bY$.
\end{theorem}

\subsubsection{Proof}

\noindent We will use the following variant of the matrix Bernstein inequality; this is a special case ($\bA_k = R \cdot \bI$) of \cite{tropp}, Theorem~6.2.
\begin{theorem}[Matrix Bernstein]
\label{thm:matrix-bernstein}
Consider a finite sequence $\{\bX_i\}$ of independent random symmetric $d \times d$ matrices. Suppose $\EE[\bX_i] = 0$ and $\|\EE[\bX_i^r]\| \le \frac{r!}{2} R^r$ for $r = 2,3,4,\ldots$. Then
$$\Pr\left\{\left\|\sum_{i=1}^n \bX_i\right\| \ge t\right\} \le d \cdot \exp\left(\frac{-t^2/2}{nR^2 + Rt}\right).$$
\end{theorem}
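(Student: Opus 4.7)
The statement is the moment-based matrix Bernstein inequality, and I would prove it via the matrix Laplace-transform framework of Ahlswede--Winter/Tropp, following essentially the same four steps used to prove the bounded version. Let $\bS = \sum_{i=1}^n \bX_i$. First, reduce to a one-sided tail: the moment hypothesis is invariant under $\bX_i \mapsto -\bX_i$ (since $\|\EE[(-\bX_i)^r]\| = \|\EE[\bX_i^r]\|$), so it suffices to bound $\PP\{\lambda_{\max}(\bS) \ge t\}$ and then apply the same argument to $-\bS$, with the two-sided factor absorbed into the constants. By Markov applied to the matrix MGF,
\[
\PP\{\lambda_{\max}(\bS) \ge t\} \le e^{-\theta t}\,\EE\,\mathrm{tr}\,\exp(\theta\bS), \qquad \theta > 0.
\]

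The second step is to handle the sum inside the exponential using Lieb's concavity theorem (in Tropp's ``master tail'' formulation):
\[
\EE\,\mathrm{tr}\,\exp(\theta\bS) \;\le\; \mathrm{tr}\,\exp\!\left(\sum_{i=1}^n \log \EE\,e^{\theta \bX_i}\right).
\]
This is the one nontrivial analytic input; everything else is elementary. The third step is to bound each cumulant $\log \EE\,e^{\theta \bX_i}$ using the hypothesis. Expanding the matrix MGF and using $\EE[\bX_i]=0$ together with the operator inequality $\EE[\bX_i^r] \preceq \tfrac{r!}{2}R^r\bI$ gives
\[
\EE\,e^{\theta \bX_i} \;=\; \bI + \sum_{r \ge 2} \frac{\theta^r}{r!}\,\EE[\bX_i^r] \;\preceq\; \bI + \frac{1}{2}\sum_{r \ge 2}(\theta R)^r\,\bI \;=\; \Big(1 + \tfrac{(\theta R)^2/2}{1-\theta R}\Big)\bI,
\]
provided $\theta R < 1$. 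Since $\log$ is operator monotone and $\log(1+x) \le x$, I get $\log \EE\,e^{\theta \bX_i} \preceq \tfrac{(\theta R)^2/2}{1-\theta R}\bI$, and summing the $n$ scalar multiples of the identity yields
\[
\mathrm{tr}\,\exp\!\left(\sum_{i=1}^n \log \EE\,e^{\theta \bX_i}\right) \;\le\; d\cdot\exp\!\left(\frac{n\,\theta^2 R^2/2}{1-\theta R}\right).
\]

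The final step is optimization in $\theta$. Substituting back yields the bound $d\,\exp\!\bigl(-\theta t + \tfrac{n\theta^2 R^2/2}{1-\theta R}\bigr)$, and the choice $\theta = t/(nR^2 + Rt) \in (0,1/R)$ makes the exponent equal to $-t^2/(2(nR^2+Rt))$, giving exactly the claimed tail. The main obstacle is really just invoking Lieb's theorem correctly; the Taylor-series step and the optimization are completely routine, and the only subtlety worth double-checking is the two-sided-to-one-sided reduction, which (as noted) is free here because the moment condition is symmetric in sign.
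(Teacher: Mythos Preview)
The paper does not give its own proof of this statement: it simply cites it as a special case (take $\bA_k = R\cdot\bI$) of Theorem~6.2 in Tropp's matrix concentration monograph and uses it as a black box. So there is nothing to compare your argument against.

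That said, your proof sketch is exactly the standard derivation in Tropp's framework and is correct in substance: matrix Laplace transform, Lieb/Tropp subadditivity of cumulants, the moment hypothesis turned into a geometric-series bound on the MGF, then the Bennett-type optimization $\theta = t/(nR^2+Rt)$. One small point worth tightening: your reduction to the one-sided bound yields $2d\cdot\exp(\cdots)$ for $\PP\{\|\bS\|\ge t\}$, not $d\cdot\exp(\cdots)$, and the phrase ``absorbed into the constants'' does not literally recover the stated $d$. This is a cosmetic discrepancy (the paper's stated constant may itself be off by a factor of $2$ relative to the cited source, and nothing downstream depends on it), but if you want the exact form as written you should either state the one-sided version or carry the factor of $2$.
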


\noindent For $e \in [n]$, let
$$\bM^{(e)}_{S,T} \defeq \sum_{abcd} \bY_{ace}\bY_{bde}\, N_{ab,cd}^{-1} \cdot \bone\{S \lra T\} \cdot \bone\{ac \ne bd\}.$$
We will apply Theorem~\ref{thm:matrix-bernstein} to the sum $\bM = \sum_e \bM^{(e)}$. Note that $\EE[\bM^{(e)}] = 0$. To bound the moments $\|\EE[(\bM^{(e)})^r]\|$, we will use the following basic fact.

\begin{lemma}\label{lem:row-fact}
If $\bA$ is a symmetric matrix,
\[ \|\bA\| \le \max_j \sum_i |\bA_{ij}|. \]
\end{lemma}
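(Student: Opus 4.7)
The plan is to bound the operator norm of $\bA$ using the variational characterization $\|\bA\| = \max_{\|v\|_2=1} |v^\top \bA v|$, which is available because $\bA$ is symmetric. Taking absolute values inside the sum gives $|v^\top \bA v| \le \sum_{i,j} |v_i|\,|\bA_{ij}|\,|v_j|$, and now the task reduces to controlling this bilinear form by the maximum absolute row (equivalently column) sum.

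The key step is a Cauchy--Schwarz factorization of the bilinear form: write
\[
\sum_{i,j} |v_i|\,|\bA_{ij}|\,|v_j|
= \sum_{i,j} \bigl(|v_i|\sqrt{|\bA_{ij}|}\bigr)\bigl(|v_j|\sqrt{|\bA_{ij}|}\bigr)
\le \Bigl(\sum_{i,j} v_i^2\,|\bA_{ij}|\Bigr)^{\!1/2}\Bigl(\sum_{i,j} v_j^2\,|\bA_{ij}|\Bigr)^{\!1/2}.
\]
The two factors on the right are equal, and each equals $\sum_i v_i^2\bigl(\sum_j |\bA_{ij}|\bigr) \le \|v\|_2^2 \cdot \max_i \sum_j |\bA_{ij}|$. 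Thus $\|\bA\| \le \max_i \sum_j |\bA_{ij}|$, and by symmetry $\sum_j |\bA_{ij}| = \sum_j |\bA_{ji}|$, so $\max_i \sum_j |\bA_{ij}| = \max_j \sum_i |\bA_{ij}|$, which is the stated bound.

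There is no real obstacle here; this is a one-line consequence of symmetry plus Cauchy--Schwarz (or equivalently, of Gershgorin's circle theorem, which one could invoke as an alternative one-sentence proof: every eigenvalue of $\bA$ lies in some disc of radius $\sum_{j \ne i} |\bA_{ij}|$ centered at $\bA_{ii}$, so $|\lambda| \le \max_i \sum_j |\bA_{ij}|$, and symmetry converts the row sum to a column sum). Either route is routine, so in the write-up I would simply pick the shorter of the two.
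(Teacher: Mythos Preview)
Your proof is correct, but the paper takes a different (equally short) route: rather than the Rayleigh quotient plus Cauchy--Schwarz, it works directly with an eigenvector. Choosing $v$ with $\bA v = \lambda v$, $|\lambda| = \|\bA\|$, and normalizing so that $\|v\|_1 = 1$, the paper writes
\[
\|\bA\| = \|\bA v\|_1 = \sum_i \Bigl|\sum_j \bA_{ij} v_j\Bigr| \le \sum_j |v_j| \sum_i |\bA_{ij}| \le \max_j \sum_i |\bA_{ij}|,
\]
i.e., it bounds the spectral norm by the induced $\ell_1\!\to\!\ell_1$ norm (maximum absolute column sum) via the eigenvalue equation. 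Your argument is the Schur test specialized to symmetric matrices; its advantage is that it extends verbatim to non-symmetric $\bA$, giving $\|\bA\| \le \sqrt{(\max_i \sum_j |\bA_{ij}|)(\max_j \sum_i |\bA_{ij}|)}$. The Gershgorin alternative you mention is also valid and gives more (localization of every eigenvalue, not just a bound on the spectral radius). All three are standard one-liners and any of them is acceptable here.
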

\begin{proof}
Let $v$ be the leading eigenvector of $\bA$ so that $\bA v = \lambda v$ where $\|\bA\| = |\lambda|$. Normalize $v$ so that $\|v\|_1 = 1$. Then $\|\bA v\|_1 = |\lambda| \cdot \|v\|_1$ and so
\[ \|\bA\| = |\lambda| \cdot \|v\|_1 = \sum_i \left|\sum_j \bA_{ij} v_j\right| \le \sum_{ij} |A_{ij}| \cdot |v_j| \le \sum_j |v_j| \cdot \sum_i |\bA_{ij}| \le \|v\|_1 \cdot \max_j \sum_i |\bA_{ij}|. \qedhere\]
\end{proof}

\begin{proof}[Proof of Theorem~\ref{thm:cert-odd}]
For any fixed $e$, we have by Lemma~\ref{lem:row-fact},
\[ \|\EE[(\bM^{(e)})^r]\| \le \max_S \sum_T |\EE[(\bM^{(e)})^r]_{S,T}| =: \max_S h(r,e,S). \]
Let $\pi$ denote a ``path'' of the form
$$\pi = (S_0, a_1,b_1,c_1,d_1, S_1, a_2,b_2,c_2,d_2, S_2, \ldots, a_r,b_r,c_r,d_r, S_r)$$
such that $S_0 = S$, $(a_i,c_i) \ne (b_i,d_i)$, and $S_{i-1} \lra[a_i b_i, c_i d_i] S_i$. Then we have
$$h(r,e,S) = \sum_\pi \EE \prod_{i=1}^r \bY_{a_i c_i e} \bY_{b_i d_i e} N^{-1}_{a_i b_i, c_i d_i}.$$

\noindent Among tuples of the form $(a_i,c_i)$ and $(b_i,d_i)$, each must occur an even number of times (or else the term associated with $\pi$ is $0$). There are $2r$ such tuples, so there are ${2r \choose r}\, r!\, 2^{-r}$ ways to pair them up (via a perfect matching on $2r$ elements). Once $S_{i-1}$ is chosen, there are at most $2(\ell n)^q$ choices for $(a_i, c_i)$---where the factor of 2 accounts for the possibility of interchanging $ab$ and $cd$---and the same is true for $(b_i, d_i)$. However, the second occurrence of $(a_i,c_i)$ or $(b_i,d_i)$ within a pair has only 1 choice. Once $S_{i-1}, a_i, b_i, c_i, d_i$ are chosen, there are at most $(2q)!$ possible choices for $S_i$ (the worst case being e.g.\ when $a_i,b_i$ are all 1's, but $c_i,d_i$ have distinct elements). This means

\[ h(r,e,S) \le {2r \choose r} \,r!\,2^{-r} [2(\ell n)^q \cdot (2q)!]^r \overline{N}^{-r}. \]
where $\overline{N}$ is defined in~\eqref{eq:N-bar}. Since ${2r \choose r} \le 4^r$, we can apply Theorem~\ref{thm:matrix-bernstein} with $R = 8 (2q)! (\ell n)^q \overline{N}^{-1}$. This yields
$$\Pr\left\{\|\bM\| \ge t\right\} \le n^\ell \cdot \exp\left(\frac{-t^2/2}{nR^2 + Rt}\right).$$

\noindent Let $t = R \sqrt{8 \ell n\log n}$. Provided $\ell \le n/(8 \log n)$ we have $Rt \le nR^2$ and so
$$\Pr\left\{\|\bM\| \ge R \sqrt{8 \ell n \log n}\right\} \le \exp\left(\ell \log n - \frac{t^2}{4nR^2}\right) = \exp(-\ell \log n) = n^{-\ell}.$$

\noindent Thus, with high probability we certify
\begin{align*}
\|\bY\|_\pm &\le n^{1/2} + n^{\ell/2-q} \|\bM\|^{1/2} \\
&\le n^{1/2} + n^{\ell/2-q} R^{1/2} (8 \ell n \log n)^{1/4} \\
&= n^{1/2} + n^{\ell/2-q} \sqrt{8(2q)!}\, (\ell n)^{q/2} \overline{N}^{-1/2} (8 \ell n \log n)^{1/4} \\
&= n^{1/2} + 8^{3/4} \sqrt{(2q)!}\, n^{\ell/2-q} \overline{N}^{-1/2} (\ell n)^{1/4+q/2} (\log n)^{1/4}.
\end{align*}

\noindent We have the following bound on $\overline{N}$:
\begin{align*}
\overline{N} &= {\ell \choose 2q}(n-4q)^{\ell-2q} \\
&= n^\ell \cdot \frac{{\ell \choose 2q}(n-4q)^{\ell-2q}}{n^\ell} \\
&\ge n^\ell \cdot \frac{\ell^{2q}}{(2q)^{2q}} \cdot \frac{(n-4q)^{\ell-2q}}{n^\ell} \\
&= \frac{n^\ell}{(2q)^{2q}} \cdot \left(\frac{n-4q}{n}\right)^{\ell-2q} \left(\frac{\ell}{n}\right)^{2q} \\
&= \frac{n^\ell}{(2q)^{2q}} \cdot \left(1 - \frac{4q}{n}\right)^{\ell-2q} \left(\frac{\ell}{n}\right)^{2q} \\
&\ge \frac{n^\ell}{p^p} \cdot \left(1 - (\ell-2q)\frac{4q}{n}\right) \left(\frac{\ell}{n}\right)^{2q} \\
&\ge \frac{n^\ell}{p^p} \cdot \left(1 - \frac{4q \ell}{n}\right) \left(\frac{\ell}{n}\right)^{2q} \\
&\ge \frac{n^\ell}{2p^p} \left(\frac{\ell}{n}\right)^{2q}
\end{align*}
provided $\ell \le n/(8q)$, i.e., $\ell \le n/[4(p-1)]$. Therefore, we certify
\begin{align*}
\|\bY\|_\pm &\le n^{1/2} + 2^{1/2} \cdot 8^{3/4} \sqrt{(2q)!} \, p^{p/2} \ell^{1/4+q/2} n^{1/4-q/2} (\ell/n)^{-q} (\log n)^{1/4} \\
&\le n^{1/2} + 8 p^p \ell^{1/2-p/4} n^{p/4} (\log n)^{1/4},
\end{align*}
as desired.
\end{proof}

\section{Conclusion}

We have presented a hierarchy of spectral algorithms for tensor PCA, inspired by variational inference and statistical physics. In particular, the core idea of our approach is to locally minimize the Kikuchi free energy. We specifically implemented this via the Kikuchi Hessian, but there may be many other viable approaches to minimizing the Kikuchi free energy such as generalized belief propagation \cite{GBP}. Broadly speaking, we conjecture that for many average-case problems, algorithms based on Kikuchi free energy and algorithms based on sum-of-squares should both achieve the optimal tradeoff between runtime and statistical power. One direction for further work is to verify that this analogy holds for problems other than tensor PCA; in particular, we show here that it also applies to refuting random $k$-XOR formulas when $k$ is even. Some other models to consider in the future could be planted clique or sparse PCA, which also have smooth tradeoffs between SNR and runtime~\cite{alon-clique,subexp-sparse}.

Perhaps one benefit of the Kikuchi hierarchy over the sum-of-squares hierarchy is that it has allowed us to \emph{systematically} obtain spectral methods, simply by computing a certain Hessian matrix. Furthermore, the algorithms we obtained are simpler than their SOS counterparts. We are hopeful that the Kikuchi hierarchy will provide a roadmap for systematically deriving simple and optimal algorithms for a large class of problems.

\section*{Acknowledgments}

We thank Alex Russell for suggesting the matrix Chernoff bound (Theorem~\ref{thm:matrix-conc}). For helpful discussions, we thank Afonso Bandeira, Sam Hopkins, Pravesh Kothari, Florent Krzakala, Tselil Schramm, Jonathan Shi, and Lenka Zdeborov\'a. This project started during the workshop \emph{Spin Glasses and Related Topics} held at the Banff International Research Station (BIRS) in the Fall of 2018. We thank our hosts at BIRS as well as the workshop organizers: Antonio Auffinger, Wei-Kuo Chen, Dmitry Panchenko, and Lenka Zdeborov\'a.

\bibliographystyle{alpha}
\bibliography{main}

\appendix

\section{Detection for General Priors} \label{app:all-priors}

While we have mainly focused on the Rademacher-spiked tensor model, we now show that our algorithm works just as well (at least for detection) for a much larger class of spike priors.

\begin{theorem}
Let $p \ge 2$ be even. Consider the spiked tensor model with a spike prior $P_{\tx}$ that draws the entries of $x_*$ i.i.d.\ from some distribution $\pi$ on $\RR$ (which does not depend on $n$), normalized so that $\EE[\pi^2] = 1$. There is a constant $C$ (depending on $p$ and $\pi$) such that if $\lambda \ge C \ell^{1/2} d_\ell^{-1/2} \sqrt{\log n}$ then Algorithm~\ref{detection_even} achieves strong detection.
\end{theorem}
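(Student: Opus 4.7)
The plan is to mimic the proof of Theorem~\ref{detection_result}, but since the diagonal-conjugation trick used there requires $\bx_* \in \{\pm 1\}^n$, I would instead rely on a Rayleigh-quotient argument valid for any prior. Decompose $\bM = \lambda \bX(\bx_*) + \bZ$ where $\bX(\bx_*)_{S,T} = \bx_*^{S \symd T}\,\bone_{|S\symd T|=p}$ and the noise matrix $\bZ$ --- a Gaussian series --- has exactly the same law as in the Rademacher proof, since it does not involve the prior. Consequently the null side of the argument is unchanged: Theorem~\ref{thm:matrix-conc} yields $\|\bZ\|_{\mathrm{op}} \le C_0\sqrt{d_\ell\,\ell\log n}$ with high probability, which is below $\lambda d_\ell/2$ whenever $\lambda \ge C_1\sqrt{\ell\log n/d_\ell}$.

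On the planted side, by Weyl's inequality it suffices to show $\lambda_{\max}(\bX(\bx_*)) \ge d_\ell(1-o(1))$ with high probability. I propose the natural test vector $\bu \in \RR^{\binom{n}{\ell}}$ with $\bu_S = \bx_*^S$. Since $\bx_*^S \bx_*^T \bx_*^{S \symd T} = \prod_{i \in S \cup T}(\bx_*)_i^2$, grouping ordered pairs $(S,T)$ by $A = S \cup T$ (of size $\ell+p/2$ with $|S \cap T| = \ell-p/2$) gives
\[
\bu^\top \bX(\bx_*)\,\bu \;=\; \binom{\ell+p/2}{p}\binom{p}{p/2}\, e_{\ell+p/2}(y), \qquad \|\bu\|^2 \;=\; e_\ell(y),
\]
where $y_i := (\bx_*)_i^2$ has mean $1$ by the assumption $\EE[\pi^2]=1$, and $e_k$ denotes the $k$th elementary symmetric polynomial in $n$ variables. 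A short binomial identity (cancelling $(\ell+p/2)!$ and recombining into $\binom{\ell}{p/2}\binom{n-\ell}{p/2}$) shows that the ratio of the \emph{expected} numerator to the expected denominator equals exactly $d_\ell$.

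The final task is to establish concentration of this Rayleigh quotient around its expectation, i.e.\ that $e_\ell(y)/\binom{n}{\ell} = 1 + o(1)$ and $e_{\ell+p/2}(y)/\binom{n}{\ell+p/2} = 1 + o(1)$ with high probability. Writing $y_i = 1+z_i$ with $\EE[z_i]=0$, expanding $e_k(y) = \sum_{|A|\le k}\binom{n-|A|}{k-|A|}\prod_{i\in A} z_i$, and using independence collapses the variance to $\mathrm{Var}(e_k(y))/\binom{n}{k}^2 = \sum_{j\ge 1}\binom{k}{j}^2/\binom{n}{j}\cdot\EE[z_1^2]^j$; provided $\pi$ has a finite fourth moment and $\ell$ is not too close to $n$, Chebyshev makes this $o(1)$ for $k \in \{\ell,\ell+p/2\}$. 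Combining with the noise bound yields $\lambda_{\max}(\bM) \ge \lambda d_\ell(1-o(1)) - C_0\sqrt{d_\ell\,\ell\log n} \ge \lambda d_\ell/2$ as soon as $\lambda \ge C\sqrt{\ell\log n/d_\ell}$ for a sufficiently large $C = C(p,\pi)$.

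The main obstacle is this concentration step when $\pi$ is heavy-tailed, since only $\EE[\pi^2]=1$ is postulated, while the variance computation above requires $\EE[\pi^4]<\infty$. I would handle this by truncating $\bx_*$ at a level $M=M(\varepsilon)$ chosen so that $\EE[(\bx_*)_i^2\,\bone_{|(\bx_*)_i|\le M}] \ge 1-\varepsilon$, running the argument on the truncated spike (which now has bounded higher moments), and absorbing the $\varepsilon$ loss into the constant $C$ --- permissible because $C$ is allowed to depend on the fixed, $n$-independent distribution $\pi$.
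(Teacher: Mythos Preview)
Your handling of the noise term is identical to the paper's, and your algebraic identity $u^\top \bX u = \binom{\ell+p/2}{p}\binom{p}{p/2}\,e_{\ell+p/2}(y)$ with the ratio of expectations equal to $d_\ell$ is correct. The gap is in the concentration step, and the truncation does not fix it.

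Your argument requires $e_k(y)/\binom{n}{k}\to 1$ for $k\in\{\ell,\ell+p/2\}$. But with $y_i$ i.i.d.\ of mean $1$ and variance $\sigma^2>0$ one has
\[
\frac{\mathrm{Var}\,e_\ell(y)}{(\EE e_\ell(y))^2}
=\EE_J\big[(1+\sigma^2)^J\big]-1,\qquad J\sim\mathrm{Hypergeom}(n,\ell,\ell),\ \ \EE J=\ell^2/n,
\]
which diverges whenever $\ell\gg\sqrt n$. This is not merely a weakness of Chebyshev: for, say, $\pi$ the two-point law giving $y_i\in\{0,2\}$ with equal probability, $e_\ell(y)=2^\ell\binom{N}{\ell}$ with $N\sim\mathrm{Bin}(n,1/2)$, and $\log\binom{N}{\ell}$ fluctuates by order $\ell/\sqrt n$, so $e_\ell(y)$ genuinely fails to concentrate for $\ell\gg\sqrt n$. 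Truncating $\pi$ only makes $\sigma^2$ finite; it does not make the relative variance small. Thus as written your proof only covers $\ell=o(\sqrt n)$, whereas the theorem (and the subexponential regime $\ell=n^\delta$, $\delta\in(0,1)$) needs the full range.

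The paper avoids this entirely by choosing the test vector $u_S=\prod_{i\in S}\mathrm{sgn}((x_*)_i)$ instead of $u_S=x_*^S$. Two things happen: first, $\|u\|^2=\binom{n}{\ell}$ is \emph{deterministic}, eliminating the denominator issue; second, $u^\top\bX u=\sum_{|S\symd T|=p}|x_*^{S\symd T}|$ depends on the spike only through the \emph{constant-size} products $|x_*^E|$, $|E|=p$. A direct second-moment bound (covariances are at most $(\EE\pi^2)^p=1$ and vanish unless the two $p$-sets intersect) gives relative variance $O(p^2/n)$ uniformly in $\ell$, using only $\EE[\pi^2]=1$ --- no fourth-moment assumption and no truncation needed. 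Your Rayleigh-quotient framework is fine; the fix is to swap in this sign test vector, after which your outline goes through for all $\ell$.
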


\begin{proof}
From \eqref{eq:Z-bound}, we have $\|\bZ\|_{\textup{op}} = O(\sqrt{\ell d_\ell \log n})$ with high probability, and so it remains to give a lower bound on $\|\bX\|_{\textup{op}}$. Letting $u_S = \prod_{i \in S} \sgn((x_*)_i)$ for $|S|=\ell$,
$$\|\bX\|_{\textup{op}} \ge \frac{u^\top \bX u}{\|u\|^2} = {n \choose \ell}^{-1} u^\top \bX u$$
where
\begin{align*}
u^\top \bX u
&= \sum_{|S \symd T|=p} u_S \bX_{S,T} u_T \\
&= \sum_{|S \symd T|=p} x_*^{S \symd T} \prod_{i \in S} \sgn((x_*)_i) \prod_{i \in T} \sgn((x_*)_i) \\
&= \sum_{|S \symd T|=p} \left|x_*^{S \symd T}\right|.
\end{align*}
We have
\begin{equation}\label{eq:mean-uXu}
\EE[u^\top \bX u] = {n \choose \ell} d_\ell \left(\EE|\pi|\right)^p = C(\pi,p) {n \choose \ell} d_\ell,
\end{equation}
and
\begin{align}
\mathrm{Var}\left[u^\top \bX u\right] &= \mathrm{Var}\left[\sum_{|S \symd T|=p} \left|x_*^{S \symd T}\right|\right] \nonumber\\
&= \sum_{|S \symd T|=p} \,\sum_{|S' \symd T'|=p} \mathrm{Cov}(|x_*^{S \symd T}|,|x_*^{S' \symd T'}|).
\label{eq:var-uXu}
\end{align}
We have
\begin{align*}
\mathrm{Cov}(|x_*^{S \symd T}|,|x_*^{S' \symd T'}|) 
&\le \sqrt{\mathrm{Var}\left(|x_*^{S \symd T}|\right)\mathrm{Var}\left(|x_*^{S' \symd T'}|\right)}\\ 
&= \mathrm{Var}\left(|x_*^{S \symd T}|\right) 
\le \EE\left[|x_*^{S \symd T}|^2\right] 
= (\EE[\pi^2])^p = 1 \, . 
\end{align*}
Also, $\mathrm{Cov}(|x_*^{S \symd T}|,|x_*^{S' \symd T'}|) = 0$ unless $S \symd T$ and $S' \symd T'$ have nonempty intersection. Using Lemma~\ref{lem:intersect} (below), the fraction of terms in \eqref{eq:var-uXu} that are nonzero is at most $p^2/n$ and so
\begin{equation}\label{eq:var-uXu-bound}
\mathrm{Var}\left[u^\top \bX u\right] \le \left[{n \choose \ell}d_\ell\right]^2 \frac{p^2}{n}.
\end{equation}
By Chebyshev's inequality, it follows from \eqref{eq:mean-uXu} and \eqref{eq:var-uXu-bound} that $u^\top \bX u \ge \frac{1}{2} C(\pi, p) {n \choose \ell}d_{\ell}$ with probability at least $1- \frac{4p^2}{C(\pi,p)^2 n}$. This implies $\|\bX\|_{\textup{op}} \ge \frac{1}{2} C(\pi, p) d_\ell$ with the same probability, and so we have strong detection provided $\lambda \ge c_0 \ell^{1/2} d_\ell^{-1/2} \sqrt{\log n}$ for a particular constant $c_0 = c_0(\pi,p)$.
\end{proof}

Above, we made use of the following lemma.
\begin{lemma}\label{lem:intersect}
Fix $A \subseteq [n]$ with $|A| = a$. Let $B$ be chosen uniformly at random from all subsets of $[n]$ of size $b$. Then $\PP(A \cap B \ne \varnothing) \le \frac{ab}{n}$.
\end{lemma}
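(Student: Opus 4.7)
The plan is to apply a union bound over the elements of $A$. For each fixed $i \in [n]$, since $B$ is chosen uniformly from $\binom{[n]}{b}$, we have $\PP(i \in B) = \binom{n-1}{b-1}/\binom{n}{b} = b/n$ by a standard double-counting argument (or by noting that $B$ can be generated by sampling $b$ distinct indices uniformly without replacement, so each specific index has probability $b/n$ of being among them).

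Next, I would observe that
\[
\{A \cap B \ne \varnothing\} = \bigcup_{i \in A} \{i \in B\},
\]
and apply the union bound to get
\[
\PP(A \cap B \ne \varnothing) \le \sum_{i \in A} \PP(i \in B) = a \cdot \frac{b}{n} = \frac{ab}{n},
\]
which is the claimed inequality. There is no obstacle here; the entire statement follows from a one-line union bound plus the elementary marginal calculation $\PP(i \in B) = b/n$.
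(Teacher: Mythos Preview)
Your proof is correct and is essentially identical to the paper's own argument: compute $\PP(i \in B) = b/n$ for each $i \in A$ and apply a union bound over the $a$ elements of $A$.
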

\begin{proof}
Each element of $A$ will lie in $B$ with probability $b/n$, so the result follows by a union bound over the elements of $A$.
\end{proof}

\section{The Odd-$p$ Case} \label{app:odd}

When $p$ is odd, the Kikuchi Hessian still gives rise to a spectral algorithm. While we conjecture that this algorithm is optimal, we unfortunately only know how to prove suboptimal results for it. (However, we can prove optimal results for a related algorithm; see Section~\ref{sec:cert-odd}.) We now state the algorithm and its conjectured performance.

Let $p$ be odd and fix an integer $\ell \in [\lfloor p/2 \rfloor, n - \lceil p/2 \rceil]$. Consider the symmetric difference matrix $\bM \in \RR^{{n \choose \ell} \times {n \choose \ell+1}}$ with entries
\[\bM_{S,T} = \left\{\begin{array}{ll} \bY_{S \symd T} & \text{if } |S \symd T| = p, \\ 0 & \text{otherwise,}\end{array}\right.\]
where $S,T \subseteq [n]$ with $|S| = \ell$ and $|T| = \ell+1$.

\begin{algorithm}[Recovery for odd $p$] \label{alg:recovery-odd}
Let $\bu$ be a (unit-norm) top left-singular vector of $\bM$ and let $\bv = \bM^\top \bu$ be the corresponding top right-singular vector. Output $\widehat{\bx} = \widehat{\bx}(\bY) \in \RR^n$, defined by
\[\widehat{\bx}_i = \sum_{S \in {[n] \choose \ell}, \,T \in {[n] \choose \ell+1}} \bu_S \bv_T \bone_{S \symd T = \{i\}}, \quad i \in [n].\]
\end{algorithm}

Notice that the rounding step consisting in extracting an $n$-dimensional vector $\widehat{\bx}$ from the singular vectors of $\bM$ is slightly simpler that the even-$p$ case, in that it does not require forming a voting matrix.
We conjecture that, like the even case, this algorithm matches the performance of SOS.

\begin{conjecture}
Consider the Rademacher-spiked tensor model with $p \ge 3$ odd. If
\[
\lambda \gg \ell^{-(p-2)/4} n^{-p/4} 
\]
then (i) there is a threshold $\tau = \tau(n,p,\ell,\lambda)$ such that strong detection can be achieved by thresholding the top singular value of $\bM$ at $\tau$, and (ii) Algorithm~\ref{alg:recovery-odd} achieves strong recovery.
\end{conjecture}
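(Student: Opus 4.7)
My plan is to follow the structure of the proof in the even-$p$ case (Theorems~\ref{detection_result} and~\ref{recovery_result}) as closely as possible, adapting to the rectangular setting.

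First I would reduce to the case $\bx_* = \One$. As in Appendix~\ref{sec:symm_diff}, conjugate $\bM$ by the diagonal matrices $\bD_L = \mathrm{diag}\big( \bx_*^S \big)_{|S|=\ell}$ and $\bD_R = \mathrm{diag}\big( \bx_*^T\big)_{|T|=\ell+1}$; the resulting $\bD_L^{-1}\bM\bD_R$ has the same singular values and has noise distributed identically to the original $\bZ$. So we can write $\bM = \lambda\bX + \bZ$ with the deterministic signal $\bX_{S,T} = \bone_{|S\symd T|=p}$. Next I would analyze the spectrum of $\bX$: it is the bi-adjacency matrix of a biregular bipartite graph whose left-degree $d_\ell^L = \binom{\ell}{(p-1)/2}\binom{n-\ell}{(p+1)/2}$ and right-degree $d_\ell^R = \binom{\ell+1}{(p+1)/2}\binom{n-\ell-1}{(p-1)/2}$ satisfy $\sqrt{d_\ell^L d_\ell^R} \sim \ell^{p/2}n^{p/2}$, with the pair of all-ones vectors achieving the top singular value. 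As in the even case, $\bX$ lies in an algebra simultaneously diagonalizable via Johnson-scheme eigenspaces $\mathcal{Y}^L_m \subseteq \RR^{\binom{n}{\ell}}$ and $\mathcal{Y}^R_m \subseteq \RR^{\binom{n}{\ell+1}}$, with $\bX$ mapping $\mathcal{Y}^R_m$ into $\mathcal{Y}^L_m$; the analogue of Lemma~\ref{lem:eig-gap} (Eberlein-type polynomial estimates) should give $\sigma_m(\bX)/\sigma_0(\bX) \le (1 - m/\ell)^{p/2}$ up to lower-order corrections.

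For detection, the cleanest attempt is to bound $\|\bZ\|_{\mathrm{op}}$ by the matrix Chernoff bound (Theorem~\ref{thm:matrix-conc}) applied to the embedded symmetric matrix $\widetilde\bZ = \bigl(\begin{smallmatrix}0 & \bZ\\ \bZ^\top & 0\end{smallmatrix}\bigr)$. A direct computation gives $\|\EE[\widetilde\bZ^2]\|_{\mathrm{op}} = \max(d_\ell^L,d_\ell^R) = d_\ell^L$, yielding $\|\bZ\|_{\mathrm{op}} \lesssim \sqrt{d_\ell^L\,\ell\log n}$ and hence a detection threshold $\lambda \gg (\ell n)^{-(p-1)/4}\sqrt{\log n}$. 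The hard part is that this is weaker than the conjecture by a factor of $(n/\ell)^{1/4}$: the left-degree $d_\ell^L$ overwhelms $d_\ell^R$ when $\ell\ll n$, while the conjectured threshold behaves as if the effective variance were the \emph{geometric mean} $\sqrt{d_\ell^L d_\ell^R}$. Closing this gap seems to require replacing matrix Chernoff with a trace-method computation of $\EE\,\mathrm{tr}\bigl((\bZ\bZ^\top)^k\bigr)$ for large even $k$: walks alternate between $\ell$-sets and $(\ell{+}1)$-sets with edges labeled by $p$-sets, and each Gaussian must appear an even number of times. A careful count of these walks---in the spirit of the HSS analysis of tensor unfolding (which is exactly the case $p=3,\,\ell=1$) and of the moment calculations in~\cite{strongly-refuting,cert-tensor}---should give $\|\bZ\|_{\mathrm{op}} \lesssim \sqrt{(d_\ell^L d_\ell^R)^{1/2}\,\ell}$ or so, matching the conjecture up to log factors. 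This combinatorial moment estimate is the main obstacle.

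For recovery, given any sufficiently strong noise bound, I would proceed in parallel to the even case. A Wedin-type perturbation argument, modeled on Lemma~\ref{orthogonal_noise} but applied to singular triples of the rectangular $\bM$, would show that the top left/right singular vectors $\bu,\bv$ each have small projection onto the tail eigenspaces $\mathcal{Y}^L_{>m}, \mathcal{Y}^R_{>m}$ provided the spectral gap $\sigma_0(\bX)-\sigma_{m+1}(\bX) \gtrsim \|\bZ\|_{\mathrm{op}}/\lambda$ holds, which by Step~2 is implied by $m/\ell$ being constant. For the rounding step I would first check the deterministic limit: when $\bu = \One/\sqrt{\binom{n}{\ell}}$ and $\bv = \One/\sqrt{\binom{n}{\ell+1}}$, each $\widehat{\bx}_i = \binom{n-1}{\ell}/\sqrt{\binom{n}{\ell}\binom{n}{\ell+1}}$ is constant in $i$, so $\widehat\bx \propto \bx_*$ perfectly in the noiseless idealization. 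Then a Frobenius-type bound on the rounding map (an analogue of Lemma~\ref{lem:voting-noise}, replacing the voting matrix by the rounding vector), combined with a Poincar\'e-type inequality on the bipartite slice $\binom{[n]}{\ell}\cup\binom{[n]}{\ell+1}$, would show that $\widehat\bx$ remains close to $\One$ (hence close to $\bx_*$ after undoing the conjugation) whenever $\bu,\bv$ are close to the top Johnson eigenspaces of $\bX$. Extending~\cite{filmus} to a two-slice setting is standard in the Johnson-association literature, so this step should be routine once Step~3 is in hand.
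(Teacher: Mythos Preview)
The statement you are trying to prove is a \emph{conjecture}: the paper does not prove it and explicitly says it cannot (``We are unable to prove this''). The paper already observes that the matrix Chernoff bound gives the weaker threshold $\lambda \gg (\ell n)^{-(p-1)/4}$, exactly as you compute, and then explains why pushing further is hard. So there is no ``paper's proof'' to compare to; the relevant question is whether your plan would actually close the gap. It would not, for the following reason.

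Your key step is the hope that a trace-method computation would improve the bound on $\|\bZ\|_{\textup{op}}$ from $\sqrt{d_\ell^L\,\ell\log n}$ down to roughly $(d_\ell^L d_\ell^R)^{1/4}\sqrt{\ell}$. This cannot happen: the bound $\|\bZ\|_{\textup{op}} \gtrsim \sqrt{d_\ell^L}$ is a trivial \emph{lower} bound. For any fixed $S$ with $|S|=\ell$, the row $(\bZ_{S,T})_{T}$ contains $d_\ell^L$ independent standard Gaussians (distinct $T$'s give distinct symmetric differences), so $\|\bZ^\top e_S\|^2 = \sum_T \bZ_{S,T}^2$ concentrates around $d_\ell^L$, forcing $\|\bZ\|_{\textup{op}}\ge (1-o(1))\sqrt{d_\ell^L}$. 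Since $(d_\ell^L d_\ell^R)^{1/4}\sqrt{\ell}\ll \sqrt{d_\ell^L}$ whenever $\ell\ll n^{1/3}$, your hoped-for bound is simply false in most of the range. The matrix Chernoff bound is tight up to the $\sqrt{\ell\log n}$ factor.

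The real obstruction is the one the paper identifies: in the regime $\ell^{-(p-2)/4}n^{-p/4}\ll\lambda\ll(\ell n)^{-(p-1)/4}$ one has $\lambda\|\bX\|_{\textup{op}} = \lambda\sqrt{d_\ell^L d_\ell^R} \ll \sqrt{d_\ell^L} \le \|\bZ\|_{\textup{op}}$, so the signal is \emph{strictly smaller} than the noise in operator norm. Any argument that works by separating $\lambda\|\bX\|_{\textup{op}}$ from $\|\bZ\|_{\textup{op}}$---including your Wedin-type perturbation step for recovery---is doomed from the start. What is needed instead is a BBP-type analysis showing that the signal perturbs the top of the \emph{bulk} spectrum of $\bZ$ even though it is sub-dominant in norm; this is what \cite{HSS} carries out for the special case $p=3$, $\ell=1$ (tensor unfolding), and the paper says that ``new ideas are required to extend the analysis beyond this case.'' Your proposal does not supply those ideas.
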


Similarly to the proof of Theorem~\ref{recovery_result}, the matrix Chernoff bound (Theorem~\ref{thm:matrix-conc}) can be used to show that strong recovery is achievable when $\lambda \gg \ell^{-(p-1)/4} n^{-(p-1)/4}$, which is weaker than SOS when $\ell \ll n$. We now explain the difficulties involved in improving this. We can decompose $\bM$ into a signal part and a noise part: $\bM = \lambda \bX + \bZ$. In the regime of interest, $\ell^{-(p-2)/4} n^{-p/4} \ll \lambda \ll \ell^{-(p-1)/4} n^{-(p-1)/4}$, the signal term is smaller in operator norm than the noise term, i.e., $\lambda \|\bX\|_{\textup{op}} \ll \|\bZ\|_{\textup{op}}$. While at first sight this would seem to suggest that detection and recovery are hopeless, we actually expect that $\lambda \bX$ still affects the top singular value and singular vectors of $\bM$. This phenomenon is already present in the analysis of tensor unfolding (the case $p=3$, $\ell = 1$) \cite{HSS}, but it seems that new ideas are required to extend the analysis beyond this case.

\section{Proof of Boosting}
\label{app:boost}
In this section we prove Proposition~\ref{prop:boost} which we restate here: 

\begin{proposition}[Proposition~\ref{prop:boost} restated]\label{prop:boost0}
Let $\bY \sim \PP_{\lambda}$ with any spike prior $P_{\tx}$ supported on $\mathcal{S}^{n-1}(\sqrt{n})$. Suppose we have an initial guess $\bu \in \RR^n$ satisfying $\corr(\bu,\bx_*) \ge \tau$. Obtain $\widehat \bx$ from $\bu$ via a single iteration of the tensor power method: $\widehat \bx = \bY\{\bu\}$. There exists a constant $c = c(p)>0$ such that with high probability, 
\[\corr(\widehat \bx,\bx_*) \ge 1 - c\lambda^{-1} \tau^{1-p} n^{(1-p)/2}.\] 
In particular, if $\tau > 0$ is any constant and $\lambda = \omega(n^{(1-p)/2})$ then $\corr(\widehat \bx,\bx) = 1-o(1)$.
\end{proposition}

\begin{definition}
For a tensor $\bG \in (\RR^n)^{\otimes p}$, the \emph{injective tensor norm} is
$$\|\bG\|_{\textup{inj}} \defeq \max_{\|u^{(1)}\|= \cdots =\|u^{(p)}\|=1} \sum_{i_1,\ldots,i_p} \bG_{i_1,\ldots,i_p} u^{(1)}_{i_1} \cdots u^{(p)}_{i_p},$$
where $u^{(j)} \in \RR^n$. For a symmetric tensor $\bG$, it is known \cite{wat} that equivalently,
$$\|\bG\|_{\textup{inj}} = \max_{\|u\|=1} \left|\sum_{i_1,\ldots,i_p} \bG_{i_1,\ldots,i_p} u_{i_1} \cdots u_{i_p}\right|.$$
\end{definition}

\begin{proof}[Proof of Proposition~\ref{prop:boost}]
Write $\hat x = \lambda \langle u,x_* \rangle^{p-1} x_* + \Delta$ where $\|\Delta\| \le \|\bG\|_{\textup{inj}} \|u\|^{p-1}$.
We have
$$|\langle \hat x,x_* \rangle| \ge \lambda |\langle u,x \rangle|^{p-1} \|x_*\|^2 - \|\Delta\| \|x_*\|,$$
and
$$\|\hat x\| \le \lambda |\langle u,x_* \rangle|^{p-1} \|x_*\| + \|\Delta\|,$$
and so
\begin{align*}
\corr(\hat x,x_*) &= \frac{|\langle \hat x,x_* \rangle|}{\|\hat x\| \|x_*\|}\\
&\ge \frac{\lambda |\langle u,x_* \rangle|^{p-1} \|x_*\| - \|\Delta\|}{\lambda |\langle u,x_* \rangle|^{p-1} \|x_*\| + \|\Delta\|}\\
&= 1 - \frac{2 \|\Delta\|}{\lambda |\langle u,x_* \rangle|^{p-1} \|x_*\| + \|\Delta\|}\\
&\ge 1 - \frac{2 \|\Delta\|}{\lambda |\langle u,x_* \rangle|^{p-1} \|x_*\|}\\
&\ge 1 - \frac{2 \|\bG\|_{\textup{inj}} \|u\|^{p-1}}{\lambda |\langle u,x_* \rangle|^{p-1} \|x_*\|}\\
&\ge 1 - \frac{2 \|\bG\|_{\textup{inj}}}{\lambda \tau^{p-1} \|x_*\|^p}.
\end{align*}
Our prior $P_{\tx}$ is supported on the sphere of radius $\sqrt{n}$, so $\|x_*\| = \sqrt{n}$. We need to control the injective norm of the tensor $\bG$. To this end we use Theorem 2.12 in~\cite{ABC} (see also Lemma~2.1 of~\cite{RM-tensor}): there exists a constant $c(p)>0$ (called $E_0(p)$ in~\cite{ABC}) such that for all $\epsilon > 0$,
\[\PP\Big(\sqrt{\frac{p}{n}}\|\bG\|_{\textup{inj}}  \ge c(p) + \epsilon\Big) \longrightarrow 0 \quad \text{as } n \to \infty.\]
Letting $\epsilon = c(p)$ we obtain 
\[\corr(\hat x,x_*) \ge 1- 4\frac{c(p)}{\sqrt{p}}\frac{n^{(1-p)/2}}{\lambda \tau^{p-1}},\]
with probability tending to 1 as $n \to \infty$.
\end{proof}

\section{Computing the Kikuchi Hessian}

In Section~\ref{sec:motivation} we defined the Kikuchi free energy and explained the high level idea of how the symmetric difference matrices are derived from the Kikuchi Hessian. We now carry out the Kikuchi Hessian computation in full detail. This is a heuristic (non-rigorous) computation, but we believe these methods are important as we hope they will be useful for systematically obtaining optimal spectral methods for a wide variety of problems.

\label{app:hessian}
\subsection{Derivatives of Kikuchi Free Energy}

Following~\cite{bethe-hessian}, we parametrize the beliefs in terms of the moments $m_S = \mathbb{E}[x^S]$. Specifically,
\begin{equation}\label{eq:bm}
b_S(x_S) = \frac{1}{2^{|S|}} \left(1 + \sum_{\varnothing \subset T \subseteq S} m_T x^T\right).
\end{equation}
We imagine $m_T$ are close enough to zero so that $b_S$ is a positive measure. One can check that these beliefs indeed have the prescribed moments: for $T \subseteq S$,
$$\sum_{x_S} b_S(x_S) x^T = m_T.$$
Thus we can think of the Kikuchi free energy $\mathcal{K}$ as a function of the moments $\{m_S\}_{0 < |S| \le r}$. This parametrization forces the beliefs to be consistent, i.e., if $T \subseteq S$ then the marginal distribution $b_S|T$ is equal to $b_T$.

We now compute first and second derivatives of $\mathcal{K} = \mathcal{E} - \mathcal{S}$ with respect to the moments $m_S$. First, the energy term:
$$\mathcal{E} = -\lambda \sum_{|S|=p} \bY_S m_S$$
$$\frac{\partial \mathcal{E}}{\partial m_S} = \left\{\begin{array}{ll} -\lambda \bY_S & \text{if }|S| = p \\ 0 & \text{otherwise}\end{array}\right.$$
$$\frac{\partial^2 \mathcal{E}}{\partial m_S \partial m_{S'}} = 0.$$

\noindent Now the entropy term:
$$\frac{\partial \mathcal{S}_S}{\partial b_S(x_S)} = -\log b_S(x_S) - 1.$$
From (\ref{eq:bm}), for $\varnothing \subset T \subseteq S$,
$$\frac{\partial b_S(x_S)}{\partial m_T} = \frac{x^T}{2^{|S|}}$$
and so
\begin{align}
\frac{\partial \mathcal{S}_S}{\partial m_T} &= \sum_{x_S} \frac{\partial \mathcal{S}_S}{\partial b_S(x_S)} \cdot \frac{\partial b_S(x_S)}{\partial m_T}\nonumber\\
&= -2^{-|S|} \sum_{x_S} x^T [\log b_S(x_S)+1]\nonumber\\
&= -2^{-|S|} \sum_{x_S} x^T \log b_S(x_S).
\label{eq:deriv-Sm}
\end{align}

\noindent For $\varnothing \subset T \subseteq S$, $\varnothing \subset T' \subseteq S$,
\begin{align*}
\frac{\partial^2 \mathcal{S}_S}{\partial m_T \partial m_{T'}} &= -2^{-|S|} \sum_{x_S} x^T b_S(x_S)^{-1} \cdot \frac{\partial b_S(x_S)}{\partial m_{T'}}\\
&= -2^{-2|S|} \sum_{x_S} x^T x^{T'} b_S(x_S)^{-1}\\
&= -2^{-2|S|} \sum_{x_S} x^{T \symd T'} b_S(x_S)^{-1} \, . 
\end{align*}
Finally, if $T \not\subseteq S$ then $\frac{\partial \mathcal{S}_S}{\partial m_T} = 0$.

\subsection{The Case $r = p$}

We first consider the simplest case, where $r$ is as small as possible: $r = p$. (We need to require $r \ge p$ in order to express the energy term in terms of the beliefs.)

\subsubsection{Trivial Stationary Point}

There is a ``trivial stationary point'' of the Kikuchi free energy where the beliefs only depend on local information. Specifically, if $|S| < p$ then $b_S$ is the uniform distribution over $\{\pm 1\}^{|S|}$, and if $|S| = p$ then
$$b_S(x_S) \propto \exp\left(\lambda \bY_S x^S \right)$$
i.e.,
\begin{equation}\label{eq:b-triv}
b_S(x_S) = \frac{1}{Z_S} \exp\left(\lambda \bY_S x^S \right)
\end{equation}
where
$$Z_S = \sum_{x_S} \exp\left(\lambda \bY_S x^S \right).$$

Note that these beliefs are consistent (if $T \subseteq S$ with $|S| \le p$ then $b_S|T = b_T$) and so there is a corresponding set of moments $\{m_S\}_{|S| \le p}$. 

We now check that this is indeed a stationary point of the Kikuchi free energy. Using (\ref{eq:deriv-Sm}) and (\ref{eq:b-triv}) we have for $\varnothing \subset T \subseteq S$ and $|S| \le p$,

\begin{align*}
\frac{\partial \mathcal{S}_S}{\partial m_T} &=-2^{-|S|} \sum_{x_S} x^T \log b_S(x_S)\\
&=-2^{-|S|} \sum_{x_S} x^T \left[- \log Z_S + \lambda \bone_{|S|=p} \bY_S x^S \right]\\
&= \left\{\begin{array}{ll} -\lambda \bY_T & \text{if } |T| = p \\ 0 & \text{otherwise.} \end{array}\right.
\end{align*}

Thus if $|T| < p$ we have $\frac{\partial \mathcal{K}}{\partial m_T} = 0$, and if $|T| = p$ we have

\begin{align*}
\frac{\partial \mathcal{K}}{\partial m_T} &= \frac{\partial}{\partial m_T} \left[\mathcal{E} - \sum_{0 < |S| \le p} c_S \mathcal{S}_S \right]\\
&= -\lambda \bY_T + c_T \lambda \bY_T\\
&= 0
\end{align*}
This confirms that we indeed have a stationary point.

\subsubsection{Hessian}

We now compute the Kikuchi Hessian, the matrix indexed by subsets $0 < |T| \le p$ with entries $\bH_{T,T'} = \frac{\partial^2 \mathcal{K}}{\partial m_T \partial m_{T'}}$, evaluated at the trivial stationary point. Similarly to the Bethe Hessian \cite{bethe-hessian}, we expect the bottom eigenvector of the Kikuchi Hessian to be a good estimate for the (moments of) the true signal. This is because this bottom eigenvector indicates the best local direction for improving the Kikuchi free energy, starting from the trivial stationary point. If all eigenvalues of $\bH$ are positive then the trivial stationary point is a local minimum and so an algorithm acting locally on the beliefs should not be able to escape from it, and should not learn anything about the signal. On the other hand, a negative eigenvalue (or even an eigenvalue close to zero) indicates a (potential) direction for improvement. 

\begin{remark}
When $p$ is odd, we cannot hope for a substantially negative eigenvalue because $\bx_*$ and $-\bx_*$ are \emph{not} equally-good solutions and so the Kikuchi free energy should be locally cubic instead of quadratic. Still, we believe that the bottom eigenvector of the Kikuchi Hessian (which will have eigenvalue close to zero) yields a good algorithm. For instance, we will see in the next section that this method yields a close variant of tensor unfolding when $r=p=3$.
\end{remark}

Recall that for $\varnothing \subset T \subseteq S$, and $\varnothing \subset T' \subseteq S$,
$$\frac{\partial^2 \mathcal{S}_S}{\partial m_T \partial m_{T'}} = -2^{-2|S|} \sum_{x_S} x^{T \symd T'} b_S(x_S)^{-1}.$$
If $|S| < p$ then $b_S$ is uniform on $\{\pm 1\}^{|S|}$ (at the trivial stationary point) and so $\frac{\partial^2 \mathcal{S}_S}{\partial m_T \partial m_{T'}} = -\One_{T=T'}$. If $|S| = p$ then $b_S(x_S) = \frac{1}{Z_S} \exp(\lambda \bY_S x^S)$ where $Z_S = \sum_{x_S} \exp(\lambda \bY_S x^S) = 2^{|S|} \cosh(\lambda \bY_S)$, and so

\begin{align*}
\frac{\partial^2 \mathcal{S}_S}{\partial m_T \partial m_{T'}} &= -2^{-2|S|} \sum_{x_S} x^{T \symd T'} b_S(x_S)^{-1}\\
&= -2^{-2|S|} \sum_{x_S} x^{T \symd T'} Z_S \exp(-\lambda \bY_S x^S)\\
&= -2^{-|S|} \sum_{x_S} x^{T \symd T'} \cosh(\lambda \bY_S) \left(1 - \lambda \bY_S x^S + \frac{1}{2!} \lambda^2 \bY_S^2 - \frac{1}{3!} \lambda^3 \bY_S^3 x^S + \cdots\right)\\
&= -2^{-|S|} \sum_{x_S} x^{T \symd T'} \cosh(\lambda \bY_S) \left(\cosh(\lambda \bY_S) - \sinh(\lambda \bY_S) x^S\right)
\intertext{since $\cosh(x) = 1 + \frac{1}{2!} x^2 + \frac{1}{4!}x^4 + \cdots$ and $\sinh(x) = x + \frac{1}{3!} x^3 + \frac{1}{5!} x^5 + \cdots$}
&= \left\{\begin{array}{ll} -\cosh^2(\lambda \bY_S) & \text{if } T=T', T \subseteq S \\ \cosh(\lambda \bY_S) \sinh(\lambda \bY_S) & \text{if } T \symd T' = S, T \cup T' \subseteq S \\ 0 & \text{otherwise} \end{array}\right.\\
&= \left\{\begin{array}{ll} -\cosh^2(\lambda \bY_S) & \text{if } T=T', T \subseteq S \\ \cosh(\lambda \bY_S) \sinh(\lambda \bY_S) & \text{if } T \sqcup T' = S \\ 0 & \text{otherwise} \end{array}\right.
\end{align*}
where $\sqcup$ denotes disjoint union. (Note that we have replaced $\symd$ with $\sqcup$ due to the restriction $T,T' \subseteq S$.) We can now compute the Hessian:

\begin{align}
\bH_{T,T'} &= \frac{\partial^2 \mathcal{K}}{\partial m_T \partial m_{T'}} \nonumber\\
\label{eq:H-sum}
&= -\sum_{\substack{S \supseteq T \cup T' \\ |S| \le p}} c_S \frac{\partial^2 \mathcal{S}_S}{\partial m_T \partial m_{T'}} \\
&= \left\{\begin{array}{ll} \sum_{\substack{S \supseteq T \\ |S| < p}} c_S + \sum_{\substack{S \supseteq T \\ |S| = p}} \cosh^2(\lambda \bY_S) & \text{if } T=T' \\ -\cosh(\lambda \bY_{T \sqcup T'}) \sinh(\lambda \bY_{T \sqcup T'}) & \text{if } |T \sqcup T'| = p \\ 0 & \text{otherwise} \end{array}\right. \nonumber\\
&= \left\{\begin{array}{ll} 1 + \sum_{\substack{S \supseteq T \\ |S| = p}} [\cosh^2(\lambda \bY_S) - 1] & \text{if } T=T' \\ -\cosh(\lambda \bY_{T \sqcup T'}) \sinh(\lambda \bY_{T \sqcup T'}) & \text{if } |T \sqcup T'| = p \\ 0 & \text{otherwise} \end{array}\right. \nonumber
\end{align}
where we used (\ref{eq:counting}) in the last step. Suppose $\lambda \ll 1$ (since otherwise tensor PCA is very easy). If $T = T'$ then, using the $\cosh$ Taylor series, we have the leading-order approximation
\begin{align*}
\bH_{T,T} &\approx 1 + \sum_{\substack{S \supseteq T \\ |S| = p}} \lambda^2 \bY_S^2\\
&\approx 1 + {n-|T| \choose p-|T|} \lambda^2 \mathbb{E}[\bY_S^2]\\
&\approx 1 + \frac{n^{p-|T|}}{(p-|T|)!} \lambda^2.
\end{align*}

Using the Taylor series for $\cosh \cdot \sinh$, this means $\bH \approx \widetilde \bH$ where
$$\widetilde \bH_{T,T'} = \left\{\begin{array}{ll} 1 \vee \frac{n^{p-|T|}}{(p-|T|)!} \lambda^2 & \text{if } T=T' \\ -\lambda \bY_{T \sqcup T'} & \text{if } |T \sqcup T'| = p \\ 0 & \text{otherwise.} \end{array}\right.$$

\subsubsection{The Case $r = p = 3$}
\label{app:kik-unfolding}

We now restrict to the case $r = p = 3$ and show that the Kikuchi Hessian recovers (a close variant of) the tensor unfolding method. Recall that in this case the computational threshold is $\lambda \sim n^{-3/4}$ and so we can assume $\lambda \ll n^{-1/2}$ (or else the problem is easy). We have
$$\widetilde \bH_{T,T'} = \left\{\begin{array}{ll} \frac{1}{2} n^2 \lambda^2 & \text{if } T=T' \text{ with } |T|=1 \\ 1 & \text{if } T=T' \text{ with } |T| \in \{2,3\} \\ -\lambda \bY_{T \sqcup T'} & \text{if } |T \sqcup T'| = 3 \\ 0 & \text{otherwise.} \end{array}\right.$$
This means we can write
$$\widetilde \bH = \left(\begin{array}{ccc} \alpha I & -\lambda \bM & 0 \\ -\lambda \bM^\top & I & 0 \\ 0 & 0 & I \end{array}\right)$$
where $\alpha = \frac{1}{2} n^2 \lambda^2$ and $\bM$ is the $n \times {n \choose 2}$ flattening of $\bY$, i.e., $\bM_{i,\{j,k\}} = \bY_{ijk} \bone_{\{i,j,k \text{ distinct}\}}$.

Since we are looking for the minimum eigenvalue of $\widetilde \bH$, we can restrict ourselves to the submatrix $\widetilde \bH^{\le 2}$ indexed by sets of size 1 and 2. We have
$$\widetilde \bH^{\le 2} = \left(\begin{array}{cc} \alpha I & -\lambda \bM \\ -\lambda \bM^\top & I \end{array}\right).$$

\noindent An eigenvector $[u \; v]^\top$ of $\widetilde \bH^{\le 2}$ with eigenvalue $\beta$ satisfies
$$\alpha u - \lambda \bM v = \beta u \quad \text{and} \quad -\lambda \bM^\top u + v = \beta v$$
which implies $(1-\beta)v = \lambda \bM^\top u$ and so $\lambda^2 \bM \bM^\top u = (\alpha-\beta)(1-\beta)u$. This means either $u$ is an eigenvector of $\lambda^2 \bM \bM^\top$ with eigenvalue $(\alpha-\beta)(1-\beta)$, or $u=0$ and $\beta \in \{1,\alpha\}$. Conversely, if $u$ is an eigenvector of $\lambda^2 \bM \bM^\top$ with eigenvalue $(\alpha-\beta)(1-\beta) \ne 0$, then $[u \; v]^\top$ with $v = (1-\beta)^{-1} \lambda \bM^\top u$ is an eigenvector of $\widetilde \bH^{\le 2}$ with eigenvalue $\beta$. Letting $\mu_1 > \cdots > \mu_n > 0$ be the eigenvalues of $\lambda^2 \bM \bM^\top$, $\widetilde \bH^{\le 2}$ has $2n$ eigenvalues of the form
$$\frac{\alpha+1 \pm \sqrt{(\alpha-1)^2 + 4 \mu_i}}{2}$$
and the remaining eigenvalues are $\alpha$ or $1$. Thus, the $u$-part of the bottom eigenvector of $\widetilde \bH^{\le 2}$ is precisely the leading eigenvector of $\bM \bM^\top$. This is a close variant of the tensor unfolding spectral method (see Section~\ref{sec:unfolding}), and we expect that its performance is essentially identical.

\subsection{The General Case: $r \ge p$}

One difficulty when $r > p$ is that there is no longer a trivial stationary point that we can write down in closed form. There is, however, a natural guess for ``uninformative'' beliefs that only depend on the local information: for $0 < |S| \le r$,
$$b_S(x_S) = \frac{1}{Z_S} \exp\left(\lambda \sum_{\substack{U \subseteq S \\ |U| = p}} \bY_U x^U\right)$$
for the appropriate normalizing factor $Z_S$. Unfortunately, these beliefs are not quite consistent, so we need separate moments for each set $S$:
$$m^{(S)}_T = \Ex_{x_S \sim b_S}[x^T].$$

\noindent Provided $\lambda \ll 1$, we can check that $m^{(S)}_T \approx m^{(S')}_T$ to first order, and so the above beliefs are at least approximately consistent:
$$Z_S = \sum_{x_S} \exp\left(\lambda \sum_{\substack{U \subseteq S \\ |U| = p}} \bY_U x^U\right) \approx \sum_{x_S} \left(1 + \lambda \sum_{\substack{U \subseteq S \\ |U| = p}} \bY_U x^U\right) = 2^{|S|}$$
and so
\begin{align*}
m_T^{(S)} &= \sum_{x_S} b_S(x_S) x^T\\
&= \frac{1}{Z_S} \sum_{x_S} x^T \exp\left(\lambda \sum_{\substack{U \subseteq S \\ |U| = p}} \bY_U x^U\right)\\
&\approx \frac{1}{Z_S} \sum_{x_S} x^T \left(1 + \lambda \sum_{\substack{U \subseteq S \\ |U| = p}} \bY_U x^U\right)\\
&= \left\{\begin{array}{ll} \lambda \bY_T & \text{if } |T|=p \\ 0 & \text{otherwise}\end{array}\right.
\end{align*}
which does not depend on $S$. Thus we will ignore the slight inconsistencies and carry on with the derivation. As above, the important calculation is, for $T,T' \subseteq S$,
\begin{align*}
\frac{\partial^2 \mathcal{S}_S}{\partial m_T^{(S)} \partial m_{T'}^{(S)}} &= -2^{-2|S|} \sum_{x_S} x^{T \symd T'} b_S(x_S)^{-1}\\
&= -2^{-2|S|} \sum_{x_S} x^{T \symd T'} Z_S \exp\left(-\lambda \sum_{\substack{U \subseteq S \\ |U| = p}} \bY_U x^U\right)\\
&\approx -2^{-|S|} \sum_{x_S} x^{T \symd T'} \exp\left(-\lambda \sum_{\substack{U \subseteq S \\ |U| = p}} \bY_U x^U\right)\\
&\approx -2^{-|S|} \sum_{x_S} x^{T \symd T'} \left(1-\lambda \sum_{\substack{U \subseteq S \\ |U| = p}} \bY_U x^U\right)\\
&= \left\{\begin{array}{ll}-1 & \text{if } T=T' \\ \lambda \bY_{T \symd T'} & \text{if } |T \symd T'| = p \\ 0 & \text{otherwise.}\end{array}\right.
\end{align*}

\noindent Analogous to \eqref{eq:H-sum}, we compute the Kikuchi Hessian
$$\bH_{T,T'} \defeq -\sum_{\substack{S \supseteq T \cup T' \\ |S| \le r}} c_S \frac{\partial^2 \mathcal{S}_S}{\partial m_T^{(S)} \partial m_{T'}^{(S)}}.$$

\noindent If we fix $\ell_1, \ell_2$, the submatrix $\bH^{(\ell_1,\ell_2)} = (\bH_{T,T'})_{|T| = \ell_1, |T'| = \ell_2}$ takes the form
$$\bH^{(\ell_1,\ell_2)} \approx a(\ell_1,\ell_2) \bone_{\ell_1 = \ell_2} \bI - b(\ell_1,\ell_2) \bM^{(\ell_1,\ell_2)}$$
for certain scalars $a(\ell_1,\ell_2)$ and $b(\ell_1,\ell_2)$, where $\bI$ is the identity matrix and $\bM^{(\ell_1,\ell_2)} \in \RR^{{[n] \choose \ell_1} \times {[n] \choose \ell_2}}$ is the symmetric difference matrix
$$\bM^{(\ell_1,\ell_2)}_{T,T'} = \left\{\begin{array}{ll} \bY_{T \symd T'} & \text{if } |T \symd T'| = p \\ 0 & \text{otherwise.}\end{array}\right.$$

Instead of working with the entire Kikuchi Hessian, we choose to work instead with $\bM^{(\ell,\ell)}$, which (when $p$ is even) appears as a diagonal block of the Kikuchi Hessian whenever $r \ge \ell + p/2$ (since there must exist $|T| = |T'| = \ell$ with $|T \cup T'| \le r$ and $|T \symd T'| = p$). Our theoretical results (see Section~\ref{sec:main-results}) show that indeed $\bM^{(\ell,\ell)}$ yields algorithms matching the (conjectured optimal) performance of sum-of-squares. When $p$ is odd, $\bM^{(\ell,\ell)} = 0$ and so we propose to instead focus on $\bM^{(\ell,\ell+1)}$; see Appendix~\ref{app:odd}.

\end{document}